\newtheorem{theorem}{Theorem}
\newtheorem{observation}{Observation}
\newtheorem{corollary}{Corollary}[theorem]
\newtheorem{lemma}{Lemma}
\newtheorem{definition}{Definition}
\newcommand{\tr}{\text{tr}}
\newcommand{\id}{\ensuremath{\mathbbm{1}}} 
\newcommand{\one}{\id}
\DeclarePairedDelimiter{\abs}{\lvert}{\rvert}
\DeclarePairedDelimiter{\norm}{\lVert}{\rVert}
\def\acts{\curvearrowright}
\newcommand{\bra}[1]{\langle #1|}
\newcommand{\ket}[1]{|#1\rangle}
\newcommand{\braket}[2]{\langle #1|#2\rangle}
\newcommand{\C}{\ensuremath{\mathbbm C}}
\newcommand{\bi}{\begin{itemize}}
\newcommand{\ei}{\end{itemize}}
\newcommand{\be}{\begin{equation}}
\newcommand{\ee}{\end{equation}}
\newcommand{\bea}{\begin{eqnarray}}
\newcommand{\eea}{\end{eqnarray}}
\newcommand{\kommentar}[1]{}
\newcommand{\identity}{\mathbbm{1}}
\newcommand*{\ditto}{---\texttt{"}---}
\newcommand{\forget}[1]{}
\begin{document}

\title{A link between symmetries of critical states and the structure of SLOCC classes in multipartite systems}

\author{Oskar Słowik$^1$}
\orcid{0000-0003-4138-3063}
\author{Martin Hebenstreit$^2$}
\orcid{0000-0002-5841-7082}
\author{Barbara Kraus$^2$}
\orcid{0000-0001-7246-6385}
\author{Adam Sawicki$^1$}
\orcid{0000-0003-4906-2459}

\affiliation{$^1$ Center for Theoretical Physics, Polish Academy of Sciences, 02-668 Warsaw, Poland\\%
$^2$ Institute for Theoretical Physics, University of Innsbruck, A–6020 Innsbruck, Austria}

\begin{abstract}
Central in entanglement theory is the characterization of local transformations among pure multipartite states. As a first step towards such a characterization, one needs to identify those states which can be transformed into each other via local operations with a non-vanishing probability. The classes obtained in this way are called SLOCC classes. They can be categorized into three disjoint types: the null-cone, the polystable states and strictly semistable states. Whereas the former two are well characterized, not much is known about strictly semistable states. We derive a criterion for the existence of the latter. In particular, we show that there exists a strictly semistable state if and only if there exist two polystable states whose orbits have different dimensions. We illustrate the usefulness of this criterion by applying it to tripartite states where one of the systems is a qubit. Moreover, we scrutinize all SLOCC classes of these systems and derive a complete characterization of the corresponding orbit types. We present representatives of strictly semistable classes and show to which polystable state they converge via local regular operators.
\end{abstract}

\maketitle

\section{Introduction}

Due to the relevance of multipartite entanglement in many areas of physics, entanglement theory has developed into an important research topic 
over the last decades \cite{NiCh09, HoHo09}. Multipartite entanglement does not only play an essential role in quantum communication \cite{HiBu99,Go00}, in quantum computation, e.g., measurement-based quantum computation \cite{RaBr01}, or quantum metrology (see e.g. \cite{GiLl11}), but it does also play a role in condensed matter physics. There, a connection between entanglement present in the wave function and quantum phase transitions can be drawn (for a review see e.g. \cite{AmFa08} and references therein). Moreover, tensor network methods such as matrix product states \cite{FaNa92,PeVe07,Vi03} or projected entangled pair states \cite{VeCi04,VeWo06} embody powerful numerical tools in condensed matter physics (see e.g. \cite{Or14} and references therein).

Entanglement theory is a resource theory, which becomes apparent in the scenario of spatially separated parties sharing a joint quantum state, an often considered scenario in quantum information theory. There, it is natural to restrict the allowed operations to Local quantum Operations assisted by Classical Communication (LOCC). Then, entanglement arises as a resource allowing to achieve certain tasks that are not possible by LOCC alone. Moreover, it is apparent that any state $\ket{\psi}$, which can be deterministically converted to another state $\ket{\phi}$ via LOCC, must be at least as entangled as $\ket{\phi}$ according to any reasonable measure of entanglement. The reason for that is that any task starting out with the state $\ket{\phi}$ could as well be implemented starting out with the state $\ket{\psi}$ instead  (transforming it to $\ket{\phi}$ as a first step). The partial order that LOCC imposes on the state space must thus be taken into account by any entanglement measure, which is also known as the LOCC-monotonicity condition for entanglement measures \cite{HoHo09}.

Studying multipartite entanglement is hence intimately connected to understanding which state transformations are possible via LOCC. A first step in studying whether a state transformation is possible via LOCC deterministically, is studying whether the state transformation is possible via stochastic LOCC (SLOCC), i.e., with a non-vanishing probability of success. Clearly, if a state transformation is not possible via SLOCC, it is not possible via LOCC. It has been shown that if one considers fully entangled states, i.e., states $\ket{\psi}$ for which all single-particle reduced density matrices $\rho_i = \tr_{1,\ldots, i-1, i+1, \ldots, N} \ket{\psi}\bra{\psi}$ have full rank (here, $N$ denotes the number of parties), then SLOCC is an equivalence relation \cite{DuVi00}. Mathematically stated, two $N$-partite states $\ket{\psi}$ and $\ket{\phi}$ are SLOCC-equivalent if and only if there exist regular operators $A_1, \ldots, A_N$ such that $\ket{\psi} = A_1 \otimes \ldots \otimes A_N \ket{\phi}$ \cite{DuVi00}. The state space thus partitions into so-called SLOCC classes. Another often considered class of operations are Local Unitaries (LUs) (for pure states see e.g. \cite{Kr10PRL,Kr10PRA}). Such operations are reversible and, hence, do not alter the entanglement of a given state.

For bipartite pure states, necessary and sufficient conditions for when LOCC transformations are possible have been derived \cite{Ni99}. Moreover, for such states, the concept of SLOCC classes is trivial in the sense that there exists exactly one SLOCC class containing all fully entangled states, i.e., states that have full Schmidt rank. 
All this is no longer true in the multipartite scenario. Deciding which state transformations are possible via LOCC becomes a notoriously difficult problem due to the intricate structure of multipartite LOCC protocols \cite{ChLe14,Ch11,ChHs17}. Moreover, considering the simplest multipartite quantum system, three qubits, there exist two distinct SLOCC classes (considering fully entangled states), the well-known GHZ- and W-class \cite{DuVi00}. Furthermore, already for systems as simple as four qubits, the Hilbert space partitions into an infinite number of SLOCC classes \cite{VeDe02}. Besides small system sizes, SLOCC classes have been studied for special types of states, such as permutation-symmetric states \cite{MaKr10,BaKr09,MiRo13}. The notion of SL-invariant polynomials has been used to distinguish certain SLOCC classes (see \cite{LuTh03,ViEl11,ElSi14} and references therein). These are polynomials $f$ in the coefficients of a state $\ket{\psi}$, which fulfill that $f(g_1 \otimes \ldots \otimes g_N \ket{\psi}) = f(\ket{\psi})$ for any $g_i \in SL(d_i,\mathbb{C})$; a complete set of such polynomials can be constructed \footnote{For first complete sets of SL-invariant polynomials see \cite{DuVi00} and \cite{LuTh03}.}\cite{GoWa13,OsSi05}.

Recently, a notable step in the characterization of state transformations has been taken for homogeneous systems (systems for which all local dimensions equal). In particular, it has been shown that in such systems, generically, no LOCC transformations are possible \cite{GoKr17,SaWa18}. Important in the proof was the fact that within such systems, states for which all single-particle reduced density matrices $\rho_i$ are proportional to the identity, so-called critical states, exist. A simple criterion that allows to decide whether a critical state exists for given local dimensions $d_1, \ldots, d_N$ has been derived in \cite{BrLe19}. 

It has been realized that there is an intrinsic connection between the existence of a critical state within an SLOCC class and the geometry of that SLOCC class. Due to the Kempf-Ness theorem, an SLOCC class contains a critical state if and only if it is closed (wrt. standard complex topology) \cite{KeNe79}. We call (states within) SLOCC classes containing a critical state \emph{polystable} \footnote{Such states are also known as balanced states \cite{OsSi10} and critical states are also known as stochastic states \cite{VeDe02}.}. It moreover holds that, within an SLOCC class, critical states are unique (up to LUs). Moreover, within polystable classes, the norm $||g_1 \otimes \ldots \otimes g_N \ket{\psi}||$, where $g_i \in SL(d_i,\mathbb{C})$, does attain its minimum for $g_1 \otimes \ldots \otimes g_N \ket{\psi}$ being a critical state \cite{KeNe79}. Polystable SLOCC classes have been a subject of interest in previous works \cite{Kl02,OsSi10,GoKr17,SaWa18,GoWa11,GoWa13,SaSc18}. It has been shown that such SLOCC classes can be distinguished by ratios of SL-invariant polynomials \cite{GoWa13}. SLOCC classes (states) for which all nonconstant homogeneous SL-invariant polynomials vanish, form the so-called \emph{null-cone}, such classes have been studied e.g. in \cite{NeMu84,WaDo13,SaOs14,MaSa15,MaSa18,JoEr14}. These SLOCC classes contain the $0$-vector in their closure and, obviously, cannot be distinguished with SL-invariant polynomials. SLOCC classes (states) that are not in the null-cone are called \emph{semistable}. Thus, the polystable classes are those semistable classes that do contain a critical state. We call (states within) the remaining classes strictly semistable. These SLOCC classes do not contain a critical state, however, they do contain a critical state in their closure \cite{KeNe79}.

Numerical tools which allow distinguishing between certain of the aforementioned types of SLOCC classes have been developed. In \cite{VeDe03}, an algorithm which transforms a state $\ket{\psi}$ into its so-called normal form has been presented. In case $\ket{\psi}$ is in the null-cone, this normal form is $0$. Otherwise, it coincides with the critical state within the SLOCC class of $\ket{\psi}$ (closure of the SLOCC class of $\ket{\psi}$), in case $\ket{\psi}$ is polystable (strictly semistable), respectively. In \cite{WaDo13, SaOs14}, it has been shown that a numerical algorithm, which follows the gradient of the sum of the linear entropies of the eigenvalues of $\rho_i$, also allows to determine the critical state within a polystable SLOCC class, or the critical state within the closure of a strictly semistable SLOCC class. Moreover, the latter method also allows to distinguish certain SLOCC classes within the null-cone \cite{NeMu84,SaOs14, SaOs12,MaSa15,MaSa18,SaMa18}.

Here, we study the existence of semistable states, i.e., states that are neither in the null-cone, nor SLOCC-equivalent to a critical state using tools from Geometric Invariant Theory (GIT). One of the main results of this article is a criterion for the existence of such states in terms of the dimensions of polystable SLOCC classes. Moreover, we study three-partite systems with local dimensions $2$, $m$, and $n$ in more details. SLOCC classes in such systems have been characterized in \cite{ChMi10} (see also  \cite{HeGa18}) with the help of the theory of matrix pencils \cite{Kr90}. We apply the derived criterion for the existence of strictly semistable states in the context of such systems. Moreover, we not only answer the question of whether strictly semistable states exist, but also derive a full characterization of the orbit types of all SLOCC classes within such systems.

The outline of the remainder of the paper is the following. First we introduce our notation and some preliminary results on SLOCC classes and the normal form of multipartite states \cite{VeDe03}. We will also recall how SLOCC classes of polystable states can be distinguished via SL-invariant polynomials \cite{GoWa13}. In Sec. \ref{Sec_main} we will focus on semistable states. Using tools from GIT, we will show that strictly semistable states exist if and only if there exist two polystable states whose orbits have different dimensions. In Sec. \ref{Sec_2nm} we illustrate the usefulness of this criterion by characterizing all SLOCC classes of tripartite states in $2\times m \times n$. Moreover, we will identify the representatives of strictly semistable states and will show to which polystable states they converge via SLOCC.

\section{Notation and Preliminaries}
\label{sec:not}
In this section we first introduce our notation and then review some important concepts utilized in the characterization of SLOCC classes.

We consider the Hilbert space, ${\cal H}=\C^{d_1}\otimes \C^{d_2}\otimes \ldots \otimes \C^{d_N}$ with arbitrary local dimensions $d_i\geq 2$. We are mainly interested in pure normalized states in $\mathcal{H}$. Hence, we consider the complex projective space $\mathbb{P}(\mathcal{H})$, which is obtained from $\mathcal{H}$ by identifying any two vectors which are proportional to each other via a non-zero complex number. For any vector $\ket{\phi} \in \mathcal{H}$ the corresponding quantum state is denoted by $[\phi] \in \mathbb{P}(\mathcal{H})$. Throughout the paper we will consider actions of appropriate Lie groups on both $\mathcal{H}$ and $\mathbb{P}(\mathcal{H}$). The action of a Lie group $H$ on $\cal{H}$ will be denoted by $H \acts \cal{H}$ and the action on $\mathbb{P}(\cal{H})$ by $H \acts \mathbb{P}(\cal{H})$, respectively. Given a Lie group $H$, the $H$--orbit of a vector $\ket{\Psi}\in {\cal H}$ is defined as
\be
H\ket{\Psi}=\left\{h|\Psi\rangle \Big| h\in H\right\}.
\ee

The stabilizer of a vector $\ket{\Psi}$ with respect to the group $H$ is defined as
\be
H_{\ket{\Psi}}=\left\{h\in H \Big| h\ket{\Psi} = \ket{\Psi}\right\}.
\ee
Any orbit $H\ket{\Psi}$ is an embedded submanifold of ${\cal H}$ which  is isomorphic to the left coset of $H_{\ket\Psi}$ in $H$, namely $H\ket{\Psi} \simeq H/H_{\ket{\Psi}}$. Hence, the larger the dimension of the stabilizer of the state, which is equal to the number of linearly independent generators of the stabilizing Lie subgroup, the smaller the dimension of the orbit. More precisely, the dimensions satisfy the formula
\be
\label{eq:orbitstabilizerdim1}
\mathrm{dim}\, H\ket{\Psi} + \mathrm{dim} \, H_{\ket{\Psi}} = \mathrm{dim} \, H .
\ee
The action of a Lie group $H$ on $\mathbb{P}(\mathcal{H})$ is induced by the action $H \acts \mathcal{H}$ via
\begin{gather}
h[\Psi] \coloneqq [h \ket{\Psi}],\,\,h\in H.
\end{gather}
Hence, the orbit through any state $[\Psi]$ is given by
\begin{gather}
H[\Psi] = [H \ket{\Psi}],
\end{gather}
i.e. the $H$--orbit of any $[\Psi]$ in $\mathbb{P}(\mathcal{H})$ is the projection of the orbit $H \ket{\Psi} \subset \mathcal{H}$. Thus, in contrast to $H[\Psi]$, the $H$--orbit of $\ket{\Psi}$, $H\ket{\Psi}$, might contain unnormalized states. The stabilizer of a state $[\Psi]$ with respect to the group $H$ is defined as
\be
H_{[\Psi]}=\left\{h\in H \Big| h\ket{\Psi} \propto \ket{\Psi}\right\}.
\ee
We also have that $H[\Psi] \simeq H/H_{[\Psi]}$ and
\begin{gather}
\label{eq:orbitstabilizerdim}
\mathrm{dim}\, H[\Psi] + \mathrm{dim} \, H_{[\Psi]} = \mathrm{dim} \, H.
\end{gather}

The stabilizer $H_{[\Psi]}$ is the set of symmetries of $[\Psi]$ in $H$. Note that $H_\Psi \subset H_{[\Psi]}$ for any $\Psi$ and any Lie group $H$. The two groups we will consider are
\begin{enumerate}
\item Special local unitary operators, i.e. elements of $K:=SU(d_1)\times SU(d_2) \times \ldots \times SU(d_N)$
\item  SLOCC operators, i.e. elements of $G:=SL(d_1)\times SL(d_2)\times \ldots \times SL(d_N)$.
\end{enumerate}
Both groups act in a natural way on $\mathcal{H}$ and $\mathbb{P}(\mathcal{H})$, respectively. The action of local unitary operators does not alter entanglement and it defines an equivalence relation. Two states are in the same LU-equivalence class if they belong to the same $K$--orbit in $\mathbb{P}(\mathcal{H})$. Hence, instead of considering all states in $\mathbb{P}(\mathcal{H})$, we will consider representatives of LU--equivalence classes. Similarly the $G$--orbit of a state, $[\Psi]$, is the SLOCC--class which contains $[\Psi]$ as an element. Two states, $[\Psi],[\Phi]$ are in the same SLOCC--class, i.e. they are SLOCC--equivalent, if there exist  $g_i \in SL(d_i)$ such that $[\ket{\Psi}]=[\bigotimes_{i=1}^n g_i \ket{\Phi}]$.

There is a qualitative difference between the actions of $K$ and $G$. The group $K$ is compact and all $K$--orbits are closed. On the other hand the group $G$ is not compact and $G$--orbits may not be closed. The closure of an orbit $\overline{G \ket{\Psi}}\subset\mathcal{H}$ in the standard complex topology in $\mathcal{H}$ can obviously be obtained by adding to $G \ket{\Psi}$ the limits of all sequences of vectors in $G \ket{\Psi}$. It is well known (see e.g. \cite{Wa17}), that the closure in standard topology coincides with the closure in Zariski topology. We refer the reader to \cite{Wa17} for more details on the Zariski topology and the relations between those two topologies. Let us just mention here that a set $X\in \C^m$ is Zariski closed if there exists a set of polynomials, $S\subset \C[x_1,\ldots,x_m]$ such that $X=\C^m (S)=\{x\in \C^m \mid s(x)=0 \forall s\in S\}$, where $\C[x_1,\ldots,x_m]$ denotes the polynomial ring. A Zariski closed set is called an algebraic variety. As mentioned above, the closures of $G \ket{\Psi}$ in $\mathcal{H}$ coincide in both topologies. It will become clear later on, which orbits $G[\Psi] \subset \mathbb{P}(\mathcal{H})$ are considered to be closed.

States which play a particularly important role in the classification of multipartite states are so--called critical states. A vector $\ket\Psi\in {\cal H}$ is called critical if all single-particle reduced states are completely mixed, i.e. $\rho_i\propto \one$ for all $i$. In the following we will denote the set of critical vectors by $Crit\subset\mathcal{H}$ and the set of critical states by $[Crit]\subset\mathbb{P}(\cal{H})$.

\subsection{Normal form of multipartite states}
\label{sub:normal}
As mentioned above, $G$-orbits in $\mathcal{H}$ may not be closed. In fact, the Kempf-Ness theorem \cite{KeNe79} states that $G\ket\Psi\subset\mathcal{H}$ is closed if and only if it contains a unique (up to LUs) critical vector $\ket{\Phi}\in Crit \cup \{0\}$ \footnote{ Note that a particular example of a closed $G$-orbit is the $G$-orbit containing the $0$-vector.}. Moreover, $\|g\ket{\Phi}\|\geq \|\ket\Phi\|$ for any $g\in G$.

%Recall that the boundary $\partial A$ of a subset $A \subset X$ of topological space $X$ can be defined as $\partial A = \overline{A} \setminus A^{\circ}$, where $\overline{A}$ is a closure and $A^{\circ}$ is an interior of $A$. In case of $A \subset \mathcal{H}$, a point $x$ belongs to $\partial A$ if and only if there is a sequence of elements in $A$ which converges to $x$ and a sequence of elements in $X \setminus A$ which also converges to $x$ (or intuitively: $x$ can be approximated by elements from $A$ as well as from its complement $X \setminus A$). In case of $A$ being a $G$-orbit $G \ket{\Psi}$, one can show that $A$ is open in $\overline{A}$ and $\partial A=\overline{A} \setminus A$ \footnote{See e.g. \cite{Br10} p. 4, Proposition 1.11.}.

Let us note that $G$-orbit $G \ket{\Psi}$ is open in its closure $\overline{G \ket{\Psi}}$, therefore the boundary of $G \ket{\Psi}$ can be defined as $\overline{G \ket{\Psi}} \setminus G \ket{\Psi}$ \footnote{See e.g. \cite{Br10} p. 4, Proposition 1.11.}.

The following lemma describes the structure of the boundary of an orbit and follows from well-known facts - about the dimensions of orbits in the boundary of orbit \footnote{See e.g. \cite{Br10} p. 4, Proposition 1.11.} and basic facts in GIT theory \footnote{See e.g. \cite{Mu03} p. 161, Theorem 5.3. and p. 162, Corollary 5.5.}.

\begin{lemma}
\label{boundary}
For each $\ket\Psi \in \mathcal{H}$, the boundary of its orbit, $\overline{G \ket\Psi} \setminus G \ket\Psi$, is a union of $G$--orbits of strictly smaller dimension. Each $G$-orbit closure contains a unique closed $G$--orbit. This closed orbit is the only orbit in the closure which has minimal dimension, i.e. the closed orbit has the smallest dimension of all orbits in the closure.
\end{lemma}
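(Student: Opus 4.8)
The plan is to split the statement into three parts and treat them in order: (i) the boundary $\overline{G\ket{\Psi}}\setminus G\ket{\Psi}$ is a union of $G$--orbits of strictly smaller dimension; (ii) there is a closed $G$--orbit inside $\overline{G\ket{\Psi}}$, and it has minimal dimension; (iii) the closed orbit is unique, and it is the only orbit of minimal dimension. Two facts recalled above are used throughout: the complex-topology closure of $G\ket{\Psi}$ coincides with its Zariski closure (so that algebraic-geometric statements apply), and $G\ket{\Psi}$ is open in $\overline{G\ket{\Psi}}$.

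For (i): the group $G=SL(d_1)\times\cdots\times SL(d_N)$ is connected, hence irreducible as a variety, so its image $G\ket{\Psi}$ under the orbit map is irreducible, and therefore $\overline{G\ket{\Psi}}$ is an irreducible variety in which $G\ket{\Psi}$ is open and dense. Thus the boundary $B:=\overline{G\ket{\Psi}}\setminus G\ket{\Psi}$ is a \emph{proper} closed subset, whence $\dim B<\dim G\ket{\Psi}$. Being a difference of two $G$--invariant sets, $B$ is $G$--invariant, hence a union of orbits, each of dimension at most $\dim B<\dim G\ket{\Psi}$. This is exactly the cited fact about dimensions of orbits in a boundary.

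For (ii): among the orbits contained in $\overline{G\ket{\Psi}}$ choose one, $O_0$, of minimal dimension (dimensions lie between $0$ and $\dim G$, so a minimum exists). Applying part (i) to $O_0$: were $O_0$ not closed, $\overline{O_0}\setminus O_0\subseteq\overline{G\ket{\Psi}}$ would be a nonempty union of orbits of dimension strictly below $\dim O_0$, contradicting minimality. Hence $O_0$ is closed and has minimal dimension among all orbits in the closure.

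For (iii): this is where Geometric Invariant Theory enters. Since $G$ is reductive and acts linearly on the affine space $\mathcal{H}$, the invariant ring is finitely generated and the categorical quotient $\pi\colon\mathcal{H}\to\mathcal{H}/\!\!/G$ exists, with the separation property that each fiber of $\pi$ contains exactly one closed $G$--orbit. Invariant polynomials are constant on $G\ket{\Psi}$, hence, by continuity in the Zariski topology, on $\overline{G\ket{\Psi}}$; so the whole orbit closure sits inside a single fiber of $\pi$ and therefore contains a unique closed orbit. Combining this with (ii), any minimal-dimension orbit in $\overline{G\ket{\Psi}}$ is closed and thus equals this unique closed orbit, which shows that the closed orbit is the only orbit of minimal dimension and, in particular, has dimension strictly smaller than that of every other orbit in the closure. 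The main obstacle is step (iii): the uniqueness of the closed orbit is the genuinely nontrivial ingredient, resting on reductivity of $G$ and on the separation property of the GIT quotient (Mumford's theorem); everything else is point-set topology and dimension counting, together with the already-established equivalence of the two topologies.
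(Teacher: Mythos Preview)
Your proof is correct and follows exactly the line the paper indicates: the paper does not give an argument but cites Brion, Proposition~1.11, for the boundary statement (your part (i)) and Mukai, Theorem~5.3 and Corollary~5.5, for the GIT statement (your parts (ii)--(iii)); you have simply supplied the details behind those citations. The only minor remark is that in (iii) you implicitly use that the closed orbit of the fiber actually lies in $\overline{G\ket{\Psi}}$, but this is guaranteed by your part (ii), so the argument is complete.
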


Starting from a vector $\ket\Psi\in\mathcal{H}$ one can look for
\begin{gather}
\inf_{\ket{\Phi^\prime}\in \overline{G \ket\Psi}}\|\ket{\Phi^\prime}\|.
\end{gather}
According to the above, one of the following must be the case.
\begin{enumerate}[(i)]
\item The infimum is $0$.
\item The infimum is not $0$, it is attained at some critical vector $\ket\Phi\in Crit$ within the orbit $G\ket{\Psi}$, and the orbit is closed. 
\item The infimum is not $0$ and it is attained at some critical vector $\ket\Phi\in Crit$, which is not in the orbit $G\ket{\Psi}$, but within its closure, $\overline{G\ket{\Psi}}$.
\end{enumerate}

In case (i) the vectors are said to be {\it unstable} (they form the so-called null-cone, which will be denoted by ${\cal N}$). For such vectors, there exist sequences $\{g_k\}\subset G$ such that $\lim_{k \rightarrow \infty} g_k \ket{\Psi} = 0$.
Vectors that are not unstable are called {\it semistable} and they form a set ${\mathcal{H}}_{ss} \coloneqq \cal{H} \setminus {\cal N}$. In case (ii) the vectors are called {\it polystable}; they are, hence, elements of an SLOCC class which contains a critical vector. In case (ii) one can distinguish two subcases: (ii a) - when the $G$-orbit of a polystable vector is of maximal dimension; (ii b) - when the $G$-orbit of a polystable state is not of maximal dimension. Note that maximal dimension refers here and in the following to the maximal dimension among all polystable orbits. In case (ii a), we call a vector {\it stable} \footnote{Notice that in the literature concerning GIT (see e.g. \cite{Do10} Chapter 8), the usual definition of stable vector is different - a vector is stable if its $G$-orbit is closed (i.e. the state is polystable) and has a finite stabilizer. Thus, our definition is much weaker since we only require the dimension of the stabilizer to be minimal (among all polystable orbits), not necessarily zero.} and in case (ii b), we call a vector {\it strictly polystable}. In case (iii), vectors are semistable but not polystable - we call such vectors {\it strictly semistable}. For such states, an infinite sequence $\{g_k\}\subset G$ is required in order to reach a critical state, i.e., $\lim_{k \rightarrow \infty} g_k \ket{\Psi} = \ket\Phi\in Crit$.
It has been shown that any strictly semistable vector is a superposition of a polystable vector and a vector from the null-cone (see the Supplemental Material for \cite{GoWa13} p. 1, Proposition 2).
Finally, a state in the projective space $[\Psi] \in \mathbb{P}(\mathcal{H})$ is called unstable/semistable/polystable etc. if it is the projection of some unstable/semistable/polystable vector $\ket{\Phi} \in \mathcal{H}$, respectively, i.e. $[\Psi]=[\Phi]$. We denote the set of semistable states by $\mathbb{P}(\mathcal{H})_{ss}$. In case of a $2 \times 2 \times 2$ system, the W--state, $\ket{W}= 1/\sqrt{3} (\ket{100}+\ket{010}+\ket{001})$ is a prominent example of a state in the null-cone (case (i)), whereas the critical GHZ--state $\ket{GHZ}= 1/\sqrt{2} (\ket{000}+\ket{111})$ is a stable state (case (ii a)). In order to give an example of strictly semi- and polystable states we need to consider systems of larger dimension. In case of a $2 \times 4 \times 4$ system, a state $\ket{\Psi_{sps}}=\frac{1}{2} \ket{0} (\ket{22}+ \ket{33}) +  \frac{1}{2} \ket{1} (\ket{00} + \ket{11})$ is strictly polystable (case (ii b) and a state $\ket{\Psi_{sss}}= \frac{1}{\sqrt{5}} \ket{0} (\ket{01}+ \ket{22} + \ket{33}) +  \frac{1}{\sqrt{5}} \ket{1} (\ket{00} + \ket{11})$ is strictly semistable (case (iii); cf. Section \ref{Sec_2nm} and Appendix \ref{app:tables244}). We illustrate the relationship between the types of states in Figure \ref{fig:nomenclature}.

Note that there is a numeric algorithm, which for any state $\ket{\Psi}$, determines a sequence of SL-operators, which converges, if applied to $\ket{\Psi}$, to the normal form as discussed above \cite{VeDe03}. If $\ket\Psi$ is in the null-cone, the normal form vanishes. If $\ket{\Psi}$ is polystable, then the normal form coincides with the critical state within the SLOCC class of $\ket\Psi$. Finally, if $\ket\Psi$ is strictly semistable, the algorithm converges to, but does not reach, a critical state in the closure of the SLOCC class of $\ket\Psi$ \footnote{Let us remark here that for system sizes we are considering here, numerical errors become relevant. Keeping track of the state by storing its computational basis coefficients may result in jumping to another SLOCC class due to precision errors. States that are in the null-cone may be erroneously identified as semistable, e.g., $\left|\psi_{26}\right>$ and  $\left|\psi_{27}\right>$ in Appendix \ref{app:255}. A more robust implementation can be achieved by keeping track of the accumulated local operators, instead.}.

The Kempf-Ness theorem \cite{KeNe79} is a powerful tool characterizing polystable states. Any SLOCC--class of a polystable state contains a unique critical state (up to LUs). It is known that $G [\it{ Crit}]$, is open, dense, and of full measure in $\mathbb{P}({\cal H})$ \cite{GoWa11, Wa17}, provided that $[Crit]$ is non-empty. We will use this fact in the proof of our main theorem which gives a criterion for the existence of strictly semistable states.

\begin{figure}
 \begin{center}
 \includegraphics[width=1.0 \columnwidth]{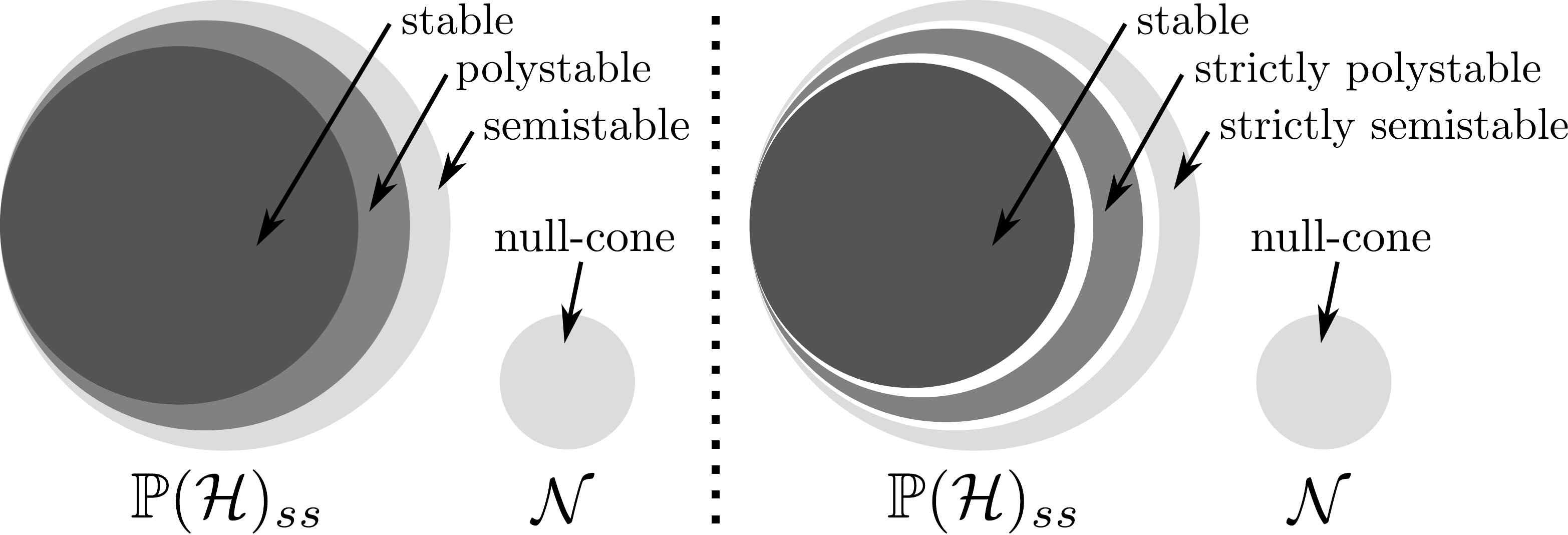}
 \caption{Set-theoretic relations between the types of states used in the paper (notice that e.g. the null-cone is not topologically separated from the semistable states, as this figure might suggest). The whole space of states $\mathbb{P}({\cal H})$ is divided into the semistable states $\mathbb{P}({\cal H})_{ss}$ and the null-cone $\mathcal{N}$. We call those semistable orbits, that are closed, polystable. We call those polystable orbits, that have maximal dimension (among all polystable orbits), stable. Moreover, we call polystable orbits, which are not stable, strictly polystable. Finally, we call semistable orbits, which are not polystable, strictly semistable. Thus, $\mathbb{P}({\cal H})$ partitions into four disjoint sets, the null-cone, the strictly semistable states, the strictly polystable states, and the stable states.
 \label{fig:nomenclature}}
 \end{center}
\end{figure}

\subsection{\texorpdfstring{$G$}{\textit{G}}-invariant polynomials}
\label{Sec_SLinvariant}

$G$--invariant polynomials constitute an important tool in the study of SLOCC--classes. A polynomial, $f:{\cal H}\longrightarrow \C$ is called $G$--invariant if $f(g\ket{\Psi})=f(\ket{\Psi})$ for any $\ket{\Psi}\in {\cal H}$ and any $g \in G$. The vector space (over $\C$) of all $G$--invariant polynomials is spanned by homogeneous polynomials of degree $k\in \mathbb{N}$. The set of $G$--invariant polynomials forms a ring which is finitely generated \cite{Mu03}.

As polynomials are continuous functions, any $G$--invariant polynomial will have the same value on any two $G$-orbits, $G\ket{\Psi_1}$ and $G\ket{\Psi_2}$ in $\mathcal{H}$, whose closures intersect. Hence,
as strictly semistable vectors converge to critical vectors, the SLOCC--class of a strictly semistable state cannot be differentiated from the one of the critical state, which is in its closure using $G$-invariant polynomials. Moreover, for any vector in the null-cone any of these polynomials vanish and vectors which are in the different SLOCC--classes within the null-cone cannot be distinguished via $G$--invariant polynomials \footnote{Note however, that some $G$-orbits in the null-cone can be distinguished by studying the gradient flow of the function \unexpanded{$-\sum_{i=1}^N\tr\{\rho_i([\Psi])^2\}$}, where $\rho_i([\Psi])$ is the $i$-th reduced state of $[\Psi]$ \cite{NeMu84,SaOs14, SaOs12,MaSa15,MaSa18}.}.
However, in \cite{GoWa13} it has been shown that SLOCC--classes of polystable states can be distinguished using ratios of $G$--invariant polynomials, as we recall in the following.

Two polystable states $[{\Psi}],[{\Phi}]$ are SLOCC equivalent if and only if \cite{GoWa13}
\bea
\frac{f_k(\ket{\Psi})}{h_k(\ket{\Psi})}=\frac{f_k(\ket{\Phi})}{h_k(\ket{\Phi})},\eea for any $G$--invariant homogeneous polynomials $f_k,h_k$ with $h_k(\ket{\Psi})\neq 0$ of degree $k$ for any $k$. As mentioned before, polystable states constitute a full measure set of states in case at least one critical state exists. Hence, if $[Crit] \neq \emptyset$, the criterion above solves the SLOCC--equivalence problem for generic states.

The necessary and sufficient condition for the existence of a critical state has been presented in \cite{BrLe19}. For arbitrary dimensional Hilbert spaces, a critical state exists if and only if
\begin{align}
\label{lmeformula}
\prod_{i=1}^N d_i - \sum_{l=1}^N (-1)^{l+1} \sum_{1 \leq i_1 < \ldots < i_l \leq N} \operatorname{gcd}(d_{i_1}^2, \ldots, d_{i_l}^2) \geq 0.
\end{align}
We will elaborate on that for the Hilbert spaces $\C^2 \otimes \C^m \otimes \C^n$, which we denote by $2\times m \times n$, in Section \ref{Sec_2nm}.

In summary, there are many important facts known about polystable states and also the null-cone has been investigated previously \cite{NeMu84,JoEr12,JoEr14,SaOs14,MaSa18}. However, little is known about strictly semistable states. In the subsequent section we will derive a simple necessary and sufficient condition for their existence. We will use this criterion for the simplest examples of multipartite systems, namely the Hilbert spaces $2\times m \times n$. There, we will not only identify the dimensions for which strictly semistable states exist, but we will scrutinize all SLOCC classes and present a complete characterization of them.

\section{The existence of strictly semistable states}
\label{Sec_main}
In this section we will show that there exists a strictly semistable state if and only if there exist two critical states whose $G$--orbits do not have the same dimensions (see Theorem \ref{Th_main}). Note that the latter means that there exist a strictly polystable state (i.e. not all polystable states are stable). To this end, we will first restrict ourselves to the set of semistable states. We will then introduce the definition of $c$--equivalence, which leads to a coarser classification than SLOCC--equivalence. Next, using tools from GIT we will prove the main theorem of the paper.

\subsection{The set of semistable states}
\label{sos}
Our aim is to derive a criterion for the existence of a strictly semistable state. Hence, we can restrict our attention to the Hilbert space excluding the null-cone, i.e. ${\cal H}\setminus {\cal N}$, or, more precisely, $\mathbb{P}({\cal H})_{ss}=\mathbb{P}({\cal H}\setminus {\cal N})$. However, we will first demonstrate that without excluding the null-cone, taking the closure in the projective space might become problematic. To give a simple example, let us consider the 3--qubit GHZ--state, $\ket{GHZ} = 1/\sqrt{2} (\ket{000}+\ket{111})$ and the regular matrix $G(\alpha)=\frac{1}{\sqrt{2}} \begin{pmatrix}e^\alpha & - e^\alpha \\ e^{-\alpha} & e^{-\alpha} \end{pmatrix}$ with $\det G(\alpha) = 1$. Clearly, the limit $\lim_{\alpha \to\infty} G^{\otimes 3}(\alpha) \ket{GHZ}$ does not exist in $\mathcal{H}$. However, in the projective space we obtain $\lim_{\alpha \to\infty} [G^{\otimes 3}(\alpha) \ket{GHZ}]=[\ket{W}] \in [{\cal N} \setminus\{0\}].$ To avoid such situations from now on we will restrict ourselves to the analysis of semistable states so the full (topological) space of quantum states is $\mathbb{P}(\mathcal{H})_{ss}$. This restriction is justified by the following observation which we prove in Appendix \ref{app:closuredifference}.

\begin{observation}
\label{obs:closuredifference}
For any vector $\ket{\Psi} \in \mathcal{H}_{ss}$, we have
\bea \overline{[\;G \ket{\Psi} \;]} \setminus [\;\overline{G \ket{\Psi}}\;] \subset [{\cal N} \setminus\{0\}].\eea
\end{observation}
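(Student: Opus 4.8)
The plan is to compare the two closures by unwinding what the projective closure $\overline{[G\ket{\Psi}]}$ actually contains. A point $[\phi]$ lies in $\overline{[G\ket{\Psi}]}$ precisely when there is a sequence $g_k \in G$ and nonzero scalars $\lambda_k \in \C$ with $\lambda_k g_k \ket{\Psi} \to \ket{\phi}$ (with $\ket{\phi}\neq 0$). The point $[\phi]$ lies in $[\overline{G\ket{\Psi}}]$ when we may in addition take $\lambda_k = 1$ for all $k$, i.e. $g_k\ket{\Psi}$ itself converges to a nonzero vector. So the content of the observation is: if $\lambda_k g_k\ket{\Psi}$ converges to a nonzero $\ket{\phi}$ but no subsequence of $g_k\ket{\Psi}$ converges to a nonzero vector, then $\ket{\phi}$ lies in the null-cone.

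The first step is to dispose of the bounded case. If $\{g_k\ket{\Psi}\}$ has a bounded subsequence, then it has a convergent subsequence $g_{k_j}\ket{\Psi}\to \ket{\chi}$; and since $\lambda_{k_j}g_{k_j}\ket{\Psi}\to\ket{\phi}\neq 0$, the scalars $\lambda_{k_j}$ must converge (to some $\lambda\neq 0$, using $\ket{\phi}\neq 0$), whence $\ket{\chi}=\lambda^{-1}\ket{\phi}\neq 0$ and $[\phi]=[\chi]\in[\overline{G\ket{\Psi}}]$ — so this point is not in the difference set. Therefore, for any $[\phi]$ in the difference set, $\|g_k\ket{\Psi}\|\to\infty$ along the whole sequence (after passing to the subsequence realizing the limit). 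In particular $|\lambda_k|\to 0$.

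The second, and main, step is to extract from this divergence a sequence that drives $\ket{\Psi}$ to $0$, thereby showing $\ket{\Psi}\in\NN$ — but that contradicts $\ket{\Psi}\in\HH_{ss}$, so instead I must argue it is $\ket{\phi}$ that is unstable. Here is the cleaner route: assume for contradiction $\ket{\phi}$ is semistable. Then $\ket{\phi}$ lies in $\overline{[G\ket{\Psi}]}$ and, being semistable, its own $G$-orbit closure in $\HH$ contains a critical vector $\ket{\Phi_0}\in Crit$ of nonzero norm (Kempf--Ness). One then wants to conclude that $\ket{\Phi_0}\in\overline{G\ket{\Psi}}$, i.e. that $[\phi]$ actually does lift into $[\overline{G\ket{\Psi}}]$. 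The key tool is that the map $\mathbb{P}(\HH)_{ss}\to\mathbb{P}(\HH)_{ss}/\!\!\sim_c$ onto $c$-equivalence classes (or, concretely, the fact that $\overline{G\ket{\Psi}}\cap\overline{G\ket{\phi}}$ is nonempty iff their closures share the unique closed orbit) forces $\overline{G\ket{\phi}}$ and $\overline{G\ket{\Psi}}$ to contain a common closed orbit once $[\phi]\in\overline{[G\ket{\Psi}]}$ with both semistable; hence $\ket{\Phi_0}$ (or a scalar multiple, but scalar multiples of a critical vector inside a fixed orbit are again critical only up to the scalar) is reachable inside $\overline{G\ket{\Psi}}$ in $\HH$, giving $[\phi]\in[\overline{G\ket{\Psi}}]$, the desired contradiction.

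I expect the main obstacle to be precisely the passage from "the limit exists in $\mathbb{P}(\HH)$ after rescaling" to "the limit, or a scalar multiple, exists in $\HH$ without rescaling": a priori the rescaling factors $\lambda_k\to 0$ could be genuinely necessary, and ruling this out for semistable $\ket{\Psi}$ is the whole point. The resolution should hinge on the Kempf--Ness/Hilbert--Mumford picture: if $\|g_k\ket{\Psi}\|\to\infty$ while $[g_k\ket{\Psi}]$ converges, a standard argument (choosing a one-parameter subgroup realizing the asymptotic direction, cf. the Hilbert--Mumford criterion, or invoking that on the semistable locus the quotient map is proper onto its image) shows the limiting direction $[\phi]$ must lie in the null-cone. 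I would lean on the result quoted just before the statement — that $G[Crit]$ is open and dense — together with Lemma~\ref{boundary}, to make the orbit-intersection argument rigorous, and I would relegate the one-parameter-subgroup estimate to the appendix as announced.
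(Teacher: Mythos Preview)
Your first step (the bounded case) is essentially the paper's argument, modulo one point you should make explicit: the limit $\ket{\chi}$ of the bounded subsequence is nonzero precisely because $\ket{\Psi}\in\HH_{ss}$, so $0\notin\overline{G\ket{\Psi}}$. Without this, your claim that $\lambda_{k_j}$ converges to a nonzero scalar does not follow.

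Your second step, however, has a genuine gap. The orbit-intersection argument via $c$-equivalence is circular: in the paper, the well-definedness of $c$-equivalence on $\mathbb{P}(\HH)_{ss}$ (i.e., that projective orbit closures agree with Hilbert-space orbit closures on semistable states) is established \emph{using} Observation~\ref{obs:closuredifference}, so you cannot invoke that machinery to prove it. Moreover, even granting your claim that $\ket{\Phi_0}\in\overline{G\ket{\Psi}}$, this would only give $[\Phi_0]\in[\overline{G\ket{\Psi}}]$, not $[\phi]\in[\overline{G\ket{\Psi}}]$; these are different unless $\ket{\phi}$ is itself critical. The vague appeals to Hilbert--Mumford (which is about limits going to $0$, not to $\infty$) or to properness of the quotient map are not fleshed out and do not obviously close the gap.

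The paper's argument for the unbounded case is far more direct and you are missing it entirely: use $G$-invariant homogeneous polynomials. If $\lambda_k g_k\ket{\Psi}\to\ket{\phi}$ with $|\lambda_k|\to 0$, then for any homogeneous $G$-invariant polynomial $p$ of degree $d$,
\[
p(\ket{\phi})=\lim_k p(\lambda_k g_k\ket{\Psi})=\lim_k \lambda_k^{\,d}\,p(g_k\ket{\Psi})=\lim_k \lambda_k^{\,d}\,p(\ket{\Psi})=0,
\]
since $p(\ket{\Psi})$ is a fixed number and $\lambda_k^{\,d}\to 0$. As all $G$-invariant homogeneous polynomials vanish on $\ket{\phi}$, it lies in the null-cone by definition. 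This is the content of the paper's Lemma~\ref{lemmainfty}, and it replaces the entire Kempf--Ness/orbit-intersection apparatus you were reaching for.
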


Hence, considering $\mathbb{P}(\mathcal{H})_{ss}$ as the full (topological) space we have that the orbit $G[\Psi]$ is closed (in $\mathbb{P}(\mathcal{H})_{ss}$)  if and only if $G\ket\Psi$ is closed in $\mathcal{H}_{ss}$. Moreover, for any pair of states $[\Phi], [\Psi]\in \mathbb{P}(\mathcal{H})_{ss}$ it holds that $[\Phi]\in\overline{G[\Psi]}$ if and only if $\ket{\Phi}\in \overline{G\ket{\Psi}}$. Thus if we treat $\mathbb{P}(\mathcal{H})_{ss}$ and $\mathcal{H}_{ss}$ as full topological spaces, we have a well defined correspondence between closures in $\mathcal{H}_{ss}$ and $\mathbb{P}(\mathcal{}H)_{ss}$. In the following we will always consider $\mathbb{P}(\mathcal{H})_{ss}$ and $\mathcal{H}_{ss}$ as full topological spaces. 

Due to the results summarized in Sec. \ref{sec:not} we have that the closure of a $G$--orbit in $\mathbb{P}(\mathcal{H})_{ss}$ always contains a unique critical state (up to LUs). It might also contain strictly semistable states of various SLOCC classes, which converge to this critical state. In order to derive a criterion for the existence of these strictly semistable states, we adapt now Lemma \ref{boundary} for states in $\mathbb{P}(\mathcal{H})_{ss}$. To this end, we use the following observation, which follows from Corollary 10 of \cite{GoWa11}.

\begin{observation}
\label{orbitdim}
For any $\ket\Psi\in\mathcal{H}_{ss}$ it holds that $$\mathrm{dim} \, G\ket{\Psi}=\mathrm{dim} \,G[\Psi].$$ That is, the dimension of the $G$--orbit in the projective space coincides with the dimension of the $G$--orbit in the Hilbert space for any semistable state. 
\end{observation}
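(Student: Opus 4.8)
The plan is to compare the two orbits through their stabilizers rather than directly. Since $G_{\ket\Psi}\subseteq G_{[\Psi]}$ always holds, and since $\dim G$ is fixed, the orbit--stabilizer identities \eqref{eq:orbitstabilizerdim1} and \eqref{eq:orbitstabilizerdim} show that the desired equality $\dim G\ket\Psi=\dim G[\Psi]$ is equivalent to $\dim G_{\ket\Psi}=\dim G_{[\Psi]}$. So it suffices to prove that passing from the stabilizer of the vector to the stabilizer of the ray does not increase the dimension.

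To this end I would consider the character $\chi\colon G_{[\Psi]}\to\mathbb{C}^*$ defined by $g\ket\Psi=\chi(g)\ket\Psi$. The set $G_{[\Psi]}$ is a closed algebraic subgroup of $G$ (it is cut out by the vanishing of the $2\times2$ minors of $[\,g\ket\Psi\mid\ket\Psi\,]$), and $\chi$ is a morphism of algebraic groups with kernel exactly $G_{\ket\Psi}$; hence $\dim G_{[\Psi]}=\dim G_{\ket\Psi}+\dim\chi(G_{[\Psi]})$. Now $\chi(G_{[\Psi]})$ is an algebraic subgroup of $\mathbb{C}^*$, so it is either finite or all of $\mathbb{C}^*$. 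If it were all of $\mathbb{C}^*$, there would be $g\in G_{[\Psi]}$ with $\chi(g)=1/2$, and then $g^k\ket\Psi=2^{-k}\ket\Psi\to0$, so $\ket\Psi$ would lie in the null-cone $\mathcal{N}$, contradicting $\ket\Psi\in\mathcal{H}_{ss}$. Therefore $\chi(G_{[\Psi]})$ is finite, $\dim\chi(G_{[\Psi]})=0$, and $\dim G_{[\Psi]}=\dim G_{\ket\Psi}$, which gives the claim. This is also precisely the content of Corollary 10 of \cite{GoWa11}, which one could simply invoke.

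The statement is elementary once set up correctly; the only points requiring care are the algebro-geometric facts used in passing --- that $G_{[\Psi]}$ is a closed algebraic subgroup, that $\chi$ is a genuine algebraic homomorphism with closed image, and the classification of the algebraic subgroups of the multiplicative group $\mathbb{C}^*$ --- together with the observation that it is exactly semistability (and \emph{not} the fact that $G$ contains only finitely many scalars) that rules out a one-parameter family of scalar multiples of $\ket\Psi$ sitting inside the orbit. Everything else is the orbit--stabilizer bookkeeping already recorded in \eqref{eq:orbitstabilizerdim1} and \eqref{eq:orbitstabilizerdim}.
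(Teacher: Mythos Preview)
Your proof is correct and follows essentially the same strategy as the paper: both reduce the claim to showing that the quotient $G_{[\Psi]}/G_{\ket\Psi}$ is finite by analyzing the character $g\mapsto\chi(g)$ with $g\ket\Psi=\chi(g)\ket\Psi$, and both use semistability (iterating a symmetry with $|\chi(g)|\neq1$ would send $\ket\Psi$ to $0$) as the key obstruction. The only technical difference is in how finiteness of the image is obtained: the paper first argues $|\chi(g)|=1$ and then invokes a homogeneous $G$-invariant polynomial $p_j$ with $p_j(\ket\Psi)\neq0$ to force $d_j\phi\equiv0\pmod{2\pi}$, whereas you appeal directly to the classification of closed algebraic subgroups of $\mathbb{C}^*$ (finite or all of $\mathbb{C}^*$) and rule out the latter in one stroke. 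Your packaging is slightly cleaner in that it avoids the separate modulus/phase case split, at the cost of importing a few standard algebro-geometric facts (closedness of $G_{[\Psi]}$, closedness of the image of an algebraic group morphism, and the subgroup structure of $\mathbb{C}^*$); the paper's version is more self-contained for a physics audience since it only uses the existence of a nonvanishing invariant polynomial, which is exactly the definition of semistability.
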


\begin{proof}
As we consider here, in contrast to \cite{GoWa11}, the projective space, we adapt the proof of \cite{GoWa11} to our setting. Of course $G_{\ket{\Psi}}\subset G_{[\Psi]}$. Assume next that there is $g\in G_{[\Psi]}$ such that $g\ket{\Psi}=\alpha\ket \Psi$. When $|\alpha|< 1$ this implies that $\lim_{n\rightarrow\infty}\|g^n\ket\Psi\|=0$ and when $|\alpha|> 1$ it means $\lim_{n\rightarrow\infty}\|g^{-n}\ket\Psi\|=0$. In both cases this is contradiction as $\ket\Psi$ is semistable. Thus the only possibility is $\alpha=e^{i\phi}$. We will show next that $\phi$ belongs to a finite set. Let $\{p_1,p_2,\ldots,p_k\}$ be a set
of homogeneous generators of $G$-invariant polynomials, $\mathrm{deg} \, p_j=d_j$. Then 
\begin{gather}
p_{j}(\ket\Psi)=p_{j}(g\ket\Psi)=p_{j}(e^{i\phi}\ket\Psi)=e^{id_j\phi}p_{j}(\ket\Psi).
\end{gather}
Thus $\forall j \, d_j\phi=0\,\mathrm{mod}\,2\pi$. This means that $G_{[\Psi]} / G_{\ket{\Psi}}$ is a finite set and hence $\mathrm{dim} \, G\ket{\Psi}=\mathrm{dim} \, G[\Psi]$ \footnote{Notice that in fact it suffices to consider any chosen $G$--invariant homogeneous polynomial.}.
\end{proof}

Due to the relation given in Eqs. (\ref{eq:orbitstabilizerdim1}) and (\ref{eq:orbitstabilizerdim}) the above statement is equivalent to the statement that the dimensions of the stabilizers coincide. That is, the dimension of the set of $SL$ operators, which leave the vector $\ket{\Psi}$ invariant, $G_{\ket{\Psi}}$, coincides with the dimension of the set of $SL$ operators, which leave the vector $\ket{\Psi}$ up to a proportionality factor invariant, $G_{[\Psi]}$.

Combining now Lemma \ref{boundary} with the Observations \ref{obs:closuredifference} and \ref{orbitdim} leads to the following  adaptation of Lemma \ref{boundary} for states in $\mathbb{P}(\mathcal{H})_{ss}$.
\begin{lemma}
\label{boundary1}
For each $[\Psi] \in \mathbb{P}(\mathcal{H})_{ss}$, the boundary of an orbit $\overline{G [\Psi]} \setminus G [\Psi]$  in $\mathbb{P}(\mathcal{H})_{ss}$ is a union of $G$--orbits in $\mathbb{P}(\mathcal{H})_{ss}$ of strictly smaller dimension.  Each $G$-orbit closure contains a unique polystable $G$--orbit. This polystable orbit is the only orbit in the closure which has minimal dimension.
\end{lemma}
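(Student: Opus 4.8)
The plan is to transfer Lemma~\ref{boundary}, a statement about $G$--orbits in $\mathcal{H}$, to the projective picture by pushing everything forward along the quotient map $\mathcal{H}_{ss}\to\mathbb{P}(\mathcal{H})_{ss}$, using the dictionary between closures recorded right after Observation~\ref{obs:closuredifference} and using Observation~\ref{orbitdim} to ensure that neither closures, nor orbit dimensions, nor the property of being closed get distorted along the way. First I would fix $\ket\Psi\in\mathcal{H}_{ss}$ and make this dictionary precise. If $\ket\Psi$ is semistable then $\overline{G\ket\Psi}$ cannot meet $\mathcal{N}$: a vector $\ket\chi\in\mathcal{N}$ in the closure would force $0\in\overline{G\ket\chi}\subset\overline{G\ket\Psi}$, i.e.\ $\ket\Psi\in\mathcal{N}$. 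Hence $\overline{G\ket\Psi}\subset\mathcal{H}_{ss}$, and by the equivalences stated after Observation~\ref{obs:closuredifference} (themselves consequences of that observation) we have $\overline{G[\Psi]}=[\,\overline{G\ket\Psi}\,]$, while $G[\Psi]$ is closed in $\mathbb{P}(\mathcal{H})_{ss}$ precisely when $G\ket\Psi$ is closed in $\mathcal{H}_{ss}$, equivalently in $\mathcal{H}$. Since orbits of algebraic group actions are locally closed (this applies to $G\acts\mathbb{P}(\mathcal{H})$ as well), $G[\Psi]$ is open in $\overline{G[\Psi]}$, so its boundary is well defined, just as for $G\ket\Psi$.

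Next I would check that projecting to $\mathbb{P}(\mathcal{H})_{ss}$ does not identify a boundary point of $G\ket\Psi$ with an interior one. If $[\Phi]\in[\,\overline{G\ket\Psi}\setminus G\ket\Psi\,]$ equalled $[\Psi']$ for some $\ket{\Psi'}\in G\ket\Psi$, then $\ket\Phi=\beta\ket{\Psi'}$ with $\beta\neq0$; since $G$ acts linearly, $g\ket\Phi=\beta\,g\ket{\Psi'}$, so $G\ket\Phi$ and $G\ket{\Psi'}=G\ket\Psi$ are related by a fixed scaling and hence have equal dimension, contradicting Lemma~\ref{boundary}. Thus $[G\ket\Psi]$ and $[\,\overline{G\ket\Psi}\setminus G\ket\Psi\,]$ are disjoint, so the boundary of $G[\Psi]$ equals $[\,\overline{G\ket\Psi}\setminus G\ket\Psi\,]$. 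By Lemma~\ref{boundary} this set is a union of orbits $G\ket\Phi$ with $\dim G\ket\Phi<\dim G\ket\Psi$; each such $\ket\Phi$ lies in $\overline{G\ket\Psi}\subset\mathcal{H}_{ss}$, so Observation~\ref{orbitdim} yields $\dim G[\Phi]=\dim G\ket\Phi<\dim G\ket\Psi=\dim G[\Psi]$, which is the first assertion.

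For the remaining two assertions I would again invoke Lemma~\ref{boundary}: $\overline{G\ket\Psi}$ contains a unique closed $G$--orbit, which is also its unique orbit of minimal dimension. Since $\mathcal{N}$ (and with it the $0$--vector) has been removed, a closed orbit in $\mathcal{H}_{ss}$ contains a \emph{nonzero} critical vector, i.e.\ is polystable (Kempf--Ness), and conversely every polystable orbit is closed; projecting gives a polystable $G$--orbit in $\overline{G[\Psi]}$. It is the only one: two distinct polystable orbits $G[c_1]$, $G[c_2]$ in $\overline{G[\Psi]}$, represented by critical vectors $\ket{c_i}\in Crit$, would lift — after rescaling $\ket{c_i}$ into $\overline{G\ket\Psi}$, which preserves criticality — to two closed $G$--orbits in $\overline{G\ket\Psi}$ that are distinct (distinct projective orbits have disjoint $G$--orbit preimages), contradicting uniqueness in Lemma~\ref{boundary}. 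Finally, since this closed orbit has strictly minimal dimension among orbits in $\overline{G\ket\Psi}$, Observation~\ref{orbitdim} transfers this verbatim to $\overline{G[\Psi]}$, completing the proof.

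I expect the only real difficulty to be this bookkeeping around the quotient map: one must verify that passing to $\mathbb{P}(\mathcal{H})_{ss}$ neither merges nor splits the $G$--orbits occurring in a closure, leaves all their dimensions unchanged, and makes ``closed orbit'' synonymous with ``polystable orbit'' once the null-cone is excluded. These are exactly the services rendered by Observations~\ref{obs:closuredifference} and~\ref{orbitdim} together with Kempf--Ness, so the argument is an assembly of already established facts rather than anything substantially new.
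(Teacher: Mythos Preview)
Your proposal is correct and follows precisely the approach the paper indicates: the paper does not spell out a proof of Lemma~\ref{boundary1} but simply says that ``Combining now Lemma~\ref{boundary} with the Observations~\ref{obs:closuredifference} and~\ref{orbitdim} leads to the following adaptation of Lemma~\ref{boundary} for states in $\mathbb{P}(\mathcal{H})_{ss}$.'' You have supplied exactly that combination in detail, including the bookkeeping (no merging of boundary and interior under projection, preservation of orbit dimensions, and the identification of closed with polystable orbits via Kempf--Ness once the null-cone is excised), so there is nothing to add.
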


\subsection{c--equivalence}
\label{ceq}
The number of SLOCC-classes, i.e. $G$--orbits, is in general infinite \cite{DuVi00}. Furthermore, as mentioned before, $G$-orbits may not be closed. This motivates the introduction of closure-equivalence ($c$-equivalence) which is used in GIT. By the Theorem of Nagata and Mumford it is known that $c$--equivalence can be defined using three equivalent definitions (see \cite{Mu03}). As we are concerned with semistable states, it will suffice to consider the following definition (see also Figure \ref{fig:fig1}).

\begin{definition}\label{def:ceq}
Two states $[\Psi],[\Phi] \in \mathbb{P}(\mathcal{H})_{ss}$ are called $c$--equivalent if \bea \overline{G [\Psi]}\cap \overline{G [\Phi]} \neq \emptyset.\eea
\end{definition}

We note that $c$--equivalence is indeed an equivalence relation. 
As the closure of a semistable state contains a critical state, which does not belong to its $G$--orbit, we have that $c$--equivalence induces a coarser classification than SLOCC-equivalence. As explained in Sec. \ref{Sec_SLinvariant}, $c$--classes can be distinguished using the ratios of SL--invariant polynomials. Each $c$--class contains exactly one closed $G$--orbit. Due to Lemma 2 this $G$--orbit is the only orbit of minimal dimension (in the closure of the $G$--orbit). This closed orbit lies in the closure of every $G$--orbit from the $c$--class \cite{Mu03} and is the one mentioned in Lemma \ref{boundary1}.  Moreover, due to the Kempf-Ness theorem \cite{KeNe79}, each $c$--class is uniquely characterized by the unique (up to LU) critical state. Hence, $c$--classes are in 1-1 correspondence with  critical states (up to LU) (see Figure \ref{fig:fig1}).

In order to state these facts mathematically rigorously, several definitions would be in order. As these definitions are cumbersome, we do not state them here, but refer the reader to Appendix \ref{app:regularmap}. We consider the set of classes $\mathbb{P}(\mathcal{H})_{ss}/\sim_c$ given by the $c$--equivalence relation in $\mathbb{P}(\mathcal{H})_{ss}$. It turns out that the set $\mathbb{P}(\mathcal{H})_{ss}/\sim_c$ can be equipped with the structure of a projective algebraic variety denoted by $\mathbb{P} (\mathcal{H})_{ss} // G$. The GIT construction gives a regular (see Definition \ref{reg}), hence continuous mapping $\pi : \mathbb{P} (\mathcal{H})_{ss} \rightarrow \mathbb{P} (\mathcal{H})_{ss} // G$ called the GIT quotient \cite{Br10}. Thus, $\pi([\Psi])$ is a point in a quotient variety corresponding to the $c$--class of $[\Psi]$ and $\pi([\Psi])=\pi([\Phi])$ if and only if $[\Psi]$ and $[\Phi]$ are $c$--equivalent. The Kempf-Ness theorem gives us the bijection $\mathbb{P} (\mathcal{H})_{ss} // G \cong [Crit]/K$. Hence, the GIT quotient maps a semistable state $[\Psi]$ to the unique (up to LUs) critical state in $\overline{G [\Psi]}$.  

In what follows we call a $c$--class trivial if it is a singleton, i.e. if it consists of exactly one (thus closed/polystable) $G$--orbit.  We will say that the $c$--equivalence in $\mathbb{P}(\mathcal{H})_{ss}$ is trivial if each $c$--class (if there is any) is trivial, i.e. the GIT quotient does not identify any two distinct $G$--orbits or there are no critical states (so no $c$--classes). Otherwise the $c$--equivalence is called non-trivial (see Figure \ref{fig:fig1}). For example, one may check that in case of $2 \times 3 \times 5$ system, the left hand side of Formula \ref{lmeformula} is negative, what indicates that there are no critical states in this system (cf. Example \ref{nolme} from Appendix F). Hence, $c$--equivalence in $2 \times 3 \times 5$ system is trivial. Stated differently, the $c$--equivalence in $\mathbb{P}(\mathcal{H})_{ss}$ is trivial if each $c$--class is also a  SLOCC--class, i.e. $c$--equivalence is not a coarser classification than SLOCC. Note that a $c$--class is non-trivial if and only if there exists a strictly semistable state whose $G$--orbit belongs to this class.
Similarly the $c$--equivalence is non-trivial if and only if there exists a strictly semistable state (given by some non-trivial $c$--class).

\begin{figure}
 \begin{center}
 \includegraphics[scale=0.35]{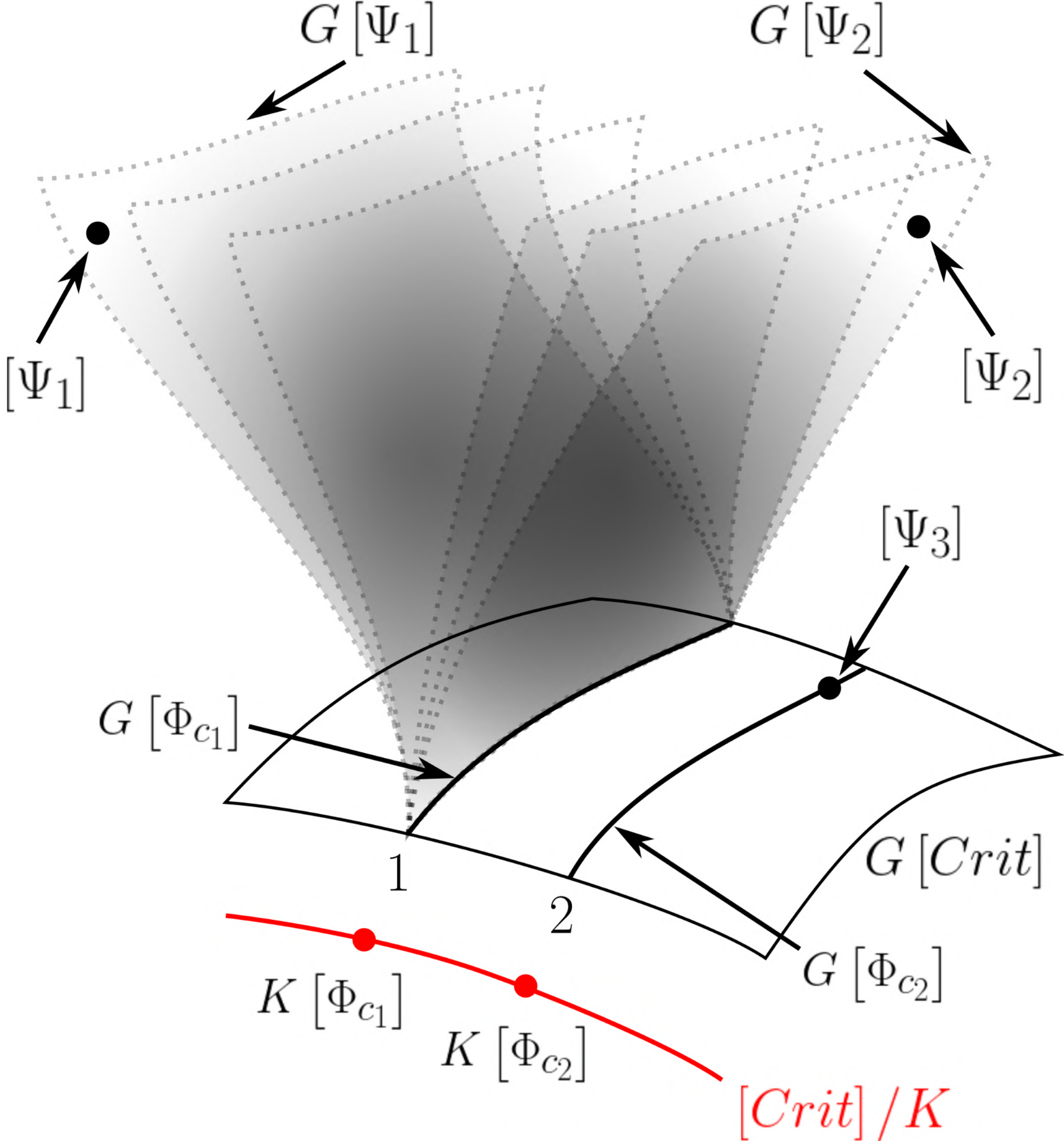}
 \caption{Example of the partition of semistable states into $c$-classes parametrized by LU--classes of critical states. Two sample $c$--classes 1 and 2 have been drawn. Points of the red line are LU-classes of critical states ($K$--orbits). For each such point we have a unique closed/polystable $G$--orbit containing it (black horizontal lines) which is also the unique orbit of smallest dimension in a given class (see Lemma \ref{boundary1}). Closed/polystable orbits correspond one-to-one to $c$-classes of $G$-orbits. Non-closed orbits (which are of higher dimension) are denoted by dashed gray surfaces. Class 2 contains only polystable states. Class 1 contains strictly semistable states and are witnesses of the non-triviality of $c$--equivalence (see Definition \ref{def:ceq}). States $[\Psi_1]$ and $[\Psi_2]$ are strictly semistable and belong to the $c$--class 1. Hence,  they can be transformed in the limit to the critical states of the LU--class $K[\Phi_{c_1}]$ (normal form). The stabilizer of this critical state has larger dimension, i.e. the $G$--orbit has smaller dimension than the one of the corresponding strictly semistable state. State $[\Psi_3]$ is polystable and can be transformed by SLOCC--operations to a critical state in $K[\Phi_{c_2}]$ (normal form). As we will show later (see Observation \ref{obs:localmainth}), the dimension of $G[\Phi_{c_2}]$ must be maximal and hence larger than $\mathrm{dim} \, G[\Phi_{c_1}]$. \label{fig:fig1}}
 \end{center}
\end{figure}

\subsection{Stability of the action of a group}
\label{chapter2}

We briefly discuss here the fact that, in case critical states exist, the action of $G$ on $\mathbb{P}(\mathcal{H})$ is stable, which is required in the proof of the main theorem of this paper. Let us first introduce the definition of stability of an action of a group.
\begin{definition}
\label{stabmax}
An action $H \acts \mathbb{P}(\mathcal{H})$ is stable if there exists an open dense set $U \subset \mathbb{P}(\mathcal{H})$ that is a union of polystable orbits.
\end{definition}

We show in Appendix \ref{app:stab_eq} that the existence of an open dense set $U \subset \mathbb{P}(\mathcal{H})$ that is a union of polystable orbits implies the existence of an open dense set $V \subset U$ that is a union of polystable orbits of maximal dimension (among all orbits). In particular, $V$ is a union of stable orbits so in case the action is stable,
the maximal dimension among polystable orbits (i.e. the dimension of stable orbits) coincides with the maximal dimension among all orbits, which we will denote in the following by $d_G$.

The action $G \acts \mathbb{P}(\mathcal{H})$ is stable if and only if critical states exist, as the union of the SLOCC--orbits of critical states (i.e. polystable orbits) is an open dense subset
$U \subset \mathbb{P}(\mathcal{H})$, whenever $U \neq \emptyset$ \cite{GoWa11}. In particular, we have \footnote{The reason for that is that the dimension of an open and dense set is the same as the dimension of the whole space.}
\begin{equation} \label{eq:gendim}
\mathrm{dim} \, G [\it{Crit}]=\mathrm{dim} \, \mathbb{P} (\mathcal{H}).
\end{equation}

Let us note here that $G [\Psi]$ having maximal dimension does not necessarily mean that $G_{[\Psi]}$ is finite, i.e. that the dimension of the local stabilizer is zero. An example would be the $G$--orbit of the  3--qubit GHZ state, which is of maximal dimension, as it is of full measure. However, its dimension does not coincide with the dimension of $G$ since it admits continuous symmetries and its stabilizer has a positive dimension \cite{BrLe19} \footnote{More precisely, we have that $\operatorname{dim} G [GHZ] = 7$ and $\operatorname{dim}G_{[GHZ]}= 2$ \cite{VeDe02ghz}.}. However, we have that $\mathrm{dim} \, G [GHZ]=\mathrm{dim} \, \mathbb{P}(\mathcal{H})$, as the $GHZ$--state is the only critical state for three qubits.

\subsection{Existence of strictly semistable states}
\label{sec:mainsubsection}

In this subsection, we prove the main theorem (Theorem \ref{Th_main}), which provides a necessary and sufficient condition for the existence of strictly semistable states and discuss its consequences. 

In order to do so, we will restate a lemma from \cite{Mu99}, which we will apply later in the context of the GIT quotient $\pi : \mathbb{P}(\mathcal{H})_{ss} \rightarrow  \mathbb{P}(\mathcal{H})_{ss}//G$ and which is a direct consequence of Corollary 15.5.4 from \cite{TaYu15}. 
\begin{lemma}[On the fibres of the mapping \cite{Mu99}]
\label{fibres}
Let $X$ and $Y$ be quasi-projective irreducible varieties and $\pi: X \rightarrow Y$ a regular surjective map. Then, there exists an open and dense set $V \subset Y$, such that for any points $\alpha \in V$ and $\gamma \in Y \setminus V$, we have
\begin{equation} \label{eq:otfotm}
\mathrm{dim } \, \pi^{-1} (\alpha)=s \leq \mathrm{dim } \, \pi^{-1} (\gamma) \end{equation}
and $s=\mathrm{dim} \, X - \mathrm{dim} \, Y$.
\end{lemma}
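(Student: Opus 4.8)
The plan is to recognise the statement as the classical theorem on the dimension of the fibres of a surjective morphism of irreducible varieties, which is precisely the content of Corollary~15.5.4 of \cite{TaYu15}; the only work is to match notation. Since $\pi$ is surjective and $X,Y$ are irreducible one has $\dim Y\le\dim X$, so I would set $s:=\dim X-\dim Y\ge 0$, making the asserted identity $s=\dim X-\dim Y$ true by definition.

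That theorem supplies two facts. First, because $\pi$ is surjective every fibre $\pi^{-1}(\gamma)$, $\gamma\in Y$, is nonempty, and each of its irreducible components has dimension at least $s$; in particular $\dim\pi^{-1}(\gamma)\ge s$ for \emph{every} $\gamma\in Y$. Second, there is a nonempty open subset $V\subseteq Y$, automatically dense since $Y$ is irreducible, on which the generic value is attained, that is, $\dim\pi^{-1}(\alpha)=s$ for all $\alpha\in V$. Taking this $V$, the equality in (\ref{eq:otfotm}) holds for $\alpha\in V$, and for $\gamma\in Y\setminus V$ the first fact gives exactly $s\le\dim\pi^{-1}(\gamma)$; together with $s=\dim X-\dim Y$ this is the full statement. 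If one wants $Y\setminus V$ to capture the whole jump locus, I would moreover note that by upper semicontinuity of the fibre dimension (Chevalley) the set $\{x\in X:\dim_x\pi^{-1}(\pi(x))\ge s+1\}$ is closed and proper, and enlarge $V$ to the complement of the closure of its image; this refinement is, however, not needed here.

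I do not expect a genuine obstacle: the argument is essentially a repackaging of a standard theorem, and the only points requiring care are (i) invoking the fibre dimension theorem in the right generality, namely for quasi-projective, possibly non-proper, irreducible varieties, for which it holds verbatim, and (ii) using irreducibility of \emph{both} $X$ and $Y$, that of $Y$ to upgrade ``nonempty open'' to ``dense open'' and that of $X$ so that the notion of the generic fibre dimension makes sense. The one elementary point that should be spelled out, if the refined $V$ is used, is the identity $\{\gamma\in Y:\dim\pi^{-1}(\gamma)\ge s+1\}=\pi\bigl(\{x\in X:\dim_x\pi^{-1}(\pi(x))\ge s+1\}\bigr)$.
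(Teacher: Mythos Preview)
Your proposal is correct and matches the paper's own treatment: the paper does not prove this lemma but simply states that it is a direct consequence of Corollary~15.5.4 in \cite{TaYu15}, exactly as you do. Your additional remarks on upper semicontinuity and possibly enlarging $V$ go slightly beyond what the paper records, but are consistent and not needed for the application.
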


We defer a more detailed discussion of the assumptions of the lemma as well as proper definitions of quasi-projective varieties, irreducible varieties, and regular maps to Appendix \ref{app:regularmap}. However, for our purposes here, note that Lemma \ref{fibres} is applicable to the GIT quotient $\pi: \mathbb{P}(\mathcal{H})_{ss} \rightarrow \mathbb{P}(\mathcal{H})_{ss}//G$. Let us briefly outline the reason for that. First, $\mathbb{P}(\mathcal{H})_{ss}$ and $\mathbb{P}(\mathcal{H})_{ss}//G$ are indeed irreducible quasi-projective varieties (see Appendix \ref{app:regularmap}). Moreover, $\pi$ is a regular map, which follows from the fact that a finite number of rational polynomials can be used to decide the $c$-class of any given semistable state $[\psi]$, as explained in Section \ref{ceq}.

 The sets $\pi^{-1} (\alpha)$ in Lemma \ref{fibres} are called fibres and the dimension $s$ is called the minimal fibre dimension. Notice that $\pi^{-1}(V)$ is an open and dense subset of $\mathbb{P}(\mathcal{H})_{ss}$ and generic fibres have minimal fibre dimension $s$.
We are now in the position to prove the main theorem, which presents a criterion for when a $c$--equivalence is trivial (i.e. all GIT quotient fibres are single $G$--orbits).

\begin{theorem}[ \textbf{Main Theorem} ]
\label{Th_main}
There exist no strictly semistable state if and only if the dimensions of all polystable $G$--orbits are the same (i.e. they are all stable).
\end{theorem}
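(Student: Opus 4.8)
The plan is to prove both implications using Lemma~\ref{fibres} applied to the GIT quotient $\pi: \mathbb{P}(\mathcal{H})_{ss} \rightarrow \mathbb{P}(\mathcal{H})_{ss}//G$. First observe that the two sides of the statement are equivalent to statements about fibres of $\pi$: a strictly semistable state exists if and only if some fibre of $\pi$ contains more than one $G$--orbit (i.e. the $c$--equivalence is non-trivial), and ``all polystable orbits have the same dimension'' should be matched against the behaviour of fibre dimension across $\pi$. Since each fibre $\pi^{-1}(\alpha)$ is exactly a $c$--class, and (by Lemma~\ref{boundary1}) contains a unique polystable orbit which is the orbit of minimal dimension in that $c$--class, the dimension of the polystable orbit in the fibre over $\alpha$ governs whether that fibre is a single orbit or not.

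For the direction ``$\Leftarrow$'' (all polystable orbits stable $\Rightarrow$ no strictly semistable states): if every polystable orbit has the same dimension $d_G$ (using the notation from Section~\ref{chapter2}, where I would invoke the stability of the action and Eq.~\eqref{eq:gendim} to identify $d_G$ with $\dim \mathbb{P}(\mathcal{H})$ minus the stable stabilizer dimension, or more directly with the maximal orbit dimension), then since a $c$--class is a union of the polystable orbit together with strictly semistable orbits of \emph{strictly larger} dimension (Lemma~\ref{boundary1}), any non-trivial $c$--class would contain an orbit of dimension $> d_G$. I would argue this is impossible: generic $G$--orbits (those of the stable, full-measure states) already have the maximal possible dimension $d_G$ among \emph{all} orbits, not just polystable ones — this is the content of the discussion in Section~\ref{chapter2}. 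Hence no orbit exceeds $d_G$, so every $c$--class is a singleton and no strictly semistable state exists.

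For the direction ``$\Rightarrow$'' (no strictly semistable states $\Rightarrow$ all polystable orbits stable): suppose toward contradiction that there is a strictly polystable state, i.e. two critical states $[\Phi_1], [\Phi_2]$ with $\dim G[\Phi_1] < \dim G[\Phi_2]$; equivalently $G[\Phi_2]$ has maximal dimension $d_G$ and $G[\Phi_1]$ does not. Apply Lemma~\ref{fibres}: there is an open dense $V \subset \mathbb{P}(\mathcal{H})_{ss}//G$ with minimal fibre dimension $s = \dim \mathbb{P}(\mathcal{H})_{ss} - \dim \mathbb{P}(\mathcal{H})_{ss}//G$ on $V$, and fibres over the complement have dimension $\geq s$. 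The point $\pi([\Phi_2])$ lies in $V$ (its fibre contains the maximal-dimension generic orbit, hence has minimal fibre dimension $s$); I would check $s = d_G - \dim(\mathbb{P}(\mathcal{H})_{ss}//G)$ by using that $\mathbb{P}(\mathcal{H})_{ss}//G \cong [Crit]/K$ and that the generic fibre is a single $d_G$--dimensional orbit. Now the point $\pi([\Phi_1])$: if it were in $V$, its fibre would also have dimension $s = d_G$, but that fibre \emph{contains} $G[\Phi_1]$ of dimension $< d_G$, and if the fibre were a single orbit this is a contradiction with its having dimension $d_G$ — so the fibre over $\pi([\Phi_1])$ must contain an orbit of larger dimension, which is precisely a strictly semistable state, contradiction. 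If instead $\pi([\Phi_1]) \notin V$, then its fibre has dimension $\geq s = d_G > \dim G[\Phi_1]$, so again the fibre strictly contains the polystable orbit $G[\Phi_1]$ and therefore contains a strictly semistable orbit, contradiction. Either way we contradict the absence of strictly semistable states.

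The main obstacle I anticipate is the careful bookkeeping of dimensions: one must confirm that the minimal fibre dimension $s$ delivered by Lemma~\ref{fibres} really equals the maximal orbit dimension $d_G$, which requires knowing that the generic fibre is a single generic ($= $ stable) orbit and that $\dim \mathbb{P}(\mathcal{H})_{ss} = \dim \mathbb{P}(\mathcal{H})_{ss}//G + d_G$. This in turn leans on the stability of the $G$--action (Section~\ref{chapter2}), on Observation~\ref{orbitdim} (so that projective and affine orbit dimensions agree), and on the identification $\mathbb{P}(\mathcal{H})_{ss}//G \cong [Crit]/K$. A secondary subtlety is handling the degenerate case where $[Crit] = \emptyset$: then there are no polystable orbits at all, the statement ``all polystable orbits have the same dimension'' is vacuously true, and indeed there are no semistable — hence no strictly semistable — states, so both sides hold trivially; I would dispatch this case at the very start.
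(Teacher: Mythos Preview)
Your proposal is correct and follows the same strategy as the paper: the ``$\Leftarrow$'' direction via Lemma~\ref{boundary1}, and the ``$\Rightarrow$'' direction via Lemma~\ref{fibres} applied to the GIT quotient together with the key identification $s = d_G$. The paper establishes $s = d_G$ by intersecting the open dense stable locus $U$ (on which each fibre $\pi^{-1}(\pi([\Psi]))$ is the single orbit $G[\Psi]$, since a $d_G$--dimensional orbit cannot lie in another orbit's closure) with $\pi^{-1}(V)$, rather than asserting that a specific $\pi([\Phi_2])$ lies in $V$---a claim Lemma~\ref{fibres} does not furnish; once $s=d_G$ is known, every fibre has dimension $\geq d_G$ and your case split on $\pi([\Phi_1])$ is unnecessary.
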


\begin{proof} First, note that if there exists no critical state, then there also does not exist a strictly semistable state and thus the statement holds trivially.

Let us now assume that there does exist a critical state. As mentioned before, we denote by $d_G$ the maximal dimension among all $G$--orbits. As discussed in Section \ref{chapter2}, $d_G$ coincides with the maximal orbit dimension among all polystable $G$--orbits, i.e., the dimension of stable orbits. We prove both directions by contradiction. 

\textit{If:} If there exists a strictly semistable state, $[\Psi]$, we have that $\overline{G [\Psi]} \setminus G [\Psi] \neq \emptyset$ contains a polystable $G$--orbit of strictly smaller dimension than $\operatorname{dim} G [\Psi]$ (Lemma \ref{boundary1}). Therefore, there has to exist a polystable state whose orbit is of strictly smaller dimension than $d_G$, i.e., there has to exist a strictly polystable state.

\textit{Only if:}
 As explained earlier in this subsection, $\pi : \mathbb{P}(\mathcal{H})_{ss} \rightarrow  \mathbb{P}(\mathcal{H})_{ss}//G$ fulfills the conditions of Lemma \ref{fibres}. Let us first show that the minimal fibre dimension of $\pi$, $s$, equals the maximal orbit dimension, $d_G$.
To this end, let $U$ be the union of all stable orbits.
Suppose that $G [\Psi] \subset U$. Obviously $\mathrm{dim} (G [\Psi])=d_G$. Next, by Lemma \ref{boundary1} there exists no state $[\Phi]\notin G[\Psi]$ such that  $G [\Psi] \subset \overline{G [\Phi]}$ as this would imply that $\mathrm{dim} (G [\Phi])> \mathrm{dim} (G [\Psi])$, which contradicts the fact that $d_G$ is the maximal dimension. This means that the $c$--class of any $[\Psi] \in U$ is given by $G [\Psi]$, i.e. for each $[\Psi] \in U$ we have $\pi^{-1}(\pi([\Psi]))=G [\Psi]$. In particular, for each $\alpha \in \pi (U)$ we have $\mathrm{dim} \, \pi^{-1} (\alpha) = d_G$. Recall that $U$ is open and dense. Let $V$ be the open and dense set from Lemma \ref{fibres}. From continuity of the GIT quotient $\pi$ we have that $\pi^{-1} (V)$ is an open and dense set so $O \coloneqq \pi^{-1} (V) \cap U \neq \emptyset$, as two open and dense sets need to intersect. Thus, we have that $\pi(O) \cap V \neq \emptyset$ and therefore $s=d_G$ [recall that $s$ is the minimal fibre dimension from Eq (\ref{eq:otfotm})]. Hence, the dimension of any $c$--class must be greater than or equal to $d_G$.

We are now in the position to prove that if there exists a polystable $G$--orbit of smaller than maximal dimension (i.e. a strictly polystable orbit), then there exists a strictly semistable state. Suppose that there exists a $[\Phi_c] \in [Crit]$ such that $\mathrm{dim} \, G [\Phi_c] < d_G$. From the fact that the minimal fibre dimension is $d_G$ we know that $ \mathrm{dim} \, \pi^{-1} (\pi([\Phi_c])) \geq d_G$. Hence, the fibre $\pi^{-1} (\pi([\Phi_c]))$ must contain another $G$--orbit (which cannot be polystable due to the Kempf-Ness theorem), i.e. there must exist a strictly semistable state.
\end{proof}

As the dimension of the orbits is directly related to the dimension of the stabilizer (see Eq. (\ref{eq:orbitstabilizerdim})) and as the $c$--equivalence is trivial iff there exists no strictly semistable state, the following statements are equivalent to Theorem \ref{Th_main}. 

\begin{enumerate}[(i)]
\item The $c$-equivalence is non-trivial if and only if there exists a critical state whose stabilizer has strictly larger dimension than the stabilizer of a generic state.
\item There exists a strictly semistable state if and only if there exists a polystable state whose $G$--orbit has smaller than the maximal dimension $d_G$.
\item The existence of strictly semistable states is equivalent to the existence of strictly polystable states.
\end{enumerate}

Moreover, note that the proof of Theorem \ref{Th_main} can also be applied to individual $c$--classes, as stated in the following 

\begin{observation}
\label{obs:localmainth}
\leavevmode
The $c$--class of a given polystable orbit $G[\Phi_c]$ is non-trivial if and only if  $\mathrm{dim} \,G[\Phi_c]$ is smaller than the dimension of a generic orbit (i.e. $[\Phi_c]$ has more continuous symmetries). If this is the case then the dimension of the $G$--orbit of every strictly semistable state in such a $c$--class is larger than the dimension of $G[\Phi_c]$ (see also Figure \ref{fig:fig1}).
\end{observation}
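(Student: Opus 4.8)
The plan is to adapt the proof of Theorem \ref{Th_main} to the restriction of the GIT quotient $\pi$ to a single fibre, i.e. to a single $c$--class. Fix a polystable orbit $G[\Phi_c]$ with $[\Phi_c]\in[Crit]$ and let $F=\pi^{-1}(\pi([\Phi_c]))$ be its $c$--class. First I would recall from the proof of Theorem \ref{Th_main} that the minimal fibre dimension $s$ of $\pi$ equals $d_G$, the maximal orbit dimension. Since $F$ is a fibre of $\pi$, we automatically have $\mathrm{dim}\,F\geq s=d_G$. Now $F$ is a union of $G$--orbits, exactly one of which is closed, namely $G[\Phi_c]$ itself (by Lemma \ref{boundary1} and the Kempf-Ness correspondence between $c$--classes and critical states), and every other orbit in $F$ is strictly semistable.

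For the \textbf{if} direction: suppose $\mathrm{dim}\,G[\Phi_c]<d_G$. Then $G[\Phi_c]$ alone cannot exhaust $F$, because $\mathrm{dim}\,F\geq d_G>\mathrm{dim}\,G[\Phi_c]$, so $F$ contains at least one further $G$--orbit. By the Kempf-Ness theorem that orbit is not polystable (the unique polystable orbit in $F$ is $G[\Phi_c]$), hence it is strictly semistable; thus the $c$--class $F$ is non-trivial. Moreover any such strictly semistable orbit $G[\Psi]\subset F$ satisfies $G[\Phi_c]\subset\overline{G[\Psi]}$ (it is the unique closed orbit in the closure, Lemma \ref{boundary1}) and lies in its boundary, so by the dimension part of Lemma \ref{boundary1} we get $\mathrm{dim}\,G[\Psi]>\mathrm{dim}\,G[\Phi_c]$, which is the last assertion of the observation.

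For the \textbf{only if} direction: suppose the $c$--class $F$ is non-trivial, i.e. it contains a strictly semistable orbit $G[\Psi]$. Then $\overline{G[\Psi]}$ contains the polystable orbit $G[\Phi_c]$ in its boundary $\overline{G[\Psi]}\setminus G[\Psi]$, and Lemma \ref{boundary1} gives $\mathrm{dim}\,G[\Phi_c]<\mathrm{dim}\,G[\Psi]\leq d_G$, so $G[\Phi_c]$ is not of maximal dimension, i.e. $[\Phi_c]$ is strictly polystable and has strictly more continuous symmetries than a generic orbit (via Eq.~(\ref{eq:orbitstabilizerdim})). Combining the two directions proves the equivalence and the dimension statement.

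The only genuinely delicate point — and the one I would be most careful about — is the claim $\mathrm{dim}\,F\geq d_G$ for the specific fibre over $\pi([\Phi_c])$: Lemma \ref{fibres} only guarantees $\mathrm{dim}\,\pi^{-1}(\gamma)\geq s$ for $\gamma$ ranging over all of $Y$ once one knows $s$ is the \emph{minimal} fibre dimension, so I must invoke the already-established fact (from the proof of Theorem \ref{Th_main}) that $s=d_G$ and that every fibre has dimension at least $s$; no further algebraic-geometry input is needed beyond what has already been set up. Everything else is a direct bookkeeping application of Lemma \ref{boundary1} and the Kempf-Ness theorem, so I do not anticipate any serious obstacle.
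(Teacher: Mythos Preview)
Your proposal is correct and follows exactly the route the paper indicates: the paper does not give a separate proof of Observation~\ref{obs:localmainth} but merely remarks that ``the proof of Theorem~\ref{Th_main} can also be applied to individual $c$--classes,'' and you have spelled out precisely that adaptation. Your handling of the one delicate point---that Lemma~\ref{fibres} guarantees $\dim\pi^{-1}(\gamma)\geq s$ for \emph{every} $\gamma\in Y$ once $s=d_G$ is established---is exactly right and is the same ingredient the paper relies on in the ``only if'' part of Theorem~\ref{Th_main}.
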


Let us remark here that Theorem \ref{Th_main} can be extended to other groups whose action is stable and $G$ is the complexification of $K$, i.e. $G=K^{\mathbb{C}}$. 
In the subsequent section we demonstrate the usefulness of this criterion for Hilbert spaces of dimension $2\times m\times n$.

\section{The categorization of SLOCC classes of states in \texorpdfstring{$2\times m\times n$}{2 x \textit{m} x \textit{n}}}
\label{Sec_2nm}

In this section, we apply the derived criterion (Theorem \ref{Th_main}) for the existence of strictly semistable states to three partite systems in a  Hilbert space $\mathbb{C}^2 \times \mathbb{C}^m \times \mathbb{C}^n$ ($2 \leq m \leq n$). We show that strictly semistable states exist (and thus, c-equivalence is non-trivial) if and only if $m=n$ and $m \geq 4$. Then, we go beyond that and provide a full characterization of orbit types of all SLOCC classes, i.e., $G$-orbits, for $2 \times m \times n$ systems.

The outline of this section is as follows.
First, we will briefly recall the characterization of SLOCC classes in $2\times m\times n$ systems using the theory of matrix pencils \cite{ChMi10} in Subsection \ref{sec:pencils}.
Then, in Subsection \ref{sec:geometric}, we will recall a characterization of critical states in $2\times m\times n$  systems obtained in \cite{BrLe19}.
In Subsection \ref{sec:2mncequivalence}, we apply Theorem \ref{Th_main} to $2\times m\times n$ systems.
In Subsection \ref{sec:2mncharacterization}, we perform a complete characterization of the orbit types of SLOCC classes in $2\times m\times n$ systems, which we summarize in form of a flowchart in Subsection \ref{sec:flowchart}. Finally, we provide tables of SLOCC classes in $2\times m \times n$ systems equipped with orbit types for small system sizes (up to $2\times 5 \times 5$) in Appendix \ref{app:tables}.

\subsection{SLOCC-classes of states in \texorpdfstring{$2\times m\times n$}{2 x \textit{m} x \textit{n}}}
\label{sec:pencils}
The SLOCC classification in $2\times m\times n$ systems was obtained via the Kronecker Canonical Form (KCF) of linear matrix pencils \cite{ChMi10}. As we will use this characterization, we recall here briefly the idea. The reader is referred to \cite{Ga59,Kr90,ChMi10,HeGa18} for the definitions used in the categorization of matrix pencils.
A linear $m \times n$ matrix pencil $\mathcal{P}$ is a homogeneous matrix polynomial of degree 1 in variables $\mu$ and $\lambda$,
\begin{align}
\label{eq:pencil}
\mathcal{P}=\mu R+\lambda S,
\end{align}
where $R$ and $S$ are $m \times n$ matrices.
The matrix pencil associated to an arbitrary state $\ket{\psi}$ in $\mathbb{C}^2 \times \mathbb{C}^m \times \mathbb{C}^n$, with
\begin{align}
	\ket{\psi} &=  \ket{0}_A\ket{R}_{BC} + \ket{1}_A\ket{S}_{BC}  \nonumber \\
	&=\left[ \ket{0}_A (R \otimes \identity) + \ket{1}_A (S \otimes \identity) \right] \ket{\phi^+_n}_{BC},
\end{align}
where $\ket{\phi^+_n} = 1/\sqrt{n} \sum_{i=0}^{n-1} \ket{i i }$, is given by Eq. (\ref{eq:pencil}). Let us now recall how the matrix pencil corresponding to a state is transformed under SLOCC operators $A \otimes B \otimes C$. Let 
\begin{align}
\label{eq:A}
A=\begin{pmatrix} \alpha & \beta \\ \gamma & \delta \end{pmatrix},
\end{align}
then the matrix pencil transforms as
\begin{align}
\mathcal{P} &=\mu R+\lambda S \rightarrow \mathcal{P'}, \text{ where}\\
\mathcal{P'} &= B [(\alpha \mu + \gamma \lambda) R +  (\beta \mu + \delta \lambda)S] C^T.
\end{align}

A normal form of matrix pencils, the KCF, under regular operators $B$ and $C^T$ as in Eq. (\theequation) has been derived in \cite{Kr90}. It is always possible to find operators $B$ and $C^T$ which transform a matrix pencil into its KCF.
In the following, we only consider fully entangled states, i.e., states for which all single-particle reduced density matrices have full rank. Restricting to matrix pencils that correspond to fully entangled states, the KCF of a matrix pencil is the (generalized) block-diagonal form
\begin{align}
\label{eq:kcf}
\mathcal{P} = \operatorname{blockdiag} \left\{ L_{\epsilon_1},\dots, L_{\epsilon_a}, L_{\nu_1}^T,\dots, L_{\nu_b}^T,J	\right\},
\end{align}
where
\begin{equation}
\label{eq:lblock}
L_{\epsilon}=\begin{pmatrix}\lambda & \mu &  &  & \\
 & \lambda & \mu &   &  \\
 &  & \ddots & \ddots &  \\
 &  &  & \lambda & \mu
\end{pmatrix}
\end{equation}
has size $\epsilon \times (\epsilon + 1)$ \footnote{Here and in the following, matrix elements that are not displayed are $0$.} and
\begin{align}
\label{eq:jblock}
J= \operatorname{blockdiag}\{&M^{e_1^1}(x_1),\dots,  M^{e_r^l}(x_l), \nonumber\\
                                  & \qquad \qquad N^{e_1^\mu},N^{e_2^\mu},\dots, N^{e_r^\mu} \},
\end{align}
where the pairwise different $x_i$ are called the finite eigenvalues of a matrix pencil, $e_j^i$ are called size signatures of the eigenvalue $x_i$, and $l$ denotes the number of (distinct) eigenvalues. Moreover, an eigenvalue may occur several times in Eq. (\theequation) and $m_i = \sum_j e^i_j$ is called the multiplicity of the eigenvalue $x_i$. $N$-blocks correspond to an infinite eigenvalue (which we denote by $\infty$). The blocks take the form
\begin{align}
\label{eq:mblock}
N^{e_j^\mu}&=\begin{pmatrix}\mu & \lambda &  &  & \\
 & \mu & \lambda &  & \\
 &  & \ddots & \ddots &\\
 &   &  & \mu & \lambda\\
 &  &  &  & \mu
\end{pmatrix} \text{ and}\nonumber\\
M^{e_j^i}(x_i)&=\begin{pmatrix}x_i \mu + \lambda & \mu &  &  & \\
 & x_i \mu + \lambda & \mu &  &  \\
 &   & \ddots & \ddots & \\
 &   &  & x_i \mu + \lambda & \mu\\
 &  &  &  & x_i \mu + \lambda
\end{pmatrix}.
\end{align}
The sizes of these matrices are $e^\mu_j \times e^\mu_j$ and $e_j^i \times e_j^i$, respectively.
The KCF of a matrix pencil is invariant under the operators $B$, $C$, which makes it a useful tool for studying SLOCC classes. However, the operator $A$ [given in Eq. (\ref{eq:A})] changes the finite eigenvalues according to
 \begin{align}
   x_i \rightarrow \left\{\begin{array}{lr}
        \frac{\alpha x_i + \beta}{\gamma x_i + \delta}, & \text{if } \gamma x_i + \delta \neq 0\\
       \infty, & \text{if } \gamma x_i + \delta = 0\\
        \end{array}\right.,
 \end{align}
and the infinite eigenvalue according to
  \begin{align}
   \infty \rightarrow \left\{\begin{array}{lr}
        \frac{\alpha}{\gamma}, & \text{if } \gamma \neq 0\\
       \infty, & \text{if } \gamma = 0\\
        \end{array}\right..
 \end{align}
Other than that, $A$ does not change the KCF of a matrix pencil \footnote{In particular, the occurrence of $L$- and $L^T$-blocks as well as their sizes are unchanged, as are the size signatures of the eigenvalues.}. Note that with an operator $A$ it is always possible to set three of the distinct eigenvalues to arbitrary new (distinct) values. Moreover, it is always possible to bring a matrix pencil into a form without infinite eigenvalues \cite{ChMi10}. Therefore, in the following, we only consider finite eigenvalues for convenience. Degeneracies of eigenvalues cannot be changed with any finite $A$. However, we will see later on, it is indeed possible to create degeneracy when considering the limits of sequences of $SL$-operators.

All SLOCC classes of states in $2 \times m \times n$ can be characterized by the KCF of the corresponding matrix pencils \cite{ChMi10}. Moreover, the problem of listing all SLOCC classes becomes a combinatorial problem of listing all possible KCFs. Let us restate here the according theorem from \cite{ChMi10}.
\begin{theorem}[\cite{ChMi10}]
 \label{theo:fractional}
Two  $2 \times m \times n$-dimensional pure states $\ket{\psi}$ and $\ket{\phi}$ for which their corresponding matrix pencils have only finite eigenvalues $\{x_i\}$ and $\{x_i'\}$, respectively, are SLOCC equivalent if and only if the KCFs of the matrix pencils agree up to a linear fractional transformation of the eigenvalues,
\begin{equation}
\frac{\alpha x_i+ \gamma}{\beta x_i+\delta}=x_i',
\end{equation}
for some $\alpha, \beta, \gamma, \delta \in \mathbb{C}$, where $\alpha \delta - \beta \gamma \neq 0$.
\end{theorem}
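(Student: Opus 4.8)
The plan is to prove Theorem \ref{theo:fractional} by relating SLOCC equivalence of states in $2\times m\times n$ directly to equivalence of the associated matrix pencils under the group action described above. The key object is the correspondence $\ket{\psi}\leftrightarrow \mathcal{P}=\mu R+\lambda S$, and the observation already recorded in the excerpt that under $A\otimes B\otimes C$ the pencil transforms as $\mathcal{P}\to B[(\alpha\mu+\gamma\lambda)R+(\beta\mu+\delta\lambda)S]C^T$. So the first step is to record that, for fully entangled states, this assignment is a bijection between SLOCC classes and orbits of $m\times n$ pencils under the group generated by (i) left/right multiplication $\mathcal{P}\to B\,\mathcal{P}\,C^T$ with $B\in GL(m)$, $C\in GL(n)$ (passing from $SL$ to $GL$ costs only an overall scalar, which is irrelevant projectively), and (ii) the substitution $(\mu,\lambda)\to(\alpha\mu+\gamma\lambda,\beta\mu+\delta\lambda)$ coming from $A\in GL(2)$.

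Next I would invoke Kronecker's theorem: two pencils are equivalent under the strict equivalence $B\,\mathcal{P}\,C^T$ (with $B,C$ regular) if and only if they have the same KCF, i.e. the same list of minimal indices $\epsilon_j$ for the $L$-blocks, the same list $\nu_j$ for the $L^T$-blocks, and the same Jordan structure $J$ (the same eigenvalues with the same size signatures). This is the statement of \cite{Kr90}. Since we have already reduced (via $A$) to pencils with only finite eigenvalues, the remaining freedom is exactly the $GL(2)$ action on $(\mu,\lambda)$, which, read on the finite eigenvalues $x_i=$ (ratio determined by where the block $M^{e}(x_i)$ vanishes), acts by the Möbius transformation $x_i\mapsto \frac{\alpha x_i+\gamma}{\beta x_i+\delta}$ — precisely as spelled out in the two displayed cases in the text — while leaving the $L$- and $L^T$-block sizes and all size signatures untouched (this is the content of the footnote "$A$ does not change the KCF ... are the size signatures of the eigenvalues"). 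One must also check that a linear fractional transformation taking the unordered tuple $\{x_i\}$ to $\{x_i'\}$ while matching size signatures can always be realized by an honest $A\in GL(2)$ and that it stays within the "finite eigenvalue" chart, or else note that even if some $x_i$ is sent to $\infty$ one can post-compose with a further $A$ to return to finite eigenvalues, which changes nothing in the KCF data; hence without loss of generality we may assume all transformed eigenvalues are again finite.

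Putting the two ingredients together gives both directions. \emph{If} $\ket{\psi}$ and $\ket{\phi}$ are SLOCC equivalent via $A\otimes B\otimes C$, then their pencils are related by a strict equivalence composed with a $GL(2)$ substitution; strict equivalence preserves the KCF exactly, and the substitution acts on the finite eigenvalues by the Möbius map induced by $A$, so the KCFs agree up to a linear fractional transformation of the eigenvalues. \emph{Conversely}, if the KCFs agree up to $x_i\mapsto \frac{\alpha x_i+\gamma}{\beta x_i+\delta}$, pick $A$ realizing this Möbius transformation; then $A\otimes\id\otimes\id$ turns the pencil of $\ket{\psi}$ into one with the same block structure and the same eigenvalues (with matching signatures) as that of $\ket{\phi}$, hence into a strictly equivalent pencil, and a suitable $B\otimes C$ completes the conversion, so the states are SLOCC equivalent.

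The main obstacle, and the step deserving the most care, is the "only if" bookkeeping of how the $A$-action interacts with the KCF: one must be sure that the block structure (numbers and sizes of $L$, $L^T$, and Jordan blocks, together with the size signatures $e_j^i$ attached to each eigenvalue) is genuinely invariant under $A$, so that "the KCFs agree up to a linear fractional transformation of the eigenvalues" is a complete invariant and not merely a necessary condition. This is exactly where one leans on the cited fact from \cite{ChMi10,Kr90} that $A$ only permutes/relabels the eigenvalues via a Möbius map and does nothing else; a careful proof would verify this by writing $A$ as a composition of the elementary substitutions $\mu\leftrightarrow\lambda$, $\lambda\to\lambda+t\mu$, $\mu\to s\mu$ and checking block-by-block that each elementary move preserves all KCF data except for the induced relabeling of the $x_i$. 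A secondary subtlety is confirming that the passage between $SL$ and $GL$ groups, and between vectors and projective states, does not affect the equivalence — this is harmless because rescaling $R$ and $S$ by a common nonzero scalar rescales $\ket{\psi}$ and is absorbed projectively, and any $GL$ operator differs from an $SL$ one by such a scalar.
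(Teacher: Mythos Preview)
The paper does not prove this theorem; it is restated from \cite{ChMi10} and used as input, so there is no ``paper's own proof'' to compare against. Your argument is correct and is exactly the argument one assembles from the facts the paper records immediately before the theorem: the KCF is a complete invariant under strict equivalence $B\,\mathcal{P}\,C^T$ \cite{Kr90}, and the operator $A$ acts on the finite eigenvalues by the M\"obius map while leaving the $L$-, $L^T$-block data and all size signatures untouched. Your treatment of the residual subtleties (passage $SL\leftrightarrow GL$, the projective identification absorbing overall scalars, and the possibility of an intermediate infinite eigenvalue) is appropriate and complete.
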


\subsection{Critical states}
\label{sec:geometric}
The question of the existence of a critical state has been answered for general multipartite quantum systems in \cite{BrLe19}. For $2 \times m \times n$ systems, at least one SLOCC class containing a critical state exists if and only if $m=n$ or $n-m$ divides $m$ \cite{BrLe19}.
Moreover, in $2 \times m \times (m+1)$ systems, there is a unique (up to LUs) critical state \cite{BrLe19}, which is given by
\begin{align}
\ket{\psi_{2,m,m+1}} = \frac{1}{\sqrt{m+1}} \sum_{k=0}^{m-1} &\left( \sqrt{\frac{m-k}{m}} \ket{0, k, k } \right. \nonumber\\
                                                      &\left. \quad  + \sqrt{\frac{k+1}{m}} \ket{1, k, (k+1)} \right).
\end{align}
More generally, in all $2 \times m \times n$ systems for which $(n-m)$ divides $m$, there is a unique (up to LUs) critical state which is LU-equivalent to \cite{BrLe19}
\begin{align}
 \ket{\psi_{2,\frac{m}{n-m},\frac{m}{n-m}+1}}_{A\,B_1\, C_1} \otimes \ket{\phi^+_{n-m}}_{B_2\,C_2},
\end{align}
where the system held by the second (third) party is composed of the subsystems $B_1$ and $B_2$ ($C_1$ and $C_2$), respectively.
The corresponding matrix pencil is $\mathcal{P} = \bigoplus_{i=1}^{(n-m)} L_{\frac{m}{n-m}}$.

Moreover, for $2 \times n \times n$ systems, critical states are (up to LUs) of the form \cite{BrLe19}
\begin{align}
\label{eq:crit}
\frac{1}{\sqrt{n}} \sum_{i=1}^n \left(e^{i \phi_i} \sin(\theta_i/2) \ket{0} + \cos(\theta_i/2) \ket{1} \right)\ket{i-1, i-1}, \nonumber \\
= \ket{0} (D_0 \otimes \identity) \ket{\phi^+_n} + \ket{1} (D_1 \otimes \identity) \ket{\phi^+_n},
\end{align}
where $D_0 = 1/\sqrt{n} \operatorname{diag}(e^{i \phi_1} \sin(\theta_1/2), \ldots )$ and  $D_1 = 1/\sqrt{n} \operatorname{diag}(\cos(\theta_1/2), \ldots )$
for $(\theta_i, \phi_i)$ such that
\begin{align}
\sum_i \cos(\theta_i) &= 0 \nonumber\\
\sum_i \sin(\theta_i) \cos(\phi_i) &= 0 \nonumber\\
\sum_i \sin(\theta_i) \sin(\phi_i) &= 0.
\end{align}
The critical state is unique in cases $n=2$ and $n=3 $ \cite{BrLe19}. Note that it is possible to associate a geometrical meaning to the conditions in Eq. (\theequation). The pairs $(\theta_i,\phi_i)$ can be interpreted as spherical coordinates of unit vectors
\begin{align}
\label{eq:spherical}
\vec{v}_i = \begin{pmatrix}
\sin \theta_i \cos \phi_i \\
\sin \theta_i \sin \phi_i\\
\cos \theta_i
\end{pmatrix}
\end{align}
and the state in Eq. (\ref{eq:crit}) is critical if and only if the vectors $\vec{v}_i$ sum to $\vec{0}$ \cite{BrLe19}.

Note that matrix pencils corresponding to the states given in Eq. (\ref{eq:crit}) are diagonal, i.e. direct sums of $M^1(x_i)$ blocks and the eigenvalues of the matrix pencil, $x_i$, are related to the unit vectors $\vec{v}_i$ by \cite{BrLe19}
\begin{align}
x_i = e^{i \phi_i} \tan{\frac{\theta_i}{2}}.
\end{align}

Recall, however, that the operator on the first system ($A$) may change the eigenvalues of the matrix pencil \footnote{Operators $B$ and $C$ do not change the eigenvalues, though, as the KCF of a matrix pencil and thus the eigenvalues are invariant under $B$ and $C$.}. It will become important later on to understand how the vectors $\vec{v}_i$ are transformed under such an operator $A$. Let us consider the singular value decomposition, $A = V \operatorname{diag}\left(\sqrt{\frac{1}{1-\alpha}}, \sqrt{1-\alpha} \right) U$, where $\alpha \in [0,1)$. Writing $U$ in Euler decomposition and absorbing the third ($\sigma_z$) rotation into $V$ (as the rotation commutes with the diagonal matrix), we have $U = e^{- i\frac{\Theta}{2} \sigma_y} e^{- i \frac{\Phi}{2} \sigma_z}$. We are now interested in how this transformation changes the eigenvalues and thus the vectors $\vec{v}_i$. Note that we will disregard $V$, as it is only a LU transformation of the corresponding states.

Let us first consider the application of $A$ in case $U = V = \identity$, i.e., we only have the transformation $\operatorname{diag}\left(\sqrt{\frac{1}{1-\alpha}}, \sqrt{1-\alpha} \right)$, which acts on all vectors simultaneously as
\begin{align}
\label{eq:hinging}
f_{\alpha}: [0, \pi]\times [0, 2\pi] &\rightarrow [0, \pi]\times [0, 2\pi] \nonumber\\
(\theta_i, \phi_i) &\mapsto ( 2 \arctan \left(\frac{1}{1-\alpha} \tan(\theta_i / 2) \right), \phi_i).
\end{align}
 Note that---under this non-linear transformation $f_\alpha$---$\theta_i$ increases strictly with $\alpha$ for all $i$, unless $\theta_i =0$ or $\theta_i = \pi$. In the latter cases $\theta_i$ is left invariant. The transformation can be pictorially understood as hinging all $\vec{v}_i$ (except those pointing in direction $(0,0,1)^T$ or $(0,0,-1)^T$) down towards the direction $(0,0,-1)^T$ without changing their azimuthal angle, $\phi_i$.

Let us now consider the full operator $A$ also taking the unitary $U$ into account. It can be easily seen that unitary transformations by $A$ correspond to $SO(3)$ rotations of the $\vec{v_i}$ \cite{BrLe19}. Taking the $SO(3)$ operation stemming from $U$ into account, the vectors $\vec{v}_i$ are simply rotated \emph{before} hinging them down by $f_\alpha$ according to Eq. (\ref{eq:hinging}) along the $z$-axis. Equivalently, the $SO(3)$ operation stemming from $U$ can also be understood as choosing an axis $(\Theta, \Phi)$ along which the ``hinging''-transformation $f_\alpha$ is performed (on static vectors). 

\subsection{c-equivalence and existence of strictly semistable states}
\label{sec:2mncequivalence}
In this subsection, we will apply Theorem \ref{Th_main} to $2 \times m \times n$ systems to characterize those $m$, $n$ for which strictly semistable states exist.

Recall that in the cases $m \neq n$, or $m=n \leq 3$ \cite{BrLe19} there exists either no critical state, or a unique critical state (up to LUs). Combining this fact with Theorem \ref{Th_main} we obtain the following corollary.
\begin{corollary}
In $2 \times m \times n$ systems, there exists no strictly semistable state if $m \neq n$, or $m=n \leq 3$.
\end{corollary}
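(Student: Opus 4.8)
The plan is to derive the corollary as an immediate consequence of Theorem~\ref{Th_main} together with the uniqueness statements for critical states recalled from \cite{BrLe19}. Recall that Theorem~\ref{Th_main} asserts that strictly semistable states exist if and only if there exist two polystable $G$--orbits of different dimensions; equivalently (statement (iii) following the theorem), strictly semistable states exist if and only if strictly polystable states exist. So it suffices to rule out the existence of two polystable orbits of distinct dimension in the stated cases.

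First I would split into two cases according to whether a critical state exists. If $m \neq n$ and $(n-m)$ does not divide $m$, then by the criterion of \cite{BrLe19} recalled in Subsection~\ref{sec:geometric} there is no critical state at all; hence there are no polystable states, \emph{a fortiori} no strictly polystable states, and by Theorem~\ref{Th_main} no strictly semistable states. In the remaining subcases---namely $m \neq n$ with $(n-m) \mid m$, or $m = n \leq 3$---the results of \cite{BrLe19} quoted above guarantee that the critical state is \emph{unique up to LUs}: in the case $(n-m)\mid m$ it is LU-equivalent to $\ket{\psi_{2,\frac{m}{n-m},\frac{m}{n-m}+1}}_{A B_1 C_1}\otimes\ket{\phi^+_{n-m}}_{B_2 C_2}$, and in the cases $n=2,3$ the critical state in Eq.~(\ref{eq:crit}) is unique. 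Since LU operators are in particular elements of $G$, two LU-equivalent critical states lie in the same $G$--orbit, so there is exactly one polystable $G$--orbit. A single orbit trivially has ``all polystable orbit dimensions equal,'' so the hypothesis of the ``only if'' direction of Theorem~\ref{Th_main} is satisfied and no strictly semistable state exists.

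Combining the two cases, in every situation covered by the corollary the polystable orbits (if any) all share the same dimension, and Theorem~\ref{Th_main} yields the conclusion. There is essentially no obstacle here: the corollary is a bookkeeping combination of the main theorem with externally quoted facts, and the only point requiring a moment's care is noting that LU-equivalence implies $G$-equivalence so that ``unique up to LUs'' upgrades to ``a single $G$--orbit,'' which makes the dimension condition in Theorem~\ref{Th_main} vacuously true. One could also phrase the argument via equivalent statement (i): with a unique critical state there is nothing for the stabilizer of a critical state to be ``strictly larger'' than, since the generic orbit is itself that polystable orbit (recall Eq.~(\ref{eq:gendim}), $\mathrm{dim}\,G[\it{Crit}] = \mathrm{dim}\,\mathbb{P}(\mathcal{H})$), so $c$--equivalence is trivial.
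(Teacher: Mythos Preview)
Your proof is correct and follows essentially the same approach as the paper: the corollary is obtained by combining Theorem~\ref{Th_main} with the fact (quoted from \cite{BrLe19}) that in these cases there is either no critical state or a unique one up to LUs, so that there is at most one polystable $G$--orbit. Your write-up is simply more explicit than the paper's one-line justification, in particular in spelling out that LU-equivalence implies $G$-equivalence so that ``unique up to LUs'' yields a single polystable orbit.
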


In the following we will show that in all other cases, i.e. $2\times m \times n$ for $m=n \geq 4$, strictly semistable states exist and thus c-equivalence is non-trivial. To this end, it suffices to identify two critical states whose dimensions of the stabilizer group differ (see Theorem \ref{Th_main}).

It can be easily verified that the (complex) dimension of the stabilizer of states $\ket{\psi}$ of the form given in Eq. (\ref{eq:crit}) is given in terms of the degeneracies of the eigenvalues of the corresponding (diagonal) matrix pencil and reads
\begin{align}
\dim(G_{\ket{\psi}}) = \max\{3-l, 0\} + \left( -1 + \sum_{i=1}^l m_i^2 \right),
\end{align}
where $l$ denotes the number of distinct eigenvalues and $m_i$ their degeneracies. To this see this, note that symmetries of the form $\identity \otimes B \otimes C$ for diagonal matrix pencils must satisfy $C^T = B^{-1}$ as it must hold that $B \identity C^T = \identity$. Moreover, $B$ must commute with $\operatorname{blockdiag}\{x_1 \identity_{m_1}, \ldots, x_l \identity_{m_l}\}$. Hence, the number of free parameters is $ -1 + \sum_{i=1}^l m_i^2 $, which gives rise to the second summand on the righthand side of Eq. (\theequation). Moreover, any operator $A$ that gives rise to a symmetry must act in such a way that all sets of eigenvalues with coinciding multiplicity are mapped into themselfs. This gives rise to some finite freedom in choosing a permutation of the eigenvalues in an appropriate way. Then, the image of three distinct complex numbers uniquely determines an operator $A$, while any number less then three allows free complex parameters in $A$ as in the first summand on the righthand side of Eq. (\theequation).

It is clear that $\dim(G_{\ket{\psi}})$ is always minimized by $m_i=1 \ \forall i$, i.e., by having all distinct eigenvalues, as summarized in the following observation.

\begin{observation}
\label{obs:degeneracy}
The dimension of the SLOCC orbit of a state corresponding to a diagonal matrix pencil with all distinct eigenvalues is strictly larger than the orbit dimension of a state corresponding to a diagonal matrix pencil with any degeneracy in the eigenvalues.
\end{observation}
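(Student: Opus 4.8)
The plan is to deduce the statement directly from the closed form for the stabilizer dimension established just before the observation, namely $\dim(G_{\ket{\psi}}) = \max\{3-l,0\} + \bigl(-1+\sum_{i=1}^l m_i^2\bigr)$, valid for a state of the form in Eq. (\ref{eq:crit}) whose associated diagonal matrix pencil has $l$ distinct eigenvalues with multiplicities $m_1,\dots,m_l$. Both states being compared live in the same Hilbert space $\C^2\otimes\C^n\otimes\C^n$, so in both cases $\sum_{i=1}^l m_i = n$ is one and the same fixed integer. Since critical states are semistable, Observation \ref{orbitdim} together with Eqs. (\ref{eq:orbitstabilizerdim1}) and (\ref{eq:orbitstabilizerdim}) gives $\dim G[\psi] = \dim G\ket{\psi} = \dim G - \dim G_{\ket{\psi}}$ with $\dim G$ a constant; hence maximizing the orbit dimension is the same as minimizing the stabilizer dimension, and it suffices to show that the all-distinct state has \emph{strictly smaller} stabilizer dimension than any state exhibiting a degeneracy.

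Next I would isolate the two contributions to the difference of stabilizer dimensions. Because $l\le n$ always and $t\mapsto\max\{3-t,0\}$ is non-increasing, we have $\max\{3-l,0\}\ge\max\{3-n,0\}$, so the first summand never decreases when passing from the all-distinct configuration ($l=n$) to any configuration with a degeneracy. For the second summand, using $\sum_i m_i = n$ we get $\sum_{i=1}^l m_i^2 - n = \sum_{i=1}^l m_i^2 - \sum_{i=1}^l m_i = \sum_{i=1}^l m_i(m_i-1)$, which is $\ge 0$ and vanishes if and only if every $m_i=1$, i.e. exactly in the all-distinct case.

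The conclusion is then immediate. Writing $\ket{\psi_{\mathrm{dist}}}$ for the all-distinct state (so $l=n$, all $m_i=1$, and $\dim G_{\ket{\psi_{\mathrm{dist}}}}=\max\{3-n,0\}+(n-1)$) and $\ket{\psi_{\mathrm{deg}}}$ for any state with some $m_j\ge 2$, one obtains $\dim(G_{\ket{\psi_{\mathrm{deg}}}}) - \dim(G_{\ket{\psi_{\mathrm{dist}}}}) = \bigl(\max\{3-l,0\}-\max\{3-n,0\}\bigr) + \sum_{i=1}^l m_i(m_i-1) > 0$, a sum of a non-negative term and a strictly positive one. By the orbit--stabilizer relation this says precisely that $\dim G[\psi_{\mathrm{deg}}] < \dim G[\psi_{\mathrm{dist}}]$, which is the claim.

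I do not expect a genuine obstacle here: the observation is a direct corollary of the stabilizer-dimension formula together with the elementary fact $\sum_i m_i^2 \ge \sum_i m_i$ (equality iff all $m_i=1$). The only points that need care are (i) comparing states within a fixed $2\times n\times n$ system, so that $\sum_i m_i=n$ is a common constant for both configurations, and (ii) retaining the $\max\{3-l,0\}$ term for small $l$, where it is genuinely nonzero but only works in our favour. One may also note that the argument uses the state of Eq. (\ref{eq:crit}) only through the partition $(m_1,\dots,m_l)$ of $n$, and is insensitive to the continuous parameters $(\theta_i,\phi_i)$.
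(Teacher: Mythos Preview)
Your proposal is correct and follows the same route as the paper: both derive the observation directly from the stabilizer-dimension formula $\dim G_{\ket{\psi}} = \max\{3-l,0\} + (-1+\sum_i m_i^2)$ together with the orbit--stabilizer relation, your write-up simply spelling out the inequality $\sum_i m_i^2 \ge \sum_i m_i$ (equality iff all $m_i=1$) that the paper leaves as ``it is clear.'' One minor remark: your appeal to Observation~\ref{orbitdim} via semistability is not needed, since Eq.~(\ref{eq:orbitstabilizerdim1}) in $\mathcal{H}$ already gives $\dim G\ket{\psi} = \dim G - \dim G_{\ket{\psi}}$ directly, and not every diagonal-pencil state being compared is semistable.
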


Using this observation, it is easy to prove the following theorem.

\begin{theorem}
\label{thm:2mnsemistable}
In $2 \times m \times n$ systems, there exist striclty semistable states if and only if $m=n \geq 4$.
\end{theorem}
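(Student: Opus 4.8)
The plan is to apply Theorem~\ref{Th_main}, which reduces the existence of strictly semistable states to the existence of two polystable orbits of different dimensions, i.e.\ to the existence of a strictly polystable state. Since critical states exist in $2\times m\times n$ exactly when $m=n$ or $(n-m)\mid m$ \cite{BrLe19}, and in all cases with $m\neq n$ or $m=n\leq 3$ the critical state is unique up to LUs, the preceding Corollary already settles one direction: no strictly semistable states when $m\neq n$ or $m=n\leq 3$. It remains to treat $m=n\geq 4$ and show that strictly semistable states \emph{do} exist there.

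For the case $m=n\geq 4$, I would exhibit two critical states of the form~(\ref{eq:crit}) whose stabilizer dimensions differ, and invoke Theorem~\ref{Th_main}. First I would produce a critical state with \emph{all distinct} eigenvalues: by Eq.~(\ref{eq:spherical}) this amounts to choosing $n$ \emph{pairwise distinct} unit vectors $\vec v_i\in S^2$ summing to $\vec 0$, which is easily done for $n\geq 3$ (e.g.\ $n$ generic unit vectors, balanced by a small perturbation argument, or an explicit symmetric configuration with a tiny distinct-making tilt). By the stabilizer-dimension formula $\dim(G_{\ket\psi})=\max\{3-l,0\}+(-1+\sum_i m_i^2)$ with $l=n\geq 4$ and all $m_i=1$, this gives $\dim(G_{\ket\psi})=n-1$. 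Second I would produce a critical state with \emph{some degeneracy}: for $n\geq 4$ one can take, say, two antipodal pairs of coincident vectors plus $n-4$ further distinct balanced vectors — concretely, pick $\vec v_1=\vec v_2$ and $\vec v_3=\vec v_4=-\vec v_1$ and the remaining $n-4$ vectors a distinct balanced set orthogonal in spirit, so the total is $\vec 0$. This state has a genuine eigenvalue degeneracy, so by Observation~\ref{obs:degeneracy} its orbit dimension is strictly smaller than that of the all-distinct state. Hence not all polystable orbits share the same dimension, and Theorem~\ref{Th_main} yields a strictly semistable state.

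Combining the two directions gives the equivalence claimed in Theorem~\ref{thm:2mnsemistable}. The only genuinely delicate point is verifying, for $m=n\geq 4$, that a \emph{valid} critical state with a degeneracy actually exists, i.e.\ that one can simultaneously (a) satisfy the vanishing-sum constraint $\sum_i\vec v_i=\vec 0$, and (b) force at least one coincidence among the $\vec v_i$ while keeping the state fully entangled (all local ranks full) — but for $n\geq 4$ there is enough freedom to do this, as one can fix two coincident pairs of antipodal vectors to cancel among themselves and balance the remaining $n-4\geq 0$ vectors freely; full entanglement is automatic for a diagonal pencil of size $n$ with nonzero $D_0,D_1$ entries, which one can arrange by keeping all $\theta_i\in(0,\pi)$. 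I would also note that the argument in fact shows more: whenever the pencil admits both a non-degenerate and a degenerate balanced eigenvalue configuration — which for $2\times n\times n$ happens precisely for $n\geq 4$, since for $n\leq 3$ the balancing constraints force the configuration (hence the critical state) to be essentially unique — strictly semistable states appear, matching the statement.
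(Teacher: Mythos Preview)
Your approach is essentially the same as the paper's: invoke the Corollary for the ``only if'' direction, and for $m=n\geq 4$ exhibit two critical states of the form~(\ref{eq:crit}) with different orbit dimensions (one with all distinct eigenvalues, one with a degeneracy), then apply Theorem~\ref{Th_main} via Observation~\ref{obs:degeneracy}. The paper does exactly this.

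There is, however, a small gap in your explicit degenerate construction. You propose two antipodal coincident pairs $\vec v_1=\vec v_2$, $\vec v_3=\vec v_4=-\vec v_1$, together with $n-4$ further distinct vectors that are ``balanced'' among themselves. For $n=5$ this leaves a single unit vector which cannot sum to $\vec 0$ on its own, so the construction fails there. The paper avoids this by using the more flexible choice of \emph{one} vector with multiplicity $m_1=2$ together with up to $n-2$ further vectors (with multiplicities summing to $n-2$) chosen so that the weighted sum vanishes; for $n=5$ one can take, e.g., $m_1=2$ and $m_2=m_3=m_4=1$ with $\vec v_2=-\vec v_1$ and $\vec v_3,\vec v_4$ two distinct unit vectors satisfying $\vec v_3+\vec v_4=-\vec v_1$. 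With this adjustment your argument goes through and coincides with the paper's. (Your side remark about full entanglement is unnecessary: any critical state automatically has all reduced states maximally mixed, hence full local ranks.)
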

In other words, c-equivalence is non-trivial  if and only if $m=n \geq 4$.
\begin{proof}
For $m=n\geq 4$, there exist critical states corresponding to diagonal matrix pencils that do and those which do not have degenerate eigenvalues. This guarantees the existence of strictly semistable states according to Theorem \ref{Th_main}. This can be easily seen in the geometrical picture reviewed in Section \ref{sec:geometric}. For $n \geq 2$, it is always possible to find $n$ distinct unit vectors in $\mathbb{R}^3$ with multiplicity $m_i = 1$, which sum to $\vec{0}$. Moreover, for $m=n\geq 4$, it is always possible to construct one vector $\vec{v}_1$ with $m_1 = 2$ as well as up to $n-2$ additional unit vectors with multiplicities summing up to $n-2$, such that the total weighted sum of vectors yields $\vec{0}$.
\end{proof}

\subsection{Characterization of orbit types}
\label{sec:2mncharacterization}

In this subsection, we go beyond Theorem \ref{Th_main} and derive a full characterization of the orbit types of SLOCC classes in $2 \times m \times n$ systems with the help of matrix pencils. More precisely, we provide a systematic way of checking properties of a matrix pencil (corresponding to a $2 \times m\times n$ state) in order to conclude the orbit type of the corresponding SLOCC class. Let us remark here that for some choices of $m$ and $n$, the considerations above as well as Theorem \ref{thm:2mnsemistable} (see also \cite{BrLe19}) already give the full characterization, e.g., cases in which $m \neq n$, or $m=n=3$. However, in case $m=n \geq 4$, the considerations above guarantee the existence of semistable states, but do not yet give a full characterization of the orbit type of all SLOCC classes for the considered system sizes.
In the following, we will thus employ two observations and a theorem which will allow extending Theorem \ref{thm:2mnsemistable} to a full characterization of the orbit types.
Remarkably, we will show that the orbit type does not depend on the eigenvalues of the matrix pencil, but only on their multiplicities. If this had not been the case, it would have seemed unlikely that a self-contained characterization of orbit types is possible.
In the remainder of the section, we will wlog always consider matrix pencils in KCF and states such that the corresponding matrix pencil is in KCF. As mentioned before, this is possible as operators $B$ and $C^T$ which transform the matrix pencil to KCF always exist.

Let us start by considering the case $m \neq n$. In this case, Theorem \ref{thm:2mnsemistable} states that there do not exist strictly semistable states. Moreover, as discussed in Section \ref{sec:geometric}, critical states do only exist if $n-m$ divides $m$ \cite{BrLe19}. Furthermore, if $n-m$ does divide $m$, there exists a unique stable SLOCC class corresponding to the matrix pencil $\mathcal{P} = \bigoplus_{i=1}^{(n-m)} L_{\frac{m}{n-m}}$. Thus, for $n \neq m$, there is either exactly one stable SLOCC class or no stable SLOCC class, which implies that there neither exist any strictly polystable, nor strictly semistable classes. Thus, all SLOCC classes except the aforementioned stable class are in the null-cone. Hence, the characterization of orbit types is already complete for the case $m \neq n$.

In order to tackle the open cases (within $2 \times m \times n$ systems such that $n=m$) let us continue with an observation about matrix pencils containing an $L_\epsilon$- and an $L^T_\nu$-block.

\begin{observation}
\label{obs:epsilon_nu_null-cone}
An SLOCC class that corresponds to a matrix pencil whose KCF contains an $L_\epsilon$ and an $L^T_\mu$ block is in the null-cone.
\end{observation}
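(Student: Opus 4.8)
The plan is to show that a state whose pencil contains both an $L_\epsilon$-block and an $L_\nu^T$-block admits a one-parameter family of $SL$-operators driving it to $0$, hence lies in the null-cone. The key structural fact is that an $L_\epsilon$-block, being of size $\epsilon\times(\epsilon+1)$, has a nontrivial right kernel for every value of $(\mu,\lambda)$, while an $L_\nu^T$-block, of size $(\nu+1)\times\nu$, has a nontrivial left kernel (cokernel). I will exploit this asymmetry: the ``extra'' column of the $L_\epsilon$-block and the ``extra'' row of the $L_\nu^T$-block can be simultaneously scaled down using diagonal $SL$-operators on the $B$ and $C$ systems, while the compensating scaling on the $A$-system (a $2\times 2$ determinant-one matrix) can be chosen to not interfere.

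Concretely, I would first reduce to the case where the pencil is exactly $L_\epsilon \oplus L_\nu^T$ (the remaining blocks can be ignored: one can always restrict attention to the subspace they span, since a state is in the null-cone as soon as its restriction to a coordinate subspace — obtained by a product of projections, which are limits of $SL$-operators up to normalization — is; alternatively, one checks directly that the construction below, acting as the identity on all other blocks after suitable rescaling, still sends the whole state to $0$ in the projective limit). Next, within the $L_\epsilon$-block indexed by basis vectors $\ket{0}_B,\dots,\ket{\epsilon-1}_B$ and $\ket{0}_C,\dots,\ket{\epsilon}_C$, and the $L_\nu^T$-block on a disjoint set of basis vectors, I would write down explicit diagonal matrices $B(t)=\operatorname{diag}(t^{a_i})$, $C(t)=\operatorname{diag}(t^{b_j})$ with $\sum a_i=\sum b_j=0$ (to keep determinant one), chosen so that every monomial appearing in $\ket{R}_{BC}$ and $\ket{S}_{BC}$ — i.e. every nonzero entry of the two blocks — gets a strictly positive power of $t$, letting $t\to 0$. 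The point where the $L$ and $L^T$ blocks must coexist is exactly what makes this possible: an $L_\epsilon$-block alone cannot be annihilated this way (it is part of a critical/semistable configuration when $n\neq m$ forces it), and likewise for $L^T$ alone, but their union has enough ``slack'' in the dimension count ($\epsilon$ rows vs.\ $\epsilon+1$ columns on one side, $\nu+1$ rows vs.\ $\nu$ columns on the other) that a common grading with the required sign pattern exists. I would verify the grading exists by a short linear-algebra / Farkas-type argument, or simply exhibit it: e.g.\ put weight $+1$ on the last $C$-index of the $L_\epsilon$-block, weight $-1$ on the first, interpolating linearly, and mirror this on the $B$-indices of the $L_\nu^T$-block, with the $A$-operator taken to be $\operatorname{diag}(s,s^{-1})$ balancing the $\ket0_A$ versus $\ket1_A$ parts; then $R$ and $S$ pick up the powers of $s$ and $t$, and one chooses the relative rate of $s\to 0$, $t\to0$ so that all terms vanish.

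Finally I would phrase the conclusion in terms of the normal-form / Kempf--Ness picture already set up in the excerpt: exhibiting a sequence $g_k\in G$ with $g_k\ket\psi\to 0$ is exactly the definition of the null-cone $\mathcal N$ (case (i) in Section~\ref{sub:normal}), so the claim follows immediately once the sequence is constructed. Alternatively, and perhaps more cleanly for a short proof, one can argue via $G$-invariant polynomials: I would show that every homogeneous $G$-invariant polynomial vanishes on such a state, using that the only SL-invariants of a $2\times m\times n$ system are built from the hyperdeterminant-type contractions that necessarily vanish when the pencil has a mixed $L$/$L^T$ structure (because such a pencil has a common kernel vector on the $C$-side and a common cokernel vector on the $B$-side, forcing the relevant determinantal invariants to zero).

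The main obstacle I anticipate is making the reduction to the single pair $L_\epsilon\oplus L_\nu^T$ fully rigorous while controlling the other KCF blocks: a priori the other blocks might blow up under the same $SL$-operators, so one must either (a) check that the chosen grading can be extended by $0$ on all other basis vectors while remaining determinant-one and non-negative on all surviving monomials — which requires a small global bookkeeping argument — or (b) invoke the coordinate-subspace argument, which itself needs the (standard but worth citing) fact that restricting a semistable state to a coordinate subspace on which it is unstable forces the whole state into the null-cone. Getting this reduction clean, rather than the explicit one-parameter subgroup, is where the real care is needed; the construction of the degenerating one-parameter subgroup itself is routine once the block structure is isolated.
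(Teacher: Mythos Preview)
Your overall strategy---construct a one-parameter subgroup in $G$ driving the state to $0$---matches the paper's, and you correctly identify the real difficulty: controlling the remaining KCF blocks while damping the $L_\epsilon\oplus L_\nu^T$ part. However, the two escape routes you sketch for that difficulty both have problems, and the paper resolves it by a concrete trick you have not found.

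Your option (b), reducing to the pair $L_\epsilon\oplus L_\nu^T$ by ``projecting to a coordinate subspace,'' does not work. Projections are not $SL$-operators, and the assertion that ``a state is in the null-cone as soon as its restriction to a coordinate subspace is'' is false: any polystable state has plenty of coordinate projections that land in the null-cone (or are zero). The null-cone membership is a Hilbert-space statement ($g_k\ket\psi\to 0$ in norm), not a projective one, so limits of $SL$-operators ``up to normalization'' are irrelevant here. Your option (a), extending the grading by zero on the other blocks, also falls short as stated: weight zero leaves those blocks \emph{constant}, not tending to zero, so the global state does not go to $0$.

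The paper's resolution is clean and uses only $\identity\otimes B_\alpha\otimes C_\alpha$ (no action on the first system is needed). First, diagonal operators $B'_\alpha,C'_\alpha$ on the $(\epsilon+\nu+1)$ rows/columns of $L_\epsilon\oplus L_\nu^T$ are chosen so that the diagonal entries are left invariant while the off-diagonal entries are damped; in particular the $\epsilon$th column (which in $L_\epsilon\oplus L_\nu^T$ has only an off-diagonal entry) is damped by $e^{-\alpha}$. Then a second diagonal operator $C''_\alpha$ on \emph{all} $n$ columns damps every column by $e^{-\alpha/2(n-1)}$ at the cost of boosting the $\epsilon$th column by $e^{\alpha/2}$. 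Since the $\epsilon$th column was already damped faster than this boost, the net effect is that every column---including those of the remaining blocks $\tilde{\mathcal P}$, on which $B_\alpha$ acts as identity---tends to zero. This ``over-damp one column, then trade its slack to damp everything else'' is the missing idea; once you have it, the bookkeeping is a two-line determinant check. Your alternative via $G$-invariant polynomials would require knowing the generators of the invariant ring for $2\times m\times n$, which is a much heavier input than the explicit one-parameter subgroup.
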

\begin{proof}
Let us denote the  matrix pencil (in KCF) representing the SLOCC class in question by $\mathcal{P}$ and the corresponding state by $\ket{\psi}$. We will prove the observation by explicitly constructing a sequence of SL-operators of the form $\identity \otimes B_\alpha \otimes C_\alpha$ such that $\lim_{\alpha \rightarrow \infty} \identity \otimes B_\alpha \otimes C_\alpha \ket{\psi} = 0$. Wlog let us assume an ordering of the blocks such that the matrix pencil has the form $\mathcal{P} = L_\epsilon \oplus L^T_\nu \oplus \tilde{\mathcal{P}}$, where $\tilde{\mathcal{P}}$  is of size $[m- (\epsilon +\nu +1)] \times [n- (\epsilon +\nu +1)]$ and contains the remaining blocks, whose form does not matter.

We define the $(\epsilon+\nu+1) \times (\epsilon+\nu+1)$ matrices
\begin{align}
B_\alpha' &= \sum_{i=0}^{\epsilon + \nu} e^{\left( i - \frac{\epsilon + \nu}{2} \right) \alpha}  \ket{i}\bra{i} \text{ and} \\
C_\alpha' &= \sum_{i=0}^{\epsilon + \nu}  e^{\left(  \frac{\epsilon + \nu}{2} - i \right) \alpha}  \ket{i}\bra{i}
\end{align}
and the $n \times n$ matrix
\begin{align}
C_\alpha'' = e^{-\frac{\alpha}{2(n-1)}} \identity_{n} + e^{\frac{n \alpha}{2(n-1)}} \ket{\epsilon}\bra{\epsilon}.
\end{align}
The matrices of interest are then defined as $B_\alpha =  B_\alpha' \oplus \identity_{m-(\epsilon + \nu + 1)} $ and $C_\alpha =C_\alpha'' \left( C_\alpha' \oplus \identity_{n-(\epsilon + \nu + 1)} \right)$. It can be easily verified that $\det(B_\alpha) = \det(C_\alpha) = 1$. Moreover, $\lim_{\alpha \rightarrow \infty} B_\alpha \mathcal{P} C_\alpha^T = 0$. Thus, also $\lim_{\alpha \rightarrow \infty} \identity \otimes B_\alpha \otimes C_\alpha \ket{\psi} = 0$. This proves that any such SLOCC class is in the null-cone.
\end{proof}

The intuition behind the construction is the following. The purpose of $B'$ and $C'$ is to damp the off-diagonal entries of the $L_\epsilon \oplus L^T_\nu$ part in $\mathcal{P}$, leaving the diagonal entries invariant. Thus, the whole $\epsilon$th column of $\mathcal{P}$ is damped, i.e., multiplied by $e^{-\alpha}$, as in this column, only a non-diagonal entry is present. The purpose of $C_\alpha''$ is to damp all the remaining columns through multiplying them with $e^{-\frac{\alpha}{2(n-1)}}$ at the cost of boosting the $\epsilon$th column through multiplying it by $e^{\alpha/2}$. It can be easily seen that $B_\alpha' \otimes C_\alpha'$ damps the entry in the $\epsilon$th column faster than it is boosted by $C_\alpha''$. Hence, overall, the whole matrix pencil $\mathcal{P}$ is damped.

Let us remark here, the also for the previously considered case ($m \neq n$), it can be analogously seen that matrix pencils which contain an $L_{\epsilon_1}$- and an $L_{\epsilon_2}$-block with $\epsilon_1 \neq \epsilon_2$ correspond to SLOCC classes in the null-cone (the same holds for $L^T$-blocks of different size). Moreover, the same holds true for matrix pencils that contain an $M$-block (with arbitrary eigenvalue and arbitrary size $e$) as well as either an $L_\epsilon$- or an $L^T_\nu$-block. Explicit constructions of sequences of SL-operations that, in the limit, map states corresponding to such matrix pencils to 0 are provided in Appendix \ref{app:nullcone2mn}.

For $m=n$, however, Observation \ref{obs:epsilon_nu_null-cone} already allows going beyond previous results. There, whenever an $L$-block is present, it must be accompanied by an $L^T$-block (and vice versa) due to the dimension of the matrix pencil. Hence, Observation \ref{obs:epsilon_nu_null-cone} applies and shows that any class corresponding to a matrix pencil which contains such blocks is in the null-cone. It thus only remains to classify matrix pencils that are solely composed of $M$-blocks. The following observation reduces the problem to characterizing diagonal matrix pencils.

\begin{observation}
\label{obs:diagonalreduction}
Let $C$ be an SLOCC class that corresponds to a KCF $\mathcal{P} = \bigoplus_{i,j} M^{e^i_j}(x_i)$ with at least one $e^i_j > 1$. Consider the matrix pencil  $\mathcal{P'} = \bigoplus_{i}  \bigoplus_{j'=1}^{m_i} M^{1}(x_i)$, which is obtained from $\mathcal{P}$ by deleting all non-diagonal elements of $\mathcal{P}$ (recall that $m_i = \sum_j e^i_j$). Then, the following holds.
\begin{enumerate}[(i)]
\item If the SLOCC class corresponding to $\mathcal{P'}$ is semistable, then $C$ is strictly semistable.
\item If the SLOCC class corresponding to $\mathcal{P'}$ is in the null-cone, then $C$ is in the null-cone.
\end{enumerate}
\end{observation}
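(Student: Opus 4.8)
The plan is to prove Observation~\ref{obs:diagonalreduction} by explicitly relating the $G$-orbit closure of the state corresponding to $\mathcal{P}$ to that of the state corresponding to the diagonalized pencil $\mathcal{P'}$. The key idea is that $\mathcal{P'}$ lies in the closure of the SLOCC class of $\mathcal{P}$: one can construct a sequence of $SL$-operators of the form $\identity \otimes B_\alpha \otimes C_\alpha$ that damps the off-diagonal entries of each $M^{e^i_j}(x_i)$ block while leaving the diagonal entries $x_i\mu + \lambda$ invariant, exactly in the spirit of the $B_\alpha', C_\alpha'$ construction in the proof of Observation~\ref{obs:epsilon_nu_null-cone} but \emph{without} the extra boosting factor $C_\alpha''$ — within a single $M$-block of size $e$, the conjugation by $\operatorname{diag}(e^{(i - (e-1)/2)\alpha})$ on the $B$-side and its inverse on the $C$-side (which has determinant $1$) scales the superdiagonal by $e^{-\alpha}$ and fixes the diagonal. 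Hence $\lim_{\alpha\to\infty} \identity \otimes B_\alpha \otimes C_\alpha \ket{\psi_{\mathcal{P}}} = \ket{\psi_{\mathcal{P'}}}$, so $\ket{\psi_{\mathcal{P'}}} \in \overline{G\ket{\psi_{\mathcal{P}}}}$, and since at least one $e^i_j > 1$ the two pencils are not SLOCC-equivalent (the size signatures differ, and by Theorem~\ref{theo:fractional} and the invariance of the KCF these are SLOCC invariants), so in fact $\ket{\psi_{\mathcal{P'}}}$ lies in the \emph{boundary} $\overline{G\ket{\psi_{\mathcal{P}}}} \setminus G\ket{\psi_{\mathcal{P}}}$.

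With this inclusion established, part~(ii) is immediate: if the class of $\mathcal{P'}$ is in the null-cone, then $0 \in \overline{G\ket{\psi_{\mathcal{P'}}}} \subseteq \overline{G\ket{\psi_{\mathcal{P}}}}$, so $C$ is in the null-cone as well. For part~(i), suppose the class of $\mathcal{P'}$ is semistable. Then it is either polystable or strictly semistable; in either case its closure contains a critical vector $\ket{\Phi}\in Crit$ (by the Kempf-Ness theorem and the discussion following Lemma~\ref{boundary}). Since $\overline{G\ket{\psi_{\mathcal{P'}}}} \subseteq \overline{G\ket{\psi_{\mathcal{P}}}}$, that same critical vector lies in $\overline{G\ket{\psi_{\mathcal{P}}}}$, which shows $C$ is semistable (not in the null-cone). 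It remains to rule out that $C$ is polystable: if $C$ were polystable, its SLOCC class would itself contain a critical vector $\ket{\Phi'}$, and then $\overline{G\ket{\psi_{\mathcal{P}}}} = G\ket{\psi_{\mathcal{P}}}$ would be \emph{closed} (again Kempf-Ness). But $\ket{\psi_{\mathcal{P'}}} \in \overline{G\ket{\psi_{\mathcal{P}}}} \setminus G\ket{\psi_{\mathcal{P}}}$ is a nonempty boundary point, contradicting closedness. Hence $C$ is strictly semistable, proving~(i).

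I expect the main obstacle to be the first step — verifying carefully that the diagonalizing sequence really does converge to $\ket{\psi_{\mathcal{P'}}}$ and that the operators have determinant $1$ on the nose. One has to be careful that an $M^{e}(x)$-block contributes off-diagonal entries only on the first superdiagonal, so a single diagonal congruence $B_\alpha' \mathcal{P} (C_\alpha')^T$ with $B_\alpha' = \bigoplus_j \operatorname{diag}_{0\le k \le e^i_j - 1}(e^{(k - (e^i_j-1)/2)\alpha})$ and $C_\alpha' = (B_\alpha')^{-1}$ does the job block-by-block; the determinant is $1$ because within each block the exponents are symmetric about $0$ and $B_\alpha', C_\alpha'$ are mutually inverse. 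One must also confirm that applying $\identity \otimes B_\alpha \otimes C_\alpha$ to $\ket{\psi}$ indeed implements $\mathcal{P} \mapsto B_\alpha \mathcal{P} C_\alpha^T$ on the associated pencil (this is the transformation rule recalled in Section~\ref{sec:pencils} with $A = \identity$), and that no infinite eigenvalues reappear — but since we work entirely within $M$-blocks and never touch the $A$-system, the eigenvalues $x_i$ are untouched. A secondary point worth spelling out is why $\mathcal{P}$ and $\mathcal{P'}$ are genuinely in different SLOCC classes when some $e^i_j>1$: the multiset of size signatures $\{e^i_j\}$ is a KCF invariant (invariant under $B,C$ by construction, and under $A$ by the footnote after the eigenvalue transformation rules), and it differs between $\mathcal{P}$ and $\mathcal{P'}$, so by Theorem~\ref{theo:fractional} the classes are distinct; this is what upgrades ``$\ket{\psi_{\mathcal{P'}}}\in\overline{G\ket{\psi_{\mathcal{P}}}}$'' to ``$\ket{\psi_{\mathcal{P'}}}$ is a boundary point,'' which is the crux of ruling out polystability of $C$ in part~(i).
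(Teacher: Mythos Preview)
Your proposal is correct and follows essentially the same approach as the paper. The paper constructs a single global diagonal operator $B_\alpha = \operatorname{diag}(e^{(i-(n-1)/2)\alpha})_{i=0}^{n-1}$ with $C_\alpha = B_\alpha^{-1}$ rather than your block-wise version, but the effect is identical (diagonal entries fixed, superdiagonal damped by $e^{-\alpha}$); your argument for part~(i) spells out the semistability of $C$ and the closedness contradiction more explicitly than the paper, which simply notes that the boundary is nonempty, and for part~(ii) you use $0\in\overline{G\ket{\psi_{\mathcal{P}'}}}\subseteq\overline{G\ket{\psi_{\mathcal{P}}}}$ where the paper equivalently invokes constancy of $SL$-invariant polynomials along the orbit closure.
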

\begin{proof}
 Let us denote the state corresponding to $\mathcal{P}$ by $\ket{\psi}$. As before, we will prove the two statements by explicitly constructing a sequence of SL-operators of the form $\identity \otimes B_\alpha \otimes C_\alpha$ such that $\lim_{\alpha \rightarrow \infty} \identity \otimes B_\alpha \otimes C_\alpha \ket{\psi} \propto \ket{\psi'}$, where by $\ket{\psi'}$ we denote the state corresponding to $\mathcal{P'}$.
To this end consider the $n \times n$ matrices
\begin{align}
B_\alpha &= \sum_{i=0}^{n-1} e^{\left( i - \frac{n-1}{2} \right) \alpha}  \ket{i}\bra{i} \text{ and} \\
C_\alpha &= \sum_{i=0}^{n-1}  e^{\left(  \frac{n-1}{2} - i \right) \alpha}  \ket{i}\bra{i}.
\end{align}
It can be easily verified that $\det(B_\alpha) = \det(C_\alpha) = 1$ and that $\lim_{\alpha \rightarrow \infty} B_\alpha \mathcal{P} C^T_\alpha = \mathcal{P'}$.

Let us first proof statement (i).
Due to the considerations above, we have $G \ket{\psi'} \subseteq \overline{G \ket{\psi}}$. Moreover, as the size signatures of the eigenvalues of $\mathcal{P}$ and of $\mathcal{P'}$ are different, $\ket{\psi}$ and $\ket{\psi'}$ are not in the same SLOCC class. Thus, $G \ket{\psi'} \subseteq \overline{G \ket{\psi}} \setminus G \ket{\psi}$ and therefore the latter set is not empty and $\ket{\psi}$ is strictly semistable.

Let us now prove statement (ii). $\ket{\psi'}$ being in the null-cone implies that all SL-invariant polynomials vanish on $\ket{\psi'}$. However, as before we have that $\det(B_\alpha) = \det(C_\alpha) = 1$ and $\lim_{\alpha \rightarrow \infty} B_\alpha \mathcal{P} C^T_\alpha = \mathcal{P'}$. Thus, it also holds that all SL-invariant polynomials vanish on $\ket{\psi}$, i.e., $\ket{\psi}$ is in the null-cone.
\end{proof}

Due to the results above, it remains to characterize the SLOCC classes corresponding to diagonal matrix pencils. In order to obtain such a characterization, let us first state a lemma, which we prove in Appendix \ref{app:geometric}.

\begin{lemma}
\label{lemma:geometric}
Let $\{\vec{v}_i\}_{i=1}^l$ be a set of $l$ distinct unit vectors in $\mathbb{R}^3$ with associated multiplicities $m_i \in \mathbb{N}$ sorted in descending order such that $\sum_{i=1}^l m_i = n$ for some $n \geq 3$ and $m_1 < \sum_{i =2}^l m_i$. Let $(\theta_i, \phi_i)$ denote the spherical coordinates of $\vec{v}_i$, i.e.,
\begin{align}
\vec{v_i} = \begin{pmatrix}
\sin \theta_i \cos \phi_i \\
\sin \theta_i \sin \phi_i\\
\cos \theta_i
\end{pmatrix}.
\end{align}
Moreover, let $f_\alpha$ be a (non-linear) transformation acting on the spherical coordinates of the vectors as
\begin{align}
f_{\alpha}: [0, \pi]\times [0, 2\pi] &\rightarrow [0, \pi]\times [0, 2\pi] \nonumber\\
(\theta_i, \phi_i) &\mapsto   ( 2 \arctan \left(\frac{1}{1-\alpha} \tan(\theta_i / 2) \right), \phi_i),
\end{align}
where $\alpha \in [0,1)$. Then, it is always possible to transform the vectors $\{\vec{v}_i\}_{i=1}^l$ to new set of vectors $\{\vec{v'}_i\}_{i=1}^l$ such that $\sum_i m_i \vec{v'}_i = 0$ through some simultaneous $SO(3)$ rotation of the vectors $\vec{v}_i$ followed by $f_{\alpha}$.
\end{lemma}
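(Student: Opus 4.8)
The plan is to reduce the claim to a one-parameter existence statement and then solve that by a continuity/intermediate-value argument. First I would set up the target function. After a global $SO(3)$ rotation by some $Q$ the vectors become $Q\vec v_i$, and then $f_\alpha$ hinges them all towards the south pole $-\hat z$ of the chosen frame. So define
\begin{align}
\vec F(Q,\alpha) = \sum_{i=1}^l m_i\, g_\alpha\!\left(Q\vec v_i\right),
\end{align}
where $g_\alpha$ is the map on $S^2$ induced by $f_\alpha$ (it fixes $\pm\hat z$ and, for $\alpha\to 1^-$, pushes every other point to $-\hat z$). The goal is to find $(Q,\alpha)$ with $\vec F(Q,\alpha)=\vec 0$. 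I would first dispose of trivial/degenerate configurations (e.g.\ all $\vec v_i$ already equal, or all lying on a great circle) where the statement is either immediate or a lower-dimensional version of the same argument; the genuine case is when the vectors are in ``general position''.

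The key idea is to look at the $z$-component and the transverse part separately, and to exploit a boundary/degree argument in $\alpha$. For fixed $Q$, as $\alpha\to 1^-$ we have $\vec F(Q,\alpha)\to (-\sum_i m_i)\,\hat z_Q = -n\,\hat z_Q$ (the south pole of the $Q$-frame), since by hypothesis $m_1 < \sum_{i\ge 2}m_i$ guarantees that \emph{no} vector $\vec v_i$ has weight $\ge n/2$, so in particular not all the weight can sit at the north pole — hence at least one vector genuinely hinges down and, more importantly, the limiting configuration has all nonzero-weight images at the south pole. At $\alpha=0$ we have $\vec F(Q,0)=\sum_i m_i Q\vec v_i = Q\,\vec w$ where $\vec w=\sum_i m_i\vec v_i$ is a fixed vector of some length $\rho\le n$ (and we may assume $\rho>0$, else we are already done with $\alpha=0$). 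So along the path $\alpha:0\to 1$ the endpoint of $\vec F(Q,\cdot)$ moves from $Q\vec w$ to $-n\hat z_Q$. I would choose $Q$ so that $Q\vec w$ points into the \emph{upper} hemisphere of the $Q$-frame (possible since $\hat z_Q$ is at our disposal): then the $z$-component of $\vec F(Q,\alpha)$ starts positive and ends at $-n<0$, so by the intermediate value theorem there is $\alpha^\star$ with $F_z(Q,\alpha^\star)=0$. That kills one of the three components for a whole $2$-parameter family of rotations $Q$ (those fixing $\hat z_Q$ and those tilting it). Finally I would run a degree/winding argument on the remaining transverse component: as $Q$ ranges over the rotations that keep $\vec w$ in the closed upper hemisphere, the transverse part of $\vec F(Q,\alpha^\star(Q))$ sweeps out a loop that, by the south-pole limit on the boundary, winds once around the origin — forcing a zero of the full $\vec F$. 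Equivalently one can phrase this as: the map $(Q,\alpha)\mapsto \vec F/|\vec F|$ from a suitable solid region of the parameter space to $S^2$ has nonzero degree, so $\vec F$ must vanish inside.

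The main obstacle I expect is making the degree/winding step rigorous rather than heuristic: one has to verify that $\vec F$ is nonvanishing on an appropriate boundary piece (so that the degree is defined) and compute that degree using the explicit south-pole behaviour of $f_\alpha$. This is where the hypotheses $n\ge 3$ and $m_1<\sum_{i\ge 2}m_i$ really get used — the latter is exactly what rules out the obstruction ``more than half the mass is pinned at the north pole, so it can never reach zero'', and $n\ge3$ is needed so that a balanced configuration exists at all. I would also take care with continuity of $\alpha^\star(Q)$ (it need not be unique; one can select it as, say, the infimum, and check it is continuous where the $z$-component crosses zero transversally, perturbing $Q$ slightly if necessary). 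A cleaner alternative, which I would try first, is to avoid choosing $\alpha^\star$ as a function of $Q$ and instead apply a topological fixed-point theorem (Brouwer, or a mod-$2$ degree argument) directly to $\vec F$ on the compact parameter region $SO(3)\times[0,1-\delta]$, using the $\delta\to 0$ limit only to pin down the boundary behaviour; the defined-ness of the degree on the relevant boundary faces is still the crux.
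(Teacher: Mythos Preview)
Your approach is genuinely different from the paper's. You attempt a topological/degree argument: kill the $z$-component by the intermediate value theorem in $\alpha$, then eliminate the transverse part by a winding argument over the rotation $Q$. The paper instead runs a variational argument: it defines $f(\alpha,\Theta,\Phi)=\left|\sum_i m_i\vec v_i'\right|^2$ on $D=[0,1)\times[0,\pi]\times[0,2\pi]$, extends it to a function $\tilde f$ on the compactified domain $\tilde D=[0,1]\times[0,\pi]\times[0,2\pi]$, proves $\tilde f$ is lower semi-continuous, invokes Weierstrass to get a minimum, shows the minimum cannot sit on the face $\alpha=1$, and finally shows (by a first-order expansion around the identity) that any nonzero value of $f$ can be strictly decreased, forcing the minimum to be zero.

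The concrete gap in your proposal is exactly the one you flag but do not close: the boundary behaviour of $\vec F$ as $\alpha\to 1^-$ is \emph{not} uniform in $Q$, and $\vec F$ does not extend continuously to $\alpha=1$. If the chosen axis $\hat z$ coincides with some $Q\vec v_i$ then that vector never hinges down; the limit is $(2m_i-n)\hat z$ rather than $-n\hat z$, and for $Q$ approaching such an alignment the convergence to $-n\hat z$ is arbitrarily slow. So on $SO(3)\times\{1-\delta\}$ you do not have $\vec F$ uniformly near $-n\hat z$ for any fixed $\delta$, and the degree on that boundary face is not obviously defined or computable. Your proposed fixes (select $\alpha^\star$ as an infimum, perturb $Q$, take $\delta\to 0$) all run into the same non-uniformity.

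The paper sidesteps this cleanly by working with the scalar $|\vec F|^2$ rather than the vector $\vec F$: even though $\vec F$ is discontinuous at $\alpha=1$, the extension $\tilde f$ with values $n^2$ (generic direction) and $(n-2m_i)^2$ (axis through $\vec v_i$) is lower semi-continuous, which is enough for the extreme value theorem. The hypothesis $m_1<\sum_{i\ge 2}m_i$ then enters not in a boundary computation but in two places: it makes $\tilde f(1,\Theta,\Phi)$ strictly larger than some interior value (so the minimum is attained at $\alpha<1$), and it ensures the ``decrease'' step at any interior point with $f>0$ actually produces a strict decrease. If you want to salvage a degree-type proof, you would need an argument that the bad set of alignments is codimension-two in the $(\Theta,\Phi)$-sphere and can be excised without changing the degree; the paper's variational route avoids this bookkeeping entirely.
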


Note that the operation considered in the lemma is exactly how an operator $A$ (up to a final local unitary) transforms a state whose corresponding matrix pencil in KCF has eigenvalues that correspond to the vectors $\{\vec{v}_i\}_{i=1}^l$  (as in the geometrical picture discussed in Section \ref{sec:geometric}) into a new state whose corresponding matrix pencil in KCF has eigenvalues that correspond to the vectors $\{\vec{v'}_i\}_{i=1}^l$. Let us remark here, that this lemma applied in our context shows that diagonal matrix pencils whose eigenenvalues with degeneracies $m_i$ are such that every $m_i$ is strictly smaller than the sum of the remaining $m_j$, correspond to SLOCC classes that contain a critical state. With the help of this lemma, we can now perform the characterization of SLOCC classes corresponding to diagonal matrix pencils as shown in the following theorem. Note, that as we only consider fully entangled states, matrix pencils that we consider have at least two distinct eigenvalues (otherwise, the state would be bi-separable).

\begin{theorem}
\label{thm:diagtype}
Let $C$ be an SLOCC class that corresponds to a diagonal matrix pencil of size $n \times n$ with $l \geq 2$ distinct eigenvalues $x_i$. Let $m_i$ denote the degeneracies of the eigenvalues $x_i$ in descending order. Then, the following holds.
\begin{enumerate}[(i)]
\item If $m_1 > \sum_{i=2}^l m_i$, then $C$ is in the null-cone.
\item If $l = 2$ and $m_1 = m_2$, then $C$ is strictly polystable (except when $n=2$, which corresponds to the 3-qubit GHZ state, which is stable).
\item If $l > 2$ and $m_1 = \sum_{i=2}^l m_i$, then $C$ is strictly semistable.
\item Otherwise, i.e., if $l>2$ and $m_1 < \sum_{i=2}^l m_i$, $C$ is polystable. Moreover, $C$ is stable if and only if all eigenvalues are distinct ($l = n$) and $C$ is strictly polystable if and only if there is any degeneracy ($l < n$).
\end{enumerate}
\end{theorem}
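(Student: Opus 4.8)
The plan is to prove the four cases of Theorem \ref{thm:diagtype} by reducing everything to the geometric picture of Section \ref{sec:geometric}, where a diagonal $n\times n$ matrix pencil corresponds to a configuration of unit vectors $\vec v_i\in\mathbb R^3$ with multiplicities $m_i$, and the action of the operator $A$ on the first qubit (up to a final local unitary) is an $SO(3)$ rotation followed by the hinging map $f_\alpha$. The key fact, established in Eq.~(\ref{eq:crit}) and the surrounding discussion, is that such a state is critical if and only if $\sum_i m_i\vec v_i=\vec 0$. Combined with Theorem \ref{Th_main}, Observation \ref{obs:degeneracy}, Lemma \ref{boundary1}, and the dimension formula for $\dim(G_{\ket\psi})$, this should let me read off the orbit type from the multiplicity pattern $(m_1\ge m_2\ge\cdots\ge m_l)$ alone.

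First I would treat case (i): if $m_1>\sum_{i\ge2}m_i$, then for \emph{any} rotation the weighted sum has $z$-component (after aligning $\vec v_1$ with the north pole) bounded away from $0$, and since $f_\alpha$ only hinges vectors toward the south pole while pinning vectors already at the poles, no sequence of $A$'s can make the configuration balanced; more carefully, the ratio of the dominant weight to the rest is a ``moment'' obstruction that SL-invariant polynomials detect, so all nonconstant invariants vanish and $C$ lies in the null-cone. (One clean way: exhibit a one-parameter SL subgroup driving the state to $0$, built from a diagonal $B_\alpha,C_\alpha$ pair analogous to the constructions in Observations \ref{obs:epsilon_nu_null-cone} and \ref{obs:diagonalreduction}.) Next, case (iv): if $l>2$ and $m_1<\sum_{i\ge2}m_i$, Lemma \ref{lemma:geometric} directly gives that the configuration can be rotated-and-hinged to a balanced one, so $C$ is polystable; the refinement that $C$ is stable iff $l=n$ (all eigenvalues distinct) and strictly polystable iff $l<n$ follows from Observation \ref{obs:degeneracy} together with Theorem \ref{Th_main}'s equivalent form (ii)--(iii), since $d_G$ is attained exactly on the all-distinct diagonal pencils.

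For case (ii), $l=2$ and $m_1=m_2=n/2$: two antipodal vectors with equal weight are already balanced, so $C$ is polystable; its stabilizer dimension from the displayed formula is $-1+2(n/2)^2=n^2/2-1$, which for $n\ge4$ strictly exceeds the stabilizer dimension of an all-distinct configuration, hence $C$ is strictly polystable — except $n=2$, where $l=2=n$ puts it in the stable/all-distinct case, recovering the GHZ state. For case (iii), $l>2$ and $m_1=\sum_{i\ge2}m_i$: here I would argue $C$ is semistable but not polystable. Semistability: a balanced configuration with one weight equal to the sum of the others sits on the boundary — take $\vec v_1$ at the north pole and let the remaining mass approach the south pole via $f_\alpha$ as $\alpha\to1$; in the limit one reaches the degenerate two-point antipodal configuration with weights $(m_1,m_1)$, which is critical, so a critical state lies in $\overline{G[\psi]}$ and $[\psi]\notin\mathcal N$. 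Not polystable: the class itself contains no critical state, because balancing the open configuration ($l>2$) would require $f_\alpha$ with $\alpha=1$ (collapsing vectors onto the south pole), which is not attained for any finite $A$; equivalently, $\overline{G[\psi]}\setminus G[\psi]$ is nonempty and contains the polystable two-point class, so by Lemma \ref{boundary1} $[\psi]$ is strictly semistable.

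The main obstacle I anticipate is the ``not polystable'' half of case (iii): showing rigorously that \emph{no} rotation-plus-hinging can balance a configuration with $l>2$ distinct directions and $m_1=\sum_{i\ge2}m_i$. The naive argument "you'd need $\alpha=1$" is only obviously valid after one fixes the optimal rotation axis, and one must rule out that some clever non-aligned rotation together with $\alpha<1$ does the job. I expect the cleanest route is an extremal/convexity argument: for the weighted sum to vanish one needs $m_1\vec v_1=-\sum_{i\ge2}m_i\vec v_i$, and since $\|\sum_{i\ge2}m_i\vec v_i\|\le\sum_{i\ge2}m_i=m_1$ with equality iff all $\vec v_i$ ($i\ge2$) are parallel — contradicting $l>2$ distinct directions after hinging (hinging preserves distinctness of azimuthal angles among non-polar vectors and cannot collapse three genuinely different directions to one except in the $\alpha\to1$ limit). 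Pinning down this strict-inequality-in-the-limit statement, and checking the boundary behavior of $f_\alpha$ as $\alpha\to1$ carefully, is where the real work lies; the other three cases should be comparatively short given Lemma \ref{lemma:geometric} and Observation \ref{obs:degeneracy}.
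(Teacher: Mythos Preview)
Your proposal is correct and, for cases (ii) and (iv), essentially matches the paper. The differences are in cases (i) and (iii), and your anticipated obstacle in (iii) dissolves once you reframe it.

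For case (i), the paper does exactly what you suggest parenthetically: it applies an operator $A$ to set $x_1=0$ and then writes down explicit diagonal $A_\alpha=\operatorname{diag}(e^\alpha,e^{-\alpha})$ and $B_\alpha$ with $\lim_{\alpha\to\infty}A_\alpha\otimes B_\alpha\otimes\identity\ket\psi=0$. Your geometric ``moment obstruction'' argument is presented in the paper only as a post-proof remark, not as the proof itself; as stated it only shows no critical state lies in the \emph{orbit}, and the jump from there to ``all nonconstant invariants vanish'' is not justified without further work on what the closure can contain. So your parenthetical is the real argument, and you should promote it.

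For case (iii), the paper again uses an explicit one-parameter subgroup: after setting $x_1=0$, a diagonal $A_\alpha\otimes B_\alpha$ sends $\ket\psi$ in the limit to a state whose pencil is $\lambda\identity_{n/2}\oplus\mu\identity_{n/2}$, i.e.\ a state with only two distinct eigenvalues. Since this limit has a different KCF from $\ket\psi$ (which has $l>2$ eigenvalues), Theorem \ref{theo:fractional} says it lies in a different SLOCC class, so $\overline{G\ket\psi}\setminus G\ket\psi\neq\emptyset$ and $\ket\psi$ is strictly semistable in one stroke. Your two-step route (semistable via the limit, not-polystable via triangle inequality) also works, and your worry is misplaced: the triangle-inequality argument is \emph{coordinate-free}. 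By Theorem \ref{theo:fractional}, every state in $G[\psi]$ has a KCF with the same $l>2$ distinct eigenvalues and the same multiplicities $m_i$; if any such state were critical it would (up to LU) have the form of Eq.~(\ref{eq:crit}) with $\sum_i m_i\vec v_i=0$, forcing $m_1=\|m_1\vec v_1\|=\|\sum_{i\ge2}m_i\vec v_i\|\le\sum_{i\ge2}m_i=m_1$ with equality only when $\vec v_2,\ldots,\vec v_l$ are all parallel, contradicting their distinctness for $l>2$. No analysis of the rotation-plus-hinging parametrization is needed.
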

\begin{proof}
We will first prove statements (i), (ii), and (iii) by explicit construction of sequences of $SL$-operators transforming the state of interest into $0$, a critical state, or asymptotically into a critical state, respectively. Utilizing Lemma \ref{lemma:geometric}, we will then show statement (iv).

In order to prove statement (i), let us first apply an operator $A$, which brings the state to a form such that the eigenvalue with highest degeneracy is $x_1 = 0$. Then the matrix pencil is of the form
\begin{align}
\label{eq:kcfi}
\mathcal{P} =
\begin{pmatrix}
 \lambda &   &   &  &  \\
   &  \ddots   &   &  &  \\
   &   &  \lambda &  &  \\
   &   &   &    \lambda + x_2 \mu & \\
   &   &   &    & \ddots
\end{pmatrix}.
\end{align}
Let us denote the corresponding state by $\ket{\psi}$, which is given by $\ket{\psi} = \ket{0} (\operatorname{diag}(0, \ldots, 0, x_2, \ldots) \otimes \identity) \ket{\phi^+_n} + \ket{1}\ket{\phi^+_n}$. Note that $n-m_1 < m_1$ due to the assumption. Consider the matrices
\begin{align}
A_\alpha &= \operatorname{diag}(e^{\alpha},e^{-\alpha})   \text{ and}\\
B_\alpha &=  e^{\frac{2 (n-m_1)}{n} \alpha} \identity_{m_1} \oplus e^{-\frac{2 m_1}{n} \alpha } \identity_{n-m_1}.
\end{align}
Then it holds that $\det(A_\alpha) = \det(B_\alpha) = 1$ and moreover $\lim_{\alpha \rightarrow \infty} A_\alpha \otimes B_\alpha \otimes \identity \ket{\psi} = 0$, which proves that the corresponding SLOCC class is in the null-cone.

Let us now prove statement (ii). In this case there are exactly two eigenvalues and these have the same multiplicity. It is possible to apply an operator $A$ such that the matrix pencil takes the form $\mathcal{P} = \lambda \identity_{n/2} \oplus \mu \identity_{n/2}$, which is easy to see as up to three eigenvalues of a matrix pencil can be chosen freely by an operator $A$ \cite{ChMi10}. It can be easily verified that the corresponding state is critical, i.e., all single-particle reduced density matrices are proportional to $\identity$. Thus, the SLOCC class in question is polystable. Moreover, unless $n=2$, the multiplicity of the eigenvalues is larger then 1. In Observation \ref{obs:degeneracy} we have seen that this implies that the dimension of the orbit is strictly smaller than the maximal dimension. This implies that, unless $n = 2$, the considered SLOCC class is strictly polystable. In case $n=2$, the class in question is the 3-qubit GHZ class, which is stable.

Let us now prove statement (iii). As in the proof of statement (i) above, let us first bring the matrix pencil to a form such that $x_1 = 0$. The matrix pencil is then of the form as in Eq. (\ref{eq:kcfi}). Let us denote the corresponding state (after the operation) by $\ket{\psi}$.
Considering now the matrices
\begin{align}
A_\alpha &= \operatorname{diag}(e^{\alpha},e^{-\alpha})   \text{ and}\\
B_\alpha &= \left(\prod_{i=2}^l x_i^{m_i} \right)^{\frac{1}{n}} \left( e^{\alpha} \identity_{m_1} \oplus  e^{-\alpha} \bigoplus_{i=2}^l  \frac{1}{x_i} \identity_{m_i} \right),
\end{align}
we have  $\det(A_\alpha) = \det(B_\alpha) = 1$ and moreover $\lim_{\alpha \rightarrow \infty} A_\alpha \otimes B_\alpha \otimes \identity \ket{\psi} =  \left(\prod_{i=2}^l x_i^{m_i} \right)^{\frac{1}{n}} \ket{\psi'}$, where $\ket{\psi'}$ is a critical state corresponding to an SLOCC class belonging to a matrix pencil of the form given in (ii). This proves statement (iii).

For proving statement (iv), let us now consider the mapping of the problem to a geometrical problem in $\mathbb{R}^3$ as in Section \ref{sec:geometric}.
Recall that rewriting the eigenvalues of the matrix pencil as $x_i = e^{i \phi_i} \tan{\frac{\theta_i}{2}}$  and constructing unit vectors with $(\theta_i,\phi_i)$ as spherical coordinates [see Eq. (\ref{eq:spherical})], then the state $\ket{\psi}$ corresponding to the matrix pencil is related to a critical state via operators $B$ and $C$ iff $\sum_i m_i \vec{v}_i = \vec{0}$ \cite{BrLe19}. At this point we can apply Lemma \ref{lemma:geometric}, which guarantees the existence of an operator $A$ applied on the first system, which transforms the eigenvalues such that the condition is met. Let us remark here, that the assumptions in the lemma are indeed such that the lemma only applies for $m_i$ as in case (iv). As the existence of a critical state is guaranteed, the considered SLOCC class is polystable. Moreover, strictly polystable classes can be distinguished from stable classes by utilizing Observation \ref{obs:degeneracy} in Section \ref{sec:2mncequivalence}, stating that the dimension of the orbit is maximal if and only if there is no degeneracy in the eigenvalues. Hence, statement (iv) follows.
\end{proof}

This completes the classification of orbit types in $2 \times m \times n$ systems. Let us remark that cases (ii) and (iii) can only occur in case $n$ is even.

Let us also remark here that in the geometrical picture utilized to prove statement (iv) of Theorem \ref{thm:diagtype}, statements (i)-(iii), which we have proven differently, also become very transparent. In case (i) we are dealing with a vector $\vec{v}_1$, which is pointing with a majority of the weight, $m_1$, into a single direction. Even if all the remaining vectors could be freely rotated independent of each other, their cumulative weight is at most $m_1-1$. Hence the weighted sum of the vectors must always maintain a distance from the origin of at least one. Hence, in case (i), $C$ must be in the null-cone. In case (ii), we are dealing with exactly two vectors $\vec{v_1}$ and $\vec{v_2}$ with equal weight $m_1 = m_2$. These vectors can be transformed in such a way that they point in opposite directions and hence can clearly be transformed such that they sum up to $\vec{0}$. Hence, a critical state exists. In case (iii) we are dealing with a vector $\vec{v}_1$ pointing in some direction with weight $m_1$ and at least two more vectors $\vec{v}_i$ with cumulative weight $m_1$ pointing in different directions. In order to make the weighted sum of the vectors equal $\vec{0}$, all the remaining vectors would have to point in opposite direction of $\vec{v}_1$. As $A$ cannot change the multiplicities of the eigenvalues, this is not possible by some finite $A$. However, applying the non-linear transformation from Lemma \ref{lemma:geometric},  by first rotating all vectors such that $\vec{v}_1$ points in z-direction and then performing the non-linear hinging operation (see also Figure \ref{fig:operation}), it can be seen that in the limit $\alpha \rightarrow 1$ all vectors $\vec{v}_i$ for $i \in \{2, \ldots, l\}$ point into $(-z)$-direction. Thus the weighted vector sum equals $\vec{0}$ (only) in the limit and thus, case (iii) corresponds to a strictly semistable SLOCC class.

\subsection{Algorithm to determine the orbit type}
\label{sec:flowchart}

In this subsection, we present a flowchart (Figure \ref{fig:recipe}) which can be followed in order to decide the orbit type of an SLOCC class in $\mathbb{C}^2 \otimes \mathbb{C}^m \otimes \mathbb{C}^n$ given in terms of the representing matrix pencil $\mathcal{P}$. The flowchart summarizes the results obtained in this section. We present tables of the orbit types of SLOCC classes for system sizes up to $2\times 5 \times 5$ in Appendix \ref{app:tables}.

\begin{figure*}[!ht]
\includegraphics[width=1.0 \linewidth]{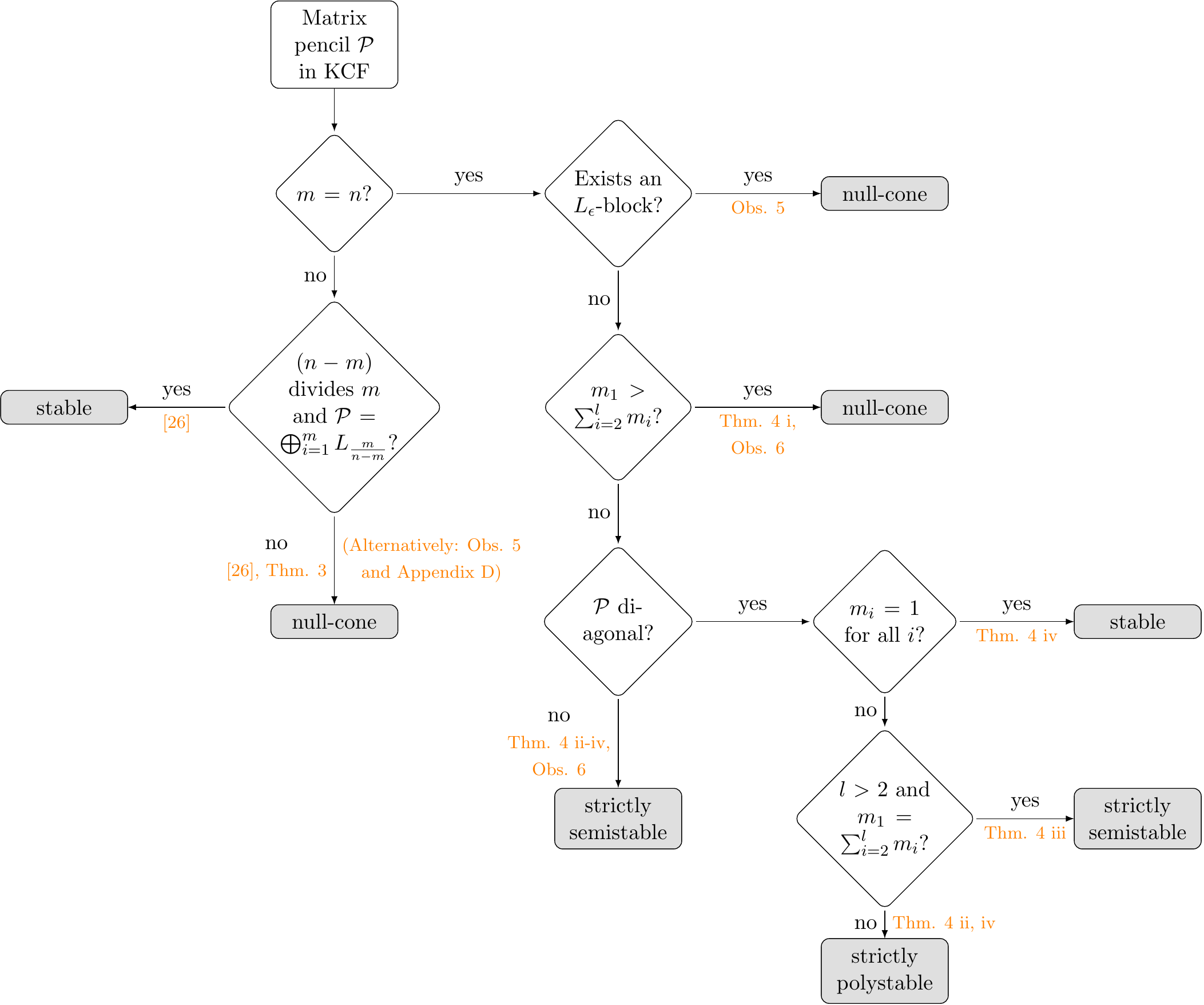}
  \caption{Characterization of orbit type of any SLOCC class in $\mathbb{C}^2 \otimes \mathbb{C}^m \otimes \mathbb{C}^n$ (wlog $m \leq n$). $\mathcal{P}$ denotes the $m \times n$ matrix pencil whose corresponding SLOCC class we are interested in. Wlog $\mathcal{P}$ is in KCF with $0 \leq l \leq n$ distinct eigenvalues. Their multiplicities are denoted by $m_i$, and are sorted in descending order. Then the orbit type can be decided by following the flowchart. Remarkably, the concrete values which the eigenvalues of the matrix pencil take do not play a role when deciding the orbit type, only their multiplicities matter.
  A non-rectangular matrix pencil (i.e., $m \neq n$) either equals $\mathcal{P} = \bigoplus_{i=1}^m L_{\frac{m}{n-m}}$ (which is only possible if $n-m$ divides $m$), in which case the class is stable, as shown in \cite{BrLe19}, or is in the null-cone otherwise, as either \cite{BrLe19} together with Theorem \ref{thm:2mnsemistable}, or, alternatively,
   Observation \ref{obs:epsilon_nu_null-cone} and the considerations in Appendix \ref{app:nullcone2mn} show. In case of square matrix pencils (i.e. $m=n$), the class is in the null-cone if any $L$-block is present. The reason for that is that such a block must always be accompanied by an $L^T$-block and thus Observation \ref{obs:epsilon_nu_null-cone} applies. Theorem \ref{thm:diagtype} and Observation \ref{obs:diagonalreduction} then allow classification of square matrix pencils that do not contain any $L$- or $L^T$-block depending on the multiplicities of the eigenvalues.}
  \label{fig:recipe}
\end{figure*}

\section{Conclusion}

In this work we have derived a necessary and sufficient criterion for the existence of strictly semistable states in terms of the orbit dimensions of polystable states using GIT methods. In particular, we have shown that strictly semistable states exist if and only if there exist polystable states whose orbit dimensions differ. We have applied the criterion to three-partite states of local dimensions $2$, $m$, and $n$ and characterized those $m$ and $n$, for which strictly semistable states do exist. Going beyond that, we have scrutinized all SLOCC classes in such systems and characterized their orbit types. 
It turns out that (for different values of $m$ and $n$), a rich variety of behaviors manifests. More precisely, there are examples where only one SLOCC class containing fully entangled states exists (which is stable); there are examples in which all SLOCC classes are in the null-cone (as characterized in \cite{BrLe19}); in some systems, stable states and states in the null-cone exist, but no strictly semistable states; and finally, there are systems in which all of the discussed orbit types are present. Moreover, all of these situations become already apparent within small system sizes, that is $m \leq n \leq 5$ as shown in Appendix \ref{app:tables}.

It would be interesting to see whether one can identify a difference between stable states, strictly polystable states, strictly semistable states, and states in the null-cone for practical applications. A well-known result in this vein is that the entanglement in the W-state, which is in the null-cone, is more robust than the entanglement in the three-qubit GHZ-state, which is stable \cite{DuVi00}.

Finally, let us remark that $2 \times m \times n$-states find application as fiducial states in the context of  matrix product states \cite{SaMo19}. In particular, properties of $2 \times m \times n$-states can be used to characterize properties of the matrix product state they give rise to, such as local symmetries, or SLOCC classes.

\acknowledgments
O.S. and A.S. acknowledge financial support from National Science Centre, Poland under the  grant SONATA BIS: 2015/18/E/ST1/00200. M.H. and B.K. acknowledge financial support from the Austrian Science Fund (FWF) grant DK-ALM: W1259-N27 and the SFB BeyondC (Grant No. F7107-N38). Furthermore, B.K. acknowledges support of the Austrian Academy of Sciences via the Innovation Fund ``Research, Science and Society''.

%----------------------------------------------------------------------------------------
%	BIBLIOGRAPHY
%----------------------------------------------------------------------------------------

\bibliography{paper} 

%----------------------------------------------------------------------------------------

\appendix

\section{Proof of Observation \ref{obs:closuredifference}}
\label{app:closuredifference}

In this appendix we prove Observation \ref{obs:closuredifference}. In order to do so, we will first state and prove a lemma and another observation.

\begin{lemma}
\label{lemmainfty}
Suppose $\ket{\Psi} \in \mathcal{H}_{ss}$ and we have a sequence $(g_n)_n,\, g_n \in G$ such that $\norm{g_n \ket{\Psi}} \xrightarrow{n} \infty$.
Then for any sequence $(\phi_n)_n$, $$\lim_{n \to \infty} \frac{g_n \ket{\Psi}}{\norm{g_n \ket{\Psi}}}e^{i\phi_n} \in \mathcal{N}$$ whenever this limit exists.
\end{lemma}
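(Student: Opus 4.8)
The plan is to test the (would-be) limit against $G$-invariant polynomials. Recall from Section~\ref{sec:not} that the null-cone $\mathcal{N}$ is precisely the set of vectors on which every nonconstant homogeneous $G$-invariant polynomial vanishes, and that the ring of $G$-invariant polynomials is finitely generated; hence it suffices to work with a finite set of homogeneous generators $p_1,\dots,p_k$, say with $\deg p_j = d_j \ge 1$. Write $\ket{\chi_n} := e^{i\phi_n}\, g_n\ket{\Psi}/\norm{g_n\ket{\Psi}}$ for the terms of the sequence in question.

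First I would combine the homogeneity of $p_j$ with its $G$-invariance: for each $j$ and each $n$,
\[
p_j(\ket{\chi_n}) = \frac{e^{i d_j \phi_n}}{\norm{g_n\ket{\Psi}}^{d_j}}\, p_j(g_n\ket{\Psi}) = \frac{e^{i d_j \phi_n}}{\norm{g_n\ket{\Psi}}^{d_j}}\, p_j(\ket{\Psi}).
\]
Taking absolute values removes the unimodular phase, so $|p_j(\ket{\chi_n})| = |p_j(\ket{\Psi})| \,/\, \norm{g_n\ket{\Psi}}^{d_j}$, and since $d_j \ge 1$ and $\norm{g_n\ket{\Psi}} \to \infty$ by hypothesis, this tends to $0$ as $n \to \infty$. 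This is exactly where the assumption $\norm{g_n\ket{\Psi}}\to\infty$ enters, and it also makes transparent why the sequence of phases $(\phi_n)_n$ plays no role.

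Finally, suppose the limit $\ket{\chi} := \lim_{n\to\infty}\ket{\chi_n}$ exists in $\mathcal{H}$. Since each $p_j$ is continuous, $p_j(\ket{\chi}) = \lim_{n\to\infty} p_j(\ket{\chi_n}) = 0$ for all $j$; as the $p_j$ generate the invariant ring, every nonconstant homogeneous $G$-invariant polynomial vanishes on $\ket{\chi}$, i.e.\ $\ket{\chi} \in \mathcal{N}$, which is the claim. I expect essentially no obstacle in this argument; the only points deserving care are invoking the characterization of $\mathcal{N}$ as the common vanishing locus of the positive-degree invariants and noting that, in fact, the semistability of $\ket{\Psi}$ is not even needed for this particular statement.
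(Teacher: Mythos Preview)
Your proof is correct and follows essentially the same approach as the paper: evaluate a finite generating set of homogeneous $G$-invariant polynomials on the normalized sequence, use homogeneity and $G$-invariance to factor out $p_j(\ket{\Psi})/\norm{g_n\ket{\Psi}}^{d_j}$, and conclude by continuity. Your remark that the semistability hypothesis on $\ket{\Psi}$ is not actually used in this lemma is also accurate.
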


\begin{proof}
Let $\{p_1, \ldots, p_k\}$ be the set of homogeneous polynomials $p_i: \mathcal{H} \rightarrow \mathbb{C}$ of degree $\mathrm{\deg} \, p_i$ that generate the ring of $G$--invariant polynomials. Then for $\ket{\Phi}=\displaystyle\lim_{n \to \infty} \frac{g_n \ket{\Psi}}{\norm{g_n \ket{\Psi}}}e^{i\phi_n}$ we have
\begin{align}
    \forall_i \; p_i (\ket{\Phi})&=\lim_{n \to \infty} p_i \left( \frac{g_n \ket{\Psi}}{\norm{g_n \ket{\Psi}}}e^{i\phi_n} \right) \nonumber\\
     &= p_i ( \ket{\Psi}) \lim_{n \to \infty} \left (\frac{e^{i\phi_n}}{\norm{g_n \ket{\Psi}}} \right)^{\mathrm{deg} \, p_i} =0,\nonumber
\end{align}
since the limit exists and all $p_i$ are continuous, homogeneous and $G$--invariant. Thus $\forall_i \; p_i(\ket{\Phi})=0$ so $\ket{\Phi} \in \mathcal{N}$.
\end{proof}

\begin{observation}
\label{obs:lift}
If $[\Psi_n] \xrightarrow{n \to \infty} [\Phi]$ in $\mathbb{P}(\mathcal{H})$, then the sequence $(\ket{\tilde{\Psi}_n})_n$ in $\mathcal{H}$ with $\ket{\tilde{\Psi}_n} = \frac{\ket{\Psi_n}}{\norm{\ket{\Psi_n}}}e^{i\phi_n}$ and $\phi_n = \mathrm{arg}(\braket{\Psi_n}{\Phi})$ converges to $\ket{\tilde{\Phi}}= \frac{\ket{\Phi}}{\norm{\ket{\Phi}}}$, i.e.,
\begin{align*}
\ket{\tilde{\Psi}_n} \xrightarrow{n \to \infty} \ket{\tilde{\Phi}}.
\end{align*}
\end{observation}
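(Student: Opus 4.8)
The plan is to reduce the claim to the convergence of the pure-state fidelity $\frac{\abs{\braket{\Psi_n}{\Phi}}}{\norm{\ket{\Psi_n}}\,\norm{\ket{\Phi}}}$ to $1$, and then to a one-line Hilbert-space computation that exploits the particular phase choice $\phi_n = \arg(\braket{\Psi_n}{\Phi})$.

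First I would note that the function
\[
F \colon \mathbb{P}(\mathcal{H}) \to [0,1], \qquad F([\psi]) = \frac{\abs{\braket{\psi}{\Phi}}}{\norm{\ket{\psi}}\,\norm{\ket{\Phi}}},
\]
is well defined, since its right-hand side is invariant under $\ket{\psi}\mapsto c\ket{\psi}$ with $c\in\C\setminus\{0\}$, and is continuous on $\mathbb{P}(\mathcal{H})$, because it is the descent to the quotient of a continuous, scale-invariant function on $\mathcal{H}\setminus\{0\}$ (equivalently, $F([\psi]) = \cos d_{\mathrm{FS}}([\psi],[\Phi])$ for the Fubini--Study metric, which induces the topology of $\mathbb{P}(\mathcal{H})$). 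Consequently $[\Psi_n]\to[\Phi]$ forces $F([\Psi_n]) \to F([\Phi]) = 1$. In particular $\braket{\Psi_n}{\Phi}\neq 0$ for all but finitely many $n$, so $\phi_n$ is well defined there; for the finitely many remaining indices the value of $\phi_n$ is irrelevant to the limit.

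Next I would feed in the phase convention. Writing $\ket{\tilde{\Psi}_n} = e^{i\phi_n}\ket{\Psi_n}/\norm{\ket{\Psi_n}}$ with $\phi_n = \arg(\braket{\Psi_n}{\Phi})$, so that $\braket{\Psi_n}{\Phi} = \abs{\braket{\Psi_n}{\Phi}}\,e^{i\phi_n}$, one gets
\[
\braket{\tilde{\Psi}_n}{\tilde{\Phi}} = \frac{e^{-i\phi_n}\,\braket{\Psi_n}{\Phi}}{\norm{\ket{\Psi_n}}\,\norm{\ket{\Phi}}} = F([\Psi_n]),
\]
which is real and lies in $[0,1]$. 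Since $\ket{\tilde{\Psi}_n}$ and $\ket{\tilde{\Phi}}$ are unit vectors, expanding the norm squared gives
\[
\norm{\ket{\tilde{\Psi}_n} - \ket{\tilde{\Phi}}}^2 = 2 - 2\,\mathrm{Re}\,\braket{\tilde{\Psi}_n}{\tilde{\Phi}} = 2\bigl(1 - F([\Psi_n])\bigr) \xrightarrow{n\to\infty} 0,
\]
which is exactly the asserted convergence $\ket{\tilde{\Psi}_n}\to\ket{\tilde{\Phi}}$ in $\mathcal{H}$.

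The only genuinely delicate step is the first one: identifying the right continuous invariant, i.e. that convergence in the quotient topology of $\mathbb{P}(\mathcal{H})$ is equivalent to convergence of the fidelity to $1$. I do not anticipate a real obstacle here, since $\mathbb{P}(\mathcal{H})$ is a compact manifold carrying the Fubini--Study metric and the fidelity is a monotone function of that metric; a fully hands-on alternative would be to use that the restriction of $\mathcal{H}\setminus\{0\}\to\mathbb{P}(\mathcal{H})$ to the unit sphere is an $S^1$-bundle, hence admits a continuous local section near $[\Phi]$, which produces unit representatives of $[\Psi_n]$ converging to a unit representative of $[\Phi]$; comparing phases then again yields the claim. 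Everything after that first step is a direct computation with no hidden difficulties.
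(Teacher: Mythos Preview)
Your proof is correct and follows essentially the same route as the paper: both arguments use the Fubini--Study distance to conclude that $\abs{\braket{\Psi_n}{\Phi}}/(\norm{\ket{\Psi_n}}\norm{\ket{\Phi}})\to 1$, and then expand $\norm{\ket{\tilde{\Psi}_n}-\ket{\tilde{\Phi}}}^2$ and observe that the phase choice makes the cross term real and equal to the fidelity. Your write-up is in fact slightly more careful than the paper's (you note that $\phi_n$ is only well defined for large $n$, and you correctly write $\norm{\cdot}^2$ where the paper drops the square).
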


\begin{proof}
Let us introduce the Fubini-Study distance in $\mathbb{P}(\mathcal{H})$,
$$d([\Psi],[\Phi])=\mathrm{arccos}\frac{\abs{\braket{\Psi}{\Phi}}}{\norm{\ket{\Psi}}\norm{\ket{\Phi}}}\mathrm{,}$$
which defines a metric compatible with the standard topology on $\mathbb{P}(\mathcal{H})$.
The convergence $[\Psi_n] \xrightarrow{n \to \infty} [\Phi]$ implies that $d([\Psi_n],[\Phi]) \xrightarrow{n \to \infty} 0$ which means that $$\frac{\abs{\braket{\Psi_n}{\Phi}}}{\norm{\ket{\Psi_n}}\norm{\ket{\Phi}}} \xrightarrow{n \to \infty} 1 \text{.}$$ On the other hand 

\begin{align*}
\norm{\ket{\tilde{\Psi}_n}-\ket{\tilde{\Phi}}}&=2\left(1- \frac{\Re \left( e^{-i\phi_n}\braket{\Psi_n}{\Phi}\right)}{\norm{\ket{\Psi_n}}\norm{\ket{\Phi}}}\right)=\\
&=2\left(1- \frac{\abs{\braket{\Psi_n}{\Phi}}}{\norm{\ket{\Psi_n}}\norm{\ket{\Phi}}}\mathrm{cos}(\psi_n-\phi_n)\right),
\end{align*}
where $\psi_n = \mathrm{arg}(\braket{\Psi_n}{\Phi})$. Hence, with the choice $\phi_n =\psi_n$ one obtains
$$\norm{\ket{\tilde{\Psi}_n}-\ket{\tilde{\Phi}}}=2\left(1- \frac{\abs{\braket{\Psi_n}{\Phi}}}{\norm{\ket{\Psi_n}}\norm{\ket{\Phi}}}\right) \xrightarrow{n \to \infty} 0 \text{.}$$

\end{proof}

Now we are ready to prove Observation \ref{obs:closuredifference}, which we restate here in order to increase readability.

\noindent {{\bf Observation \ref{obs:closuredifference}} {\bf.}}\textit{
For any vector $\ket{\Psi} \in \mathcal{H}_{ss}$, we have
\bea \overline{[\; G \ket{\Psi} \;]} \setminus [\; \overline{G \ket{\Psi}} \;] \subset [{\cal N}\setminus\{0\}].\eea
}

\begin{proof}
Notice that $\overline{[\; G \ket{\Psi} \;]}$ is $G$--invariant and the projection $[\cdot]$ is continuous in the standard topologies on $\mathcal{H}$ and $\mathbb{P}(\mathcal{H})$, thus
$$[\; \overline{G \ket{\Psi}} \;] \subset \overline{[\; G \ket{\Psi} \;]} \text{.}$$
Let $\ket{\Psi} \in \mathcal{H}_{ss}$. Choose any state $[\Phi] \in \overline{[\; G \ket{\Psi} \;]}$. We will show that either $[\Phi] \in [\mathcal{N}\setminus \{0\}]$ or $[\Phi] \in [\; \overline{G \ket{\Psi}} \;]$. Let $(g_n)_n, g_n \in G$, be a sequence of elements in $G$ such that $[g_n \ket{\Psi}]\xrightarrow{n \to \infty}[\Phi]$. Due to Observation \ref{obs:lift} there exists a sequence $(\ket{\Psi_n})_n$ such that $\ket{\Psi_n} \xrightarrow{n \to\infty} \ket{\tilde{\Phi}}$ where $$\ket{\Psi_n} = \frac{g_n \ket{\Psi}}{\norm{g_n \ket{\Psi}}}e^{i\phi_n}, \quad \ket{\tilde{\Phi}}=\frac{\ket{\Phi}}{\norm{\ket{\Phi}}}  \text{.}$$
Let us denote $$r_n \coloneqq \norm{g_n \ket{\Psi}}\in \mathbb{R}, \quad c_n \coloneqq r_n e^{-i\phi_n}\in \mathbb{C}\text{,}$$
so $c_n \ket{\Psi_n} = g_n \ket{\Psi}$ and $\abs{c_n}=r_n$. 
Suppose that the sequence $(r_n)_n$ is not bounded from above. Then we can pick a subsequence $(r_{n_k})_k$ such that $r_{n_k} \xrightarrow{k \to \infty} \infty$. It holds that $\ket{\Psi_{n_k}}\xrightarrow{k \to\infty}\ket{\tilde{\Phi}}$ and from Lemma \ref{lemmainfty} we get that $[\Phi]=[\tilde{\Phi}] \in [\mathcal{N} \setminus \{0\}]$ \footnote{Note that the projection of this diverging sequence of vectors (i.e. the corresponding sequence of states) may actually converge in $\mathbb{P}(\mathcal{H})$ so that this sequence of states corresponds to a state from the closure of $[G\ket{\Psi}]$ in $\mathbb{P}(\mathcal{H})$}. Suppose that $(r_n)_n$ is bounded from above by some $b \in \mathbb{R}_{+}$. Because $\ket{\Psi} \in \mathcal{H}_{ss}$, the sequence $(r_n)_n$ has to be bounded from below by a non-zero norm infimum $a \in \mathbb{R}_{+}$. Thus, for every $n$ we have $r_n \in [a, b]$, i.e., $(c_n)_n$ is bounded with $\abs{c_n}=r_n \in [a,b] \subset \mathbb{R}_{+}$. From the Bolzano-Weierstrass theorem for $\mathbb{C}$, we can pick a convergent subsequence $(c_{n_k})_k$ such that $c_{n_k} \xrightarrow{k \to \infty} c $ and the corresponding subsequence $(\ket{\Psi_{n_k}})_k$.  Moreover $c\in \mathbb{C}\setminus\{0\}$, as $\abs{c}\in[a,b] \subset \mathbb{R}_{+}$. Since $\ket{\Psi_n} \xrightarrow{n \to \infty} \ket{\tilde{\Phi}}$ then also $\ket{\Psi_{n_k}} \xrightarrow{k \to \infty} \ket{\tilde{\Phi}}$. The product  $c_{n_k} \ket{\Psi_{n_k}}$ of convergent sequences $c_{n_k}$ and $\ket{\Psi_{n_k}}$ converges to $c\ket{\tilde{\Phi}}$. Thus $g_{n_k}\ket{\Psi} \xrightarrow{k \to \infty} c \ket{\tilde{\Phi}} \in \overline{G  \ket{\Psi}}$. But $[\Phi]=[c \ket{\tilde{\Phi}}]$ since $c \neq 0$,  so $[\Phi] \in [\; \overline{G \ket{\Psi}} \;]$. This completes the proof.
\end{proof}

\section{An equivalent definition of stability}
\label{app:stab_eq}

In this appendix, we show that the stability of the action can be defined in an alternative way, which is equivalent to the definition introduced in the main text, Definition \ref{stabmax}. In the course of that, it will become clear that the maximal orbit dimension among all polystable orbits equals the maximal orbit dimension among all orbits, $d_G$, in case the action is stable.

Let us now consider the following alternative definition of the stability of the action.
\begin{definition}
\label{stabstable}
An action $H \acts \mathbb{P}(\mathcal{H})$ is stable if there exists an open dense set $U \subset \mathbb{P}(\mathcal{H})$ that is a union of stable orbits.
\end{definition}

We introduce the definition of a function $\mathrm{dim} \, G \textunderscore $ as follows: the value of $\mathrm{dim} \, G \textunderscore$ at $x$ is equal to $\mathrm{dim} \, G_x$. In order to prove the equivalence of Definition \ref{stabstable} to Definition \ref{stabmax}, we use the following lemma, which is a particular version of Proposition 2.27. from \cite{Ho12}.

\begin{lemma}[\cite{Ho12}]
\label{ios}
In our setting, the stabilizer dimension function $\mathrm{dim} \, G \textunderscore : \mathbb{P}(\mathcal{H}) \rightarrow \mathbb{N}_0$ is upper semi-continuous, i.e. for each $n \in \mathbb{N}_0$, the set
\begin{equation} \label{eq:uppsemicont}
X_n:=\{x \in \mathbb{P}(\mathcal{H}) : \mathrm{dim} \, G_x \geq n\} \text{,}
\end{equation}
is closed in $\mathbb{P}(\mathcal{H})$.
\end{lemma}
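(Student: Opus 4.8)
The plan is to derive the statement from the general principle that for a smooth (here even algebraic) action of a Lie group the stabilizer \emph{dimension} is governed by the rank of the infinitesimal action, which is lower semi-continuous; the cited \cite{Ho12}, Prop.~2.27 is essentially this principle, and I would either invoke it or reprove it in our concrete setting as follows. Let $\mathfrak{g}$ denote the Lie algebra of $G = SL(d_1) \times \ldots \times SL(d_N)$. Since the action $G \acts \mathbb{P}(\mathcal{H})$ is induced by the linear action on $\mathcal{H}$, for each $x = [\Psi] \in \mathbb{P}(\mathcal{H})$ the orbit map $g \mapsto g x$ has a well-defined differential at the identity, the infinitesimal action
\begin{equation}
\rho_x : \mathfrak{g} \longrightarrow T_x \mathbb{P}(\mathcal{H}), \qquad \xi \longmapsto \left. \frac{d}{dt}\right|_{t=0} \exp(t\xi)\, x .
\end{equation}
The stabilizer $G_x$ (which in the notation of Sec.~\ref{sec:not} is the projective stabilizer $G_{[\Psi]}$) is the preimage of the point $x$ under the continuous map $g \mapsto g x$, hence a closed subgroup of $G$, hence an embedded Lie subgroup with Lie algebra $\mathfrak{g}_x = \ker \rho_x$. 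Therefore
\begin{equation}
\dim G_x = \dim \mathfrak{g}_x = \dim \mathfrak{g} - \operatorname{rank} \rho_x .
\end{equation}

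Next I would argue that $x \mapsto \operatorname{rank} \rho_x$ is lower semi-continuous on $\mathbb{P}(\mathcal{H})$. Fix $x_0$, pick an affine chart of $\mathbb{P}(\mathcal{H})$ around $x_0$ and a local frame of the tangent bundle; then $\rho_x$ becomes a matrix $M(x)$ whose entries are regular functions of $x$. Concretely, using homogeneous coordinates and the canonical isomorphism $T_{[\Psi]}\mathbb{P}(\mathcal{H}) \cong \mathcal{H}/\mathbb{C}\ket{\Psi}$, the fundamental vector field of $\xi$ at $[\Psi]$ is the class of $\xi\ket{\Psi}$, which is linear (hence polynomial) in the coordinates of $\ket{\Psi}$; this is what makes $x \mapsto \rho_x$ a morphism of vector bundles over $\mathbb{P}(\mathcal{H})$. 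The set $\{x : \operatorname{rank} M(x) \geq k\}$ is then the locus where some $k \times k$ minor of $M(x)$ is non-zero, hence open (in the Zariski, and a fortiori the standard, topology); changing the chart or frame changes the minors but not this locus. Consequently $x \mapsto \dim G_x = \dim \mathfrak{g} - \operatorname{rank}\rho_x$ is upper semi-continuous, i.e.\ $X_n = \{x : \operatorname{rank}\rho_x \leq \dim\mathfrak{g} - n\}$ is the complement of an open set, hence closed, which is the claim.

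Most ingredients here — Cartan's closed-subgroup theorem, the identification $\operatorname{Lie}(G_x) = \ker\rho_x$, and lower semi-continuity of matrix rank — are standard, so the only point needing genuine care, and the one I expect to be the (mild) crux, is the bundle-level statement that $x \mapsto \rho_x$ is given by regular data in local coordinates, so that the minor loci are honestly open; this is exactly where one exploits that $\mathbb{P}(\mathcal{H})$ is the projectivization of the vector space $\mathcal{H}$ and that $G$ acts linearly on $\mathcal{H}$, allowing everything to be written down explicitly. An alternative route that sidesteps the local bookkeeping is to apply Chevalley's upper semi-continuity of fibre dimension to the stabilizer group scheme $\mathcal{G} = \{(g,x) \in G \times \mathbb{P}(\mathcal{H}) : g x = x\}$ together with its projection $p$ to $\mathbb{P}(\mathcal{H})$, whose fibre over $x$ is $G_x$; since $G_x$ is a group it is equidimensional, so the pointwise fibre dimension of $p$ at $(e,x)$ equals $\dim G_x$, and upper semi-continuity of this quantity on $\mathcal{G}$ restricts along the identity section $x \mapsto (e,x)$ to upper semi-continuity of $x \mapsto \dim G_x$ on $\mathbb{P}(\mathcal{H})$.
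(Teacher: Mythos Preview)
Your argument is correct and standard: you reduce upper semi-continuity of $\dim G_x$ to lower semi-continuity of $\operatorname{rank}\rho_x$ via $\operatorname{Lie}(G_x)=\ker\rho_x$, and then observe that the infinitesimal action is a regular bundle map because $G$ acts linearly on $\mathcal{H}$. The alternative via Chevalley applied to the stabilizer group scheme is also valid.

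However, note that the paper does \emph{not} prove this lemma at all. It is merely quoted as ``a particular version of Proposition 2.27.\ from \cite{Ho12}'' and then used as a black box in the proof that Definitions~\ref{stabmax} and~\ref{stabstable} are equivalent. So there is no ``paper's own proof'' to compare against; you have supplied a self-contained justification where the paper relies on a citation. If anything, your write-up is more informative than the paper on this point, since it makes explicit why the linearity of the action on $\mathcal{H}$ (and hence the polynomial dependence of $\xi\ket{\Psi}$ on the coordinates of $\ket{\Psi}$) is what drives the regularity of $x\mapsto\rho_x$.
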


Below we prove the equivalence of Definitions \ref{stabmax} and \ref{stabstable}.

\begin{proof}[Proof of the equivalence of Defs. \ref{stabmax} and \ref{stabstable}.]
Let $n_G$ denote the minimal stabilizer dimension for $G \acts \mathbb{P}(\mathcal{H})$ and suppose that the action is stable. Consider a set $X:=X_{n_G} \setminus X_{n_{G}+1}$. The set $X$ contains exactly those orbits whose stabilizers have dimension $n_G$, i.e., orbits with maximal dimension. As $X_{n_G}=\mathbb{P}(\mathcal{H})$ and $X_{n_{G}+1}$ is closed (Lemma \ref{ios}), the set $X$ is open (and dense). From the stability of action we know that there is a set $U$ which is the open (and dense) set of closed orbits mentioned in Definition \ref{stabmax}. Consider the set $Y:= X \cap U$. Then, $Y$ has all of the defining properties of the set $U$ and, in addition, it contains only orbits of the maximal dimension. Thus, we can always choose the set $U$ in Definition \ref{stabmax} in such a way that it contains orbits of the maximal dimension.
\end{proof}

Moreover, from the proof it follows that in case of stable action, the dimension of stable orbits is maximal among all orbits, not only among polystable ones.

\section{Regularity of the GIT-quotient \texorpdfstring{$\pi$}{}}
\label{app:regularmap}

In this appendix, we elaborate in more details on the assumptions of Lemma \ref{fibres} from Section \ref{sec:mainsubsection}. In particular, we show that the lemma applies to the GIT quotient $\pi : \mathbb{P}(\mathcal{H})_{ss} \rightarrow  \mathbb{P}(\mathcal{H})_{ss}//G$.

Let us first introduce the following definitions. Here and in the following, $\mathbb{P}^q$ denotes a $q$-dimensional projective space, e.g., $\mathbb{P}(\mathcal{H})$ for $q=\mathrm{dim}(\mathcal{H})-1$.

\begin{definition}[\cite{SmKa00} Ch. 4, p. 49]
\label{qproj}
An algebraic variety is quasi-projective if it is an intersection of open and/or closed subsets of $\mathbb{P}^q$ considered with the Zariski topology induced from $\mathbb{P}^q$.
\end{definition}

\begin{definition}
\label{irred}
\leavevmode
\begin{enumerate}[(i)]
\item A topological space $X$ is irreducible if $X$ cannot be written as the union of two closed and proper subsets $F_1 \subsetneq X$, $F_2 \subsetneq X$, i.e., $X \neq F_1 \cup F_2$.
\item A subset $A \subset X$ of a topological space $X$ is irreducible if it is irreducible as a topological space with topology induced from $X$.
\end{enumerate}
\end{definition}

With these definitions, the following observation can be obtained.

\begin{observation}
\label{obsirred}
\leavevmode
\begin{enumerate}[(i)]
\item A topological space $X$ is irreducible if and only if any non-empty open subset $U \subset X$ is dense in $X$.
\item Any open subset $U \subset X$ of an irreducible topological space $X$ is irreducible.
\item If $f:X \rightarrow Y$ is a continuous map between topological spaces then if $X$ is irreducible so is $f(X)\subset Y$.
\end{enumerate}
\end{observation}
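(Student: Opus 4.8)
The plan is to establish part (i) first as the key characterization, and then derive (ii) and (iii) from it (with (iii) handled by a direct argument in the same spirit). For (i) I would argue both implications by contradiction. If $X$ is irreducible and $U\subset X$ is non-empty and open but not dense, then $\overline{U}\subsetneq X$; since $U\neq\emptyset$ we also have $X\setminus U\subsetneq X$, so $X=\overline{U}\cup(X\setminus U)$ writes $X$ as a union of two proper closed subsets, contradicting irreducibility. Conversely, if every non-empty open subset of $X$ is dense and $X=F_1\cup F_2$ with $F_1,F_2$ proper and closed, then $X\setminus F_1$ is a non-empty open set contained in the closed set $F_2$, so density of $X\setminus F_1$ forces $F_2=\overline{X\setminus F_1}=X$, contradicting properness of $F_2$.

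For (ii), let $U\subset X$ be a non-empty open subset of the irreducible space $X$; I would verify the criterion of (i) for $U$. Given a non-empty open $V\subset U$, openness of $U$ in $X$ makes $V$ open in $X$, hence dense in $X$ by (i); the closure of $V$ taken in $U$ is $\overline{V}\cap U=X\cap U=U$, so $V$ is dense in $U$. Thus every non-empty open subset of $U$ is dense in $U$, and (i) gives that $U$ is irreducible.

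For (iii), let $f:X\to Y$ be continuous with $X$ irreducible; I would show $f(X)$, endowed with the subspace topology, cannot be a union of two proper closed subsets. Suppose $f(X)=F_1\cup F_2$ with each $F_i$ closed in $f(X)$ and proper, and write $F_i=Z_i\cap f(X)$ with $Z_i$ closed in $Y$. Continuity gives that $f^{-1}(F_i)=f^{-1}(Z_i)$ is closed in $X$, and $f^{-1}(F_1)\cup f^{-1}(F_2)=f^{-1}(F_1\cup F_2)=X$. Irreducibility of $X$ forces one of them to be all of $X$, say $f^{-1}(F_1)=X$, i.e.\ $f(X)\subseteq F_1$, whence $F_1=f(X)$, contradicting properness.

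These are all elementary point-set topology arguments, so I do not anticipate a genuine obstacle; the only points requiring mild care are the bookkeeping of ``closed/open in the subspace versus in the ambient space'' in (ii) and (iii), and—depending on convention—recording that an irreducible space is non-empty, so that in (ii) one tacitly restricts to $U\neq\emptyset$ (which is exactly how the observation will be used later).
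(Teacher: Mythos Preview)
Your proof is correct in all three parts; the contrapositive arguments for (i), the subspace-closure bookkeeping in (ii), and the preimage argument in (iii) are the standard ones and go through without issue. Note that the paper itself does not supply a proof of this observation---it is merely stated in Appendix~\ref{app:regularmap} as a fact that ``can be obtained'' from the definition of irreducibility---so there is nothing to compare against; your write-up is a perfectly adequate justification.
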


Central in Lemma \ref{fibres} is also the definition of a regular map, which is given in the following.

\begin{definition}[\cite{SmKa00}]
\label{reg}
We say that a map $f: X \rightarrow Y$ between a quasi-projective variety $X \subset \mathbb{P}^q$ and a projective variety $Y \subset \mathbb{P}^r$ is regular (or a morphism) if for every $x \in X$ there exist homogeneous polynomials of the same degree $F_0,\ldots, F_r$ in $q+1$ variables (homogeneous coordinates of $x$) such that the map $X \rightarrow \mathbb{P}^r$ given by $x \mapsto [F_0(x): \ldots : F_{r}(x)]$ is well-defined and agrees with $f$ on some open neighborhood of $x$. In particular for every $x \in X$ there has to be the local expression for $f$ of the above form such that ${\ F_j(x) \neq 0}$ for some $j$ .
\end{definition}
It can be seen that one can also use $r$ rational functions $f_i=F_i/F_j$, where $i \neq j$ to define regular maps.

The basic fact states that regular maps between quasi-projective varieties are continuous (see \cite{Sh13}).
Using the discussion above in the context of our GIT quotient $\mathbb{P}(\mathcal{H})_{ss} \rightarrow \mathbb{P}(\mathcal{H})_{ss}//G$, we have
\begin{enumerate}[(i)]
\item Projective spaces (e.g. $\mathbb{P}(\mathcal{H})$) are irreducible.
\item The set of semistable states $\mathbb{P}(\mathcal{H})_{ss}$ is an irreducible quasi-projective variety. Indeed, since $[\mathcal{N}\setminus \{0\}]$ is a closed set, $\mathbb{P}(\mathcal{H})_{ss}$ is an open subset of irreducible projective variety $\mathbb{P}(\mathcal{H})$. Obviously $\mathbb{P}(\mathcal{H})_{ss}=\mathbb{P}(\mathcal{H})_{ss} \cap \mathbb{P}(\mathcal{H})$ and $\mathbb{P}(\mathcal{H})$ is closed.
\item The GIT quotient $\pi:\mathbb{P}(\mathcal{H})_{ss} \rightarrow \mathbb{P}(\mathcal{H})_{ss}//G$ is a regular map between quasi-projective algebraic varieties, i.e. a morphism (see \cite{Br10} p. 10). In particular $\pi$ is continuous.
\item The set $\mathbb{P}(\mathcal{H})_{ss}//G$ is a projective variety (see \cite{Br10} p. 11, Proposition 1.29.). Moreover $\mathbb{P}(\mathcal{H})_{ss}//G$ is irreducible.  Thus, $\mathbb{P}(\mathcal{H})_{ss}//G$ is the irreducible quasi-projective variety.
\end{enumerate}

\section{Additional details on \texorpdfstring{$2 \times m \times n$}{2 x \textit{m} x \textit{n}} classes that are in the null-cone.}
\label{app:nullcone2mn}

In this appendix, we explicitly construct sequences of SL-operators $A_\alpha$, $B_\alpha$, and $C_\alpha$, such that $\lim_{\alpha \rightarrow \infty} A_\alpha \otimes B_\alpha \otimes C_\alpha \ket{\psi} = 0$ for $\ket{\psi} \in \mathbb{C}^2 \otimes \mathbb{C}^m \otimes \mathbb{C}^n$ corresponding to matrix pencils in KCF, in which certain combinations of blocks are present.
Observation \ref{obs:epsilon_nu_null-cone} in the main text does exactly that for matrix pencils (in KCF) that contain an $L_\epsilon$- as well as an $L^T_\nu$-block and thus shows that SLOCC classes corresponding to such matrix pencils are in the null-cone. Here, we show that a similar reasoning applies to matrix pencils whose KCF contains an $L_{\epsilon_1}$- and an $L_{\epsilon_2}$-block of different size, i.e., $\epsilon_1 \neq \epsilon_2$. Moreover, the reasoning also applies to matrix pencils that contain an $L_{\epsilon}$- and an $M$-block. Note however, that those SLOCC classes correspond to the nullcone follows already from \cite{BrLe19} together with Theorem \ref{thm:2mnsemistable}, as discussed in the main text.

Let us first consider an SLOCC class that corresponds to a matrix pencil whose KCF contains an $L_{\epsilon_1}$- and an $L_{\epsilon_2}$-block of different size, i.e., $\epsilon_1 \neq \epsilon_2$. Wlog we assume an ordering of the blocks such that the matrix pencil has the form $\mathcal{P} = L_{\epsilon_1} \oplus L_{\epsilon_2} \oplus \tilde{\mathcal{P}}$, where $\tilde{\mathcal{P}}$  is of size $[m- (\epsilon_1 +\epsilon_2)] \times [n- (\epsilon_1 +\epsilon_2 +2)]$ and contains the remaining blocks. As discussed above, such an SLOCC classes is in the null-cone (the same holds for two $L^T$ blocks of different size). Let us now explicitly construct a sequence of SL-operations which, in the limit, transforms states corresponding to such $\mathcal{P}$ to $0$.
Let us define
\begin{align}
B_\alpha' &= e^{-\frac{\epsilon_2}{\epsilon_2 - \epsilon_1} \alpha} \identity_{\epsilon_1} \oplus  e^{\frac{\epsilon_1}{\epsilon_2 - \epsilon_1} \alpha} \identity_{\epsilon_2} \text{ and} \\
C_\alpha' &=  e^{\frac{\epsilon_2}{\epsilon_2 - \epsilon_1} \alpha} \identity_{\epsilon_1 + 1} \oplus e^{-\frac{ \epsilon_2}{\epsilon_2 - \epsilon_1}\alpha} \identity_1 \oplus  e^{-\frac{\epsilon_1}{\epsilon_2 - \epsilon_1} \alpha} \identity_{\epsilon_2}
\end{align}
of sizes $(\epsilon_1+\epsilon_2) \times (\epsilon_1+\epsilon_2)$ and $(\epsilon_1+\epsilon_2 + 2) \times (\epsilon_1+\epsilon_2 + 2)$, respectively. We take $C_\alpha''$ as in the proof of Observation \ref{obs:epsilon_nu_null-cone}, replacing $\epsilon$ with $\epsilon_1 + 1$, in order to construct
$B_\alpha =  B_\alpha' \oplus \identity_{m-(\epsilon_1 + \epsilon_2)} $ and $C_\alpha = C_\alpha'' \left( C_\alpha' \oplus \identity_{n-(\epsilon_1 + \epsilon_2 + 2)} \right)$. Then, $\lim_{\alpha \rightarrow \infty} B_\alpha \mathcal{P} C_\alpha^T = 0$.
For two $L^T$ blocks instead of two $L$ blocks, the matrices can be constructed analogously.

The intuition behind this construction is similar as in Observation \ref{obs:epsilon_nu_null-cone}. As in Observation \ref{obs:epsilon_nu_null-cone},  $B'$ and $C'$ are constructed in order to damp one of the columns of the matrix pencil. Here, it is the $(\epsilon_1 + 1)$st column of the $L_{\epsilon_1} \oplus L_{\epsilon_2}$ part in $\mathcal{P}$, leaving the remaining columns invariant. Moreover, $C_\alpha''$ is used to damp all the remaining columns at the cost of boosting the $(\epsilon_1 + 1)$st column in such a way that the damping overcompensates the boosting as in Observation \ref{obs:epsilon_nu_null-cone}.

Let us finally consider an SLOCC class that corresponds to a matrix pencil whose KCF contains an $L_{\epsilon}$- and an $M$-block (with arbitrary eigenvalue and arbitrary size $e$). Again, such an SLOCC classes is in the null-cone, as discussed above (the same holds for an $L^T_\nu$- together with an $M$-block).
A sequence of SL-operations which, in the limit, transforms states corresponding to such matrix pencils to $0$ can be constructed as follows.
Consider an ordering of the blocks such that the matrix pencil has the form $\mathcal{P} = L_{\epsilon} \oplus M \oplus \tilde{\mathcal{P}}$, where $\tilde{\mathcal{P}}$ contains the remaining blocks.
Let us define
\begin{align}
B_\alpha' &=  e^{- \alpha} \identity_{\epsilon} \oplus e^{\frac{\epsilon}{e} \alpha} \identity_{e}  \text{ and} \\
C_\alpha' &=  e^{\alpha} \identity_{\epsilon} \oplus  \identity_{1} \oplus e^{-\frac{\epsilon}{e}  \alpha} \identity_{e}
\end{align}
of sizes $(\epsilon + e) \times (\epsilon + e)$ and $(\epsilon+ 1+e) \times (\epsilon + 1+e)$, respectively. We take $C_\alpha''$ as in the proof of Observation \ref{obs:epsilon_nu_null-cone} in order to construct $B_\alpha =  B_\alpha' \oplus \identity_{m-(\epsilon + e)} $ and $C_\alpha = C_\alpha'' \left( C_\alpha' \oplus \identity_{n-(\epsilon + 1 + e)} \right)$. Then, $\lim_{\alpha \rightarrow \infty} B_\alpha \mathcal{P} C_\alpha^T = 0$, which proves the statement. In case of an $L^T$- together with an $M$-block, the matrices can be constructed analogously.

\section{Proof of Lemma \ref{lemma:geometric}}
\label{app:geometric}
In this appendix, we restate and then prove Lemma \ref{lemma:geometric}.

\noindent {{\bf Lemma \ref{lemma:geometric}}{\bf.}} {\it
Let $\{\vec{v}_i\}_{i=1}^l$ be a set of $l$ distinct unit vectors in $\mathbb{R}^3$ with associated multiplicities $m_i \in \mathbb{N}$ sorted in descending order such that $\sum_{i=1}^l m_i = n$ for some $n \geq 3$ and $m_1 < \sum_{i =2}^l m_i$. Let $(\theta_i, \phi_i)$ denote the spherical coordinates of $\vec{v}_i$, i.e.,
\begin{align}
\vec{v_i} = \begin{pmatrix}
\sin \theta_i \cos \phi_i \\
\sin \theta_i \sin \phi_i\\
\cos \theta_i
\end{pmatrix}.
\end{align}
Moreover, let $f_\alpha$ be a (non-linear) transformation acting on the spherical coordinates of the vectors as
\begin{align}
f_{\alpha}: [0, \pi]\times [0, 2\pi] &\rightarrow [0, \pi]\times [0, 2\pi] \nonumber\\
(\theta_i, \phi_i) &\mapsto   ( 2 \arctan \left(\frac{1}{1-\alpha} \tan(\theta_i / 2) \right), \phi_i),
\end{align}
where $\alpha \in [0,1)$. Then, it is always possible to transform the vectors $\{\vec{v}_i\}_{i=1}^l$ to new set of vectors $\{\vec{v'}_i\}_{i=1}^l$ such that $\sum_i m_i \vec{v'}_i = 0$ through some simultaneous $SO(3)$ rotation of the vectors $\vec{v}_i$ followed by $f_{\alpha}$.
}

Let us first discuss some properties of the considered operation before we prove the lemma. First, note that the described operation is exactly the operation described in Section \ref{sec:geometric}. Recall that the non-linear part can be geometrically understood as hinging all $\vec{v}_i$ (except those pointing in direction $(0,0,1)^T$ or $(0,0,-1)^T$) down towards direction $(0,0,-1)^T$ and the $SO(3)$ operation can be understood as choosing some axis $(\Theta, \Phi)$ with respect to which the non-linear transformation is performed.

Let us now define a function $f$ for any set of distinct vectors $\{\vec{v}_i\}_{i=1}^l$ (with mutliplicities $m_i$) fulfilling the premises of Lemma \ref{lemma:geometric} as follows.
\begin{align}
\label{eq:f}
f : D &\rightarrow [0,n^2] \\
(\alpha,\Theta,\Phi) &\mapsto f(\alpha,\Theta,\Phi)=\left| \sum_i m_i \vec{v}'_i \right|^2,
\end{align}
where $D = [0, 1) \times [0, \pi] \times [0, 2 \pi]$ and $\vec{v}'_i$ are obtained by transforming the vectors  $\vec{v}_i$ as follows. First, the vectors $\vec{v}_i$ are rotated around the z-axis by the angle $\Phi$. Then they are rotated around the y-axis by $\Theta$. Finally, $f_\alpha$ is performed on them.
See Figure \ref{fig:operation} for a sketch of the operation.

\begin{figure}[H]
 \includegraphics[width=0.7 \columnwidth]{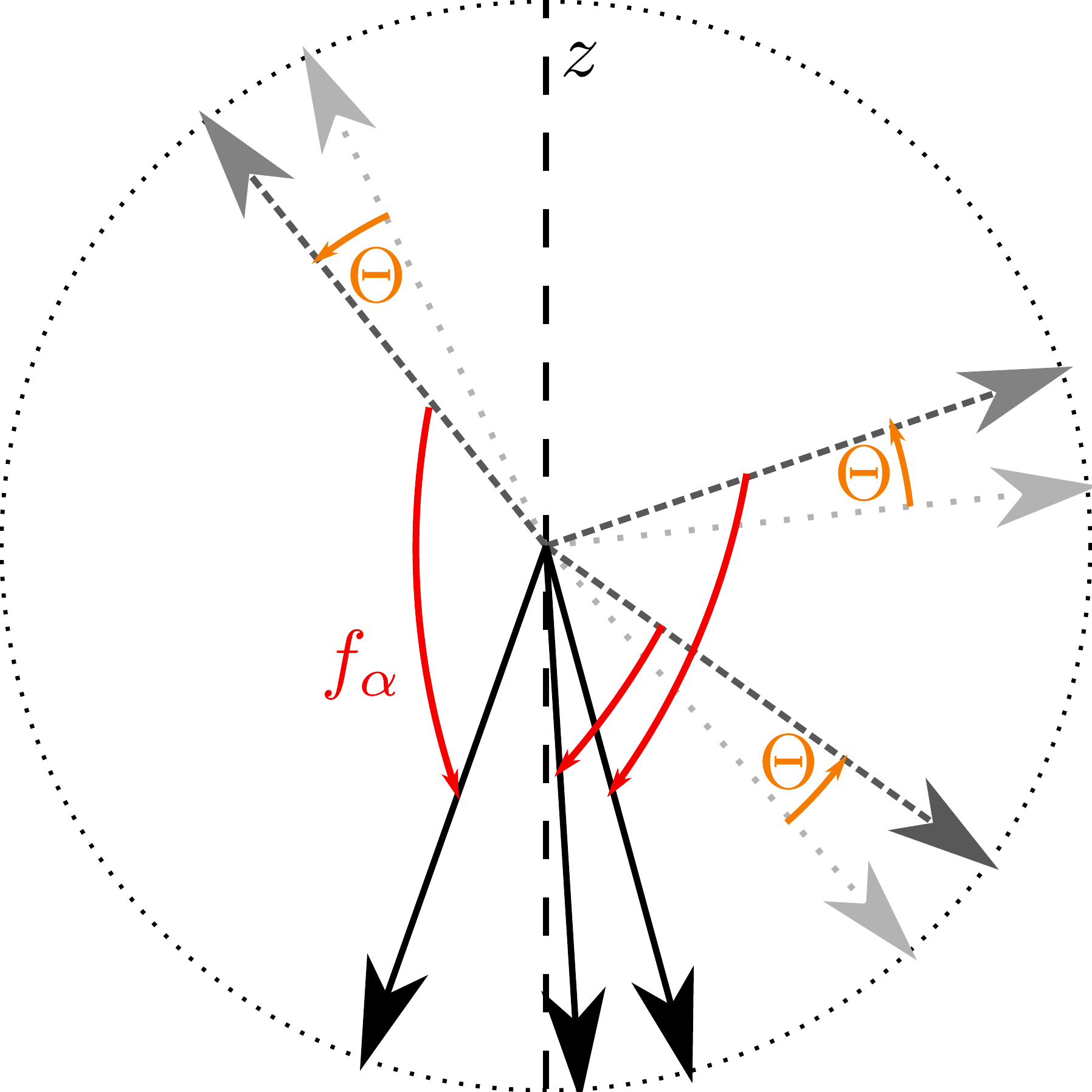}
  \caption{Sketch of the operation considered in Lemma \ref{lemma:geometric}. We depict a case in which $l=3$, $m_i=1$, and for simplicity all vectors (arrows) as well as the $z$-axis (dashed line) lie in the same plane. Initial vectors $\{\vec{v}_i\}$ of unit length (dotted arrows) are first jointly rotated (yielding the dashed arrows) followed by performing the non-linear operation $f_\alpha$ yielding the vectors $\{\vec{v'}_i\}$ (solid arrows), an operation that can be pictorially understood as hinging the vectors down towards negative $z$-direction. In the limit $\alpha \rightarrow 1$, $\{\vec{v'}_i\}$ become perfectly antiparallelly aligned with the $z$-axis. Vectors which would already be perfectly aligned with the $z$-axis after the initial rotation are exceptions to this, they would remain unchanged under $f_\alpha$ (not shown in the graphic).}
  \label{fig:operation}
\end{figure}

 Obviously, $f$ is bounded from below and above respectively by 0 (when the vectors $\vec{v}'_i$ sum to 0) and $n^2$ (when all vectors $\vec{v}'_i$ are aligned, which is not possible via an invertible operation, but nevertheless upper bounds $f$). In order to prove the lemma, we will show that $f$ actually attains 0, regardless of the initial orientations of the vectors, as long as the premises of Lemma \ref{lemma:geometric} are fulfilled.
To this end, let us define $\tilde{f}$ extending the function $f$ to the domain $\tilde{D} = [0, 1] \times [0, \pi] \times [0, 2 \pi]$ through
\begin{align}
\label{eq:ftilde}
\tilde{f}(\alpha,\Theta,\Phi) = \begin{cases}
f(\alpha,\Theta,\Phi), & \text{if } \alpha < 1\\
   (n- 2 m_i)^2 & \text{if } \alpha = 1 \text{ and } (\Theta, \Phi) = \\ &\qquad (\theta_i, \phi_i) \text{ for some } i \\
   n^2 & \text{otherwise}.
\end{cases}
\end{align}
Here, note that the condition $(\Theta, \Phi) = (\theta_i, \phi_i)$ means that the axis with respect to which the hinging operation is performed is exactly aligned with the vector $\vec{v}_i$. Moreover, note that by construction $\tilde{f}(1, \Theta, \Phi) = \lim_{\alpha \rightarrow 1} f(\alpha, \Theta, \Phi)$ for all $\Theta$, $\Phi$.

Let us now state some properties of the functions $f$ and $\tilde{f}$ in the following observation, which will allow us to prove Lemma \ref{lemma:geometric}.

\begin{observation}
\label{obs:properties}
The following statements hold.
\begin{enumerate}[(i)]
\item $f$ is continuous on $D$.
\item For all $x  \in D \setminus L_0(f)$ there exists some $x'$ in $D$ such that $f(x') < f(x)$, where $L_0(f) = \{x \in D : f(x) = 0\}$ is the zero-level set of the function.
\item For a fixed pair $\Theta$, $\Phi$ there exists an $\alpha_*$ such that $f(\alpha, \Theta, \Phi)$ is strictly monotonically increasing in $\alpha$  for all $\alpha > \alpha_*$. Note however, that there does not exist an $\alpha_*$ such that the statement holds for all $\Theta$, $\Phi$.
\item $\tilde{f}$ is lower semi-continuous on $\tilde{D}$. That is, for all $x \in \tilde{D}$ it holds that for all $\delta > 0$ there exists some $R > 0$ such that for all $x' \in \tilde{D}$ with $d(x,x') \leq R$, $\tilde{f}(x') \geq \tilde{f}(x) - \delta$.  Here, $d$ is a distance on $\tilde{D}$.
\item For any fixed $\Theta$ and $\Phi$, there exists some $\alpha \in [0,1)$ such that $ \tilde{f}(\alpha,\Theta,\Phi) < \tilde{f}(1,\Theta,\Phi)$.
\end{enumerate}
\end{observation}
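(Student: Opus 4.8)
The plan is to dispatch the five items in the order (i), (iv), (ii), (iii), (v), since the interior cases of (iv) follow from (i) and (v) will follow from (iii).

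\emph{(i) and (iv).} Item (i) is immediate: $f$ is the composition of $(\Theta,\Phi)\mapsto$ (rotation applied to each $\vec v_i$), which is continuous, the hinging $f_\alpha$, which is jointly continuous in $(\alpha,\theta,\phi)$ on $[0,1)\times[0,\pi]\times[0,2\pi]$ because $1/(1-\alpha)$ and $\arctan$ are continuous there, and $\vec x\mapsto|\vec x|^2$. For (iv), on $D$ we have $\tilde f=f$, which is continuous, hence lower semi-continuous; so it only remains to verify lower semi-continuity at a boundary point $x_0=(1,\Theta_0,\Phi_0)$. Take $x_n=(\alpha_n,\Theta_n,\Phi_n)\to x_0$ and treat separately the subsequence along which $\alpha_n=1$ (finiteness of the set $\{(\theta_i,\phi_i)\}$ forces $\tilde f(x_n)$ to be eventually $\geq\tilde f(x_0)$ there) and the one with $\alpha_n<1$. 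For the latter, the rotated vectors converge to $R(\Theta_0,\Phi_0)\vec v_j$, and every $\vec v_j$ whose rotated limit is \emph{not} the hinge pole satisfies $\psi_j(\alpha_n)\to\pi$, hence $\cos\psi_j(\alpha_n)\to-1$ and $\sin\psi_j(\alpha_n)\to0$. If $(\Theta_0,\Phi_0)$ is not among the $(\theta_i,\phi_i)$, all vectors are of this type, so $\vec w(x_n)\to(0,0,-n)$ and $|\vec w(x_n)|^2\to n^2=\tilde f(x_0)$. If $(\Theta_0,\Phi_0)=(\theta_i,\phi_i)$, then by injectivity of rotations only $\vec v_i$ sits at the pole; its contribution to the $z$-component lies in $[-m_i,m_i]$ while the remaining contributions tend to $-(n-m_i)$, so $\limsup_n W_z(x_n)\leq 2m_i-n<0$ (here $m_i\leq m_1<n/2$ by the premise), whence $\liminf_n|\vec w(x_n)|^2\geq\liminf_n W_z(x_n)^2\geq(n-2m_i)^2=\tilde f(x_0)$.

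\emph{(ii).} The cleanest route is the Bloch-sphere picture. On $S^2\cong\mathbb{CP}^1$ the operation ``rotate, then $f_\alpha$'' is the Möbius action of the $SL(2,\mathbb{C})$ element $D_\alpha R(\Theta,\Phi)$ with $D_\alpha=\operatorname{diag}(1/\sqrt{1-\alpha},\sqrt{1-\alpha})$; a trailing rotation may be absorbed without changing $|\sum_i m_i\vec v'_i|$, so by the Cartan ($KAK$) decomposition the set of values of $f$ equals $\{\,|\sum_i m_i(g\cdot\vec v_i)|^2 : g\in SL(2,\mathbb{C})\,\}$, where $g\cdot$ is the induced action. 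Given $g$ with $\vec w:=\sum_i m_i(g\cdot\vec v_i)\neq0$, perturb $g$ by the boost $\exp(s\,\hat w\cdot\vec\sigma/2)$; since such a boost moves a Bloch vector $\vec u$ with velocity proportional to $\hat w-(\hat w\cdot\vec u)\vec u$, one gets $\frac{d}{ds}|\vec w|^2\big|_{s=0}\propto\sum_i m_i\bigl(|\vec w|^2-(\vec w\cdot(g\cdot\vec v_i))^2\bigr)\geq0$, with equality only if every $g\cdot\vec v_i$ is parallel to $\vec w$. As $g\cdot$ is a bijection of $S^2$, at most one $g\cdot\vec v_i$ equals $+\hat w$ and at most one equals $-\hat w$; the premise $m_1<\sum_{i\geq2}m_i$ rules out $l\leq2$ and so forces $l\geq3$, making this impossible. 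Hence the derivative is strictly positive, the opposite boost strictly lowers $|\vec w|^2$, and pulling back through the above identification yields $x'\in D$ with $f(x')<f(x)$.

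\emph{(iii) and (v).} For (iii) fix $(\Theta,\Phi)$ and set $\beta=1-\alpha\to0^+$; write the rotated vectors in spherical coordinates $(\theta^u_j,\phi^u_j)$ and put $t_j=\tan(\theta^u_j/2)$. For a ``middle'' vector ($0<\theta^u_j<\pi$) one has $\psi_j=\pi-2\beta/t_j+O(\beta^3)$, and feeding this into $\vec w(\alpha)$ gives $|\vec w(\alpha)|^2=(W_z^\infty)^2+4\beta^2\bigl(sW_z^\infty+|\vec a|^2\bigr)+O(\beta^4)$, where $W_z^\infty=2P-n$ ($P$ the weight sitting at the hinge pole), $s=\sum_{\mathrm{mid}}m_j/t_j^2$ and $\vec a=\sum_{\mathrm{mid}}(m_j/t_j)(\cos\phi^u_j,\sin\phi^u_j)$. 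Thus for small $\beta>0$, $\frac{d}{d\alpha}|\vec w|^2$ has the sign of $-(sW_z^\infty+|\vec a|^2)$, and the task reduces to showing this bracket is negative: $W_z^\infty=2P-n<0$ because $P\leq m_1<n/2$ (the premise), $|\vec a|^2\leq(\sum_{\mathrm{mid}}m_j)\,s$ by Cauchy--Schwarz, and combining this with the bookkeeping of pole versus middle weights and the distinctness of the $\vec v_i$ (excluding the equality cases) yields $sW_z^\infty+|\vec a|^2<0$; hence $f(\cdot,\Theta,\Phi)$ is strictly increasing on some interval $(\alpha_*,1)$. That $\alpha_*$ cannot be chosen uniformly in $(\Theta,\Phi)$ is exactly the ``Note however$\ldots$'': as $(\Theta,\Phi)$ approaches an axis through one of the $\vec v_i$, $s$ and $|\vec a|$ blow up and $W_z^\infty$ jumps. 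Finally (v) follows at once: by construction $\tilde f(1,\Theta,\Phi)=\lim_{\alpha\to1}f(\alpha,\Theta,\Phi)$, and strict monotonicity on $(\alpha_*,1)$ gives $f(\alpha,\Theta,\Phi)<\tilde f(1,\Theta,\Phi)$ for every $\alpha\in(\alpha_*,1)$.

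\emph{Main obstacle.} Items (i) and the interior of (iv) are routine; the boundary part of (iv) is a careful limit computation. The real work is the sign of the $O(\beta^2)$-coefficient in (iii): this is where the hypothesis $m_1<\sum_{i\geq2}m_i$ (through $m_1<n/2$) and the distinctness of the $\vec v_i$ are indispensable, and where the axis-aligned configurations $(\Theta,\Phi)=(\theta_i,\phi_i)$ must be handled with extra care. The $SL(2,\mathbb{C})$/moment-map reformulation underlying (ii) is conceptually the cleanest, but one must set up the Cartan-decomposition identification precisely so that the boost argument transfers back to the original parametrization.
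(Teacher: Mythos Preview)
Your treatments of (i) and (iv) are correct and close in spirit to the paper's: you use sequential limits at the boundary where the paper constructs an explicit $\alpha_{\min}$, but both rest on the same dichotomy ``axis aligned with some $\vec v_i$'' versus ``not aligned''. Your proof of (ii) is correct and takes a genuinely different route. The paper first rotates so that $\vec w$ points along the $z$-axis and Taylor-expands $f$ to first order in the hinge parameter, obtaining a strictly negative linear term unless all vectors lie on the $z$-axis. You instead identify the operation with the M\"obius action of $SL(2,\mathbb C)$ on the sphere and differentiate $|\sum_i m_i(g\cdot\vec v_i)|^2$ along a boost in the direction $\hat w$; this is the moment-map picture and makes the equality case (all $g\cdot\vec v_i\parallel\hat w$, impossible for $l\ge 3$ distinct vectors) transparent without the explicit coordinate computation the paper needs.

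There is, however, a genuine gap in your argument for (iii). The expansion $|\vec w|^2=(W_z^\infty)^2+4\beta^2\bigl(sW_z^\infty+|\vec a|^2\bigr)+O(\beta^4)$ is correct, but the claimed inequality $sW_z^\infty+|\vec a|^2<0$ does not follow from Cauchy--Schwarz together with $W_z^\infty<0$. Writing $P,S,M$ for the weights at the north pole, south pole, and strictly in between, Cauchy--Schwarz yields only $|\vec a|^2\le Ms$, hence $sW_z^\infty+|\vec a|^2\le s(2P-n+M)=s(P-S)$, and nothing forces $P<S$. Concretely, take $n=3$, $m_1=m_2=m_3=1$, axis $(\Theta,\Phi)=(\theta_1,\phi_1)$ so that $\vec v_1$ sits at the north pole while $\vec v_2,\vec v_3$ are equatorial at azimuths $0$ and $\gamma\in(0,\pi/2)$. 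Then $P=1$, $S=0$, $t_2=t_3=1$, $s=2$, $|\vec a|^2=2+2\cos\gamma$, so the coefficient equals $2\cos\gamma>0$. A direct computation gives $f=3+2\cos\gamma+4\cos\psi+(2-2\cos\gamma)\cos^2\psi$, which is strictly \emph{decreasing} on the whole range $\psi\in[\pi/2,\pi)$ (equivalently $\alpha\in[0,1)$), with limit $1=(n-2m_1)^2=\tilde f(1,\Theta,\Phi)$. Thus statements (iii) and (v), as formulated, are actually \emph{false}; neither your asymptotic argument nor the paper's unproved assertion ``once all remaining vectors lie in the southern hemisphere, $f$ increases'' can be repaired for this axis. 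This is harmless for Lemma~\ref{lemma:geometric}, which only needs that the global minimum of $\tilde f$ on $\tilde D$ is not attained on $\{\alpha=1\}$, not that the boundary value can be beaten along each individual ray in $\alpha$.
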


With this observation, we are now in the position to prove Lemma \ref{lemma:geometric}. Afterwards, we will prove the observation.

\begin{proof}[Proof of Lemma \ref{lemma:geometric}]
As discussed above, in order to prove the lemma it suffices to show that $f$ as defined in Eq. (\ref{eq:f}) attains zero. To this end, we also make use of $\tilde{f}$ as defined in Eq. (\ref{eq:ftilde}) and Observation \ref{obs:properties}.
As $\tilde{f}$ is a lower semi-continuous function on a compact domain $\tilde{D}$ according to Observation \ref{obs:properties} (iv), Weierstrass' extreme value theorem guarantees that $\tilde{f}$ attains its infimum on $\tilde{D}$. Let us denote the point at which the infimum is attained by $x_0 = (\alpha_0, \Theta_0, \Phi_0) \in \tilde{D}$. Observation \ref{obs:properties} (v) guarantees that $\alpha_0 < 1$ and therefore $x_0 \in D$. Hence, also $f$ attains its infimum at $x_0$. Lastly, Observation \ref{obs:properties} (ii) guarantees that this infimum is $0$.
\end{proof}

\begin{proof}[Proof of Observation \ref{obs:properties}]
Proof of statement (i). To see that $f$ is continuous on $D$, it can first be verified that $f$ can be written as
\begin{align}
 f(\alpha,\Theta,\Phi)&= \left| \sum_i m_i  \frac{2 \Im\{(a x_i + b) (c x_i + d)^*\} }{|c x_i + d|^2 + |a x_i + b|^2} \right|^2 \nonumber\\
                                 &\qquad+ \left| \sum_i m_i  \frac{2 \Re\{(a x_i + b) (c x_i + d)^*\} }{|c x_i + d|^2 + |a x_i + b|^2} \right|^2 \nonumber\\
                                 &\qquad+ \left| \sum_i m_i \frac{|c x_i + d|^2 - |a x_i + b|^2}{|c x_i + d|^2 + |a x_i + b|^2} \right|^2,
\end{align}
where $a = \sqrt{\frac{1}{1 - \alpha}} e^{ - I \frac{\Phi}{2}} \cos \frac{\Theta}{2}$, $b = - \sqrt{\frac{1}{1 - \alpha}} e^{ I \frac{\Phi}{2}} \sin \frac{\Theta}{2}$, $c = \sqrt{1-\alpha} e^{ - I \frac{\Phi}{2}} \sin \frac{\Theta}{2}$, and $d = \sqrt{1-\alpha} e^{ I \frac{\Phi}{2}} \cos \frac{\Theta}{2}$. Let us remark here, that $a,b,c,d$ are actually the matrix elements of the corresponding operator $A$ as in Section \ref{sec:geometric} (which were denoted by $\alpha$, $\beta$, $\gamma$, and $\delta$, there). Note that $f$ is built by concatenation, addition, and multiplication of elementary functions that are all continuous, as well as rational functions, which are continuous except at points where the denominator is 0.
However, it can be easily verified that none of the denominators appearing in Eq. (\theequation) can attain 0. Thus, $f$ is continuous on $D$. Let us remark here, however, that $f$ is not uniformly continuous.

Proof of statement (ii). Let us first show the statement for $x=(0,0,0)$, which corresponds to $A = \identity$. Consider $x' = (\frac{\epsilon}{1+ \epsilon}, \Theta, \Phi)$, where $\epsilon > 0$, $\Theta = \arccos \sum_i m_i \cos \theta_i$, $\Phi =  \arcsin \frac{\sum_i m_i \sin \phi_i \sin \theta_i}{\sin \Theta}$. $x'$ corresponds to first rotating the vectors $\vec{v_i}$ such that the with multiplicities weighted vector sum of $\vec{v}_i$ points into direction $(0,0,1)^T$ followed by an $\epsilon$-hinging transformation.
 Taylor expansion of $f$ around $\epsilon = 0$ shows
\begin{align}
f(\frac{\epsilon}{1+ \epsilon},\Theta,\Phi) &= f(0, \Theta,\Phi) \nonumber\\
			& \qquad - 2 \epsilon  \left(\sum_i m_i \cos \tilde{\theta}_i \right) \left(\sum_i m_i \sin^2 \tilde{\theta}_i \right) \nonumber\\
			&\qquad  + \mathcal{O}(\epsilon^2),
\end{align}
where by $\tilde{\theta}_i$ we denote the spherical coordinates of the vectors after the rotation. It is clear that there exists an $\epsilon > 0$ such that $f(x') < f(x)$, unless either $\sum_i m_i \cos \tilde{\theta}_i= 0$, or $\sum_i m_i \sin^2 \tilde{\theta}_i =0$. In case $\sum_i m_i \cos \tilde{\theta}_i= 0$ the vectors already sum to 0, i.e., $f(x) = 0$. In the latter case, all vectors are pointing either in direction $(0,0,1)^T$ or $(0,0,-1)^T$. This cannot be the case as by assumption the vectors point in at least three distinct directions ($l \geq 3$).
Above we have proven statement (ii) for the special case $x = (0,0,0)$. Note that this shows that for any initial orientation of the vectors $\{\vec{v}_i\}$, one can find an operator $A$ such that the resulting vectors $\{\vec{v'}_i\}$ lead to a smaller value for $f$. This is due to the fact that $x = (0,0,0)$ corresponds to $A=\identity$. The situation $x \neq (0,0,0)$, i.e., $A\neq \identity$, can be easily dealt with by applying the above procedure to the vectors $\vec{v_i}$ obtained after an initial transformation defined by $x$, i.e., $A$. Then, $x'$ can be constructed by concatenating the $(\frac{\epsilon}{1 + \epsilon},\Theta,\Phi)$-transformation and the transformation defined by the initial $x$.

Proof of statement (iii). Let us first remark that this statement only holds if $m_1 < \sum_{i=2}^l m_i$, which holds true due to the assumption. Let us distinguish two cases. In the first case, none of the $\vec{v}_i$ coincides with the axis defined by $(\Theta, \Phi)$, i.e., there exists no $i$ such that $\phi_i = \Phi_i$ and $\theta_i = \Theta_i$. Hence, by increasing $\alpha$, the transformation $A$ hinges all of the vectors down with respect to the axis $(\Theta, \Phi)$. We will use the fact that once all of the vectors lie in the southern hemisphere with respect to the axis, increasing $\alpha$ further will only increase the squared norm of the vector sum, $f(\alpha,\Theta,\Phi)$. The $\alpha_*$, which the statement guarantees to exist, can thus be simply determined by $\alpha_* = 0$ in case all $\vec{v}_i$ already lie in the southern hemisphere, or
\begin{align}
\alpha_* = 1 - \tan(\theta/2),
\end{align}
where $\theta$ is the angle between the transformation axis and the vector $\vec{v}_i$ that is the most aligned with the axis, otherwise. Let us now deal with the second case, in which one $\vec{v}_i$ exactly coincides with the transformation axis. This vector is invariant under the transformation. $\alpha_*$ can then be determined following the same procedure as in the first case, completely disregarding the vector coinciding with the axis, i.e., it can be constructed by requiring that all of the remaining $l-1$ distinct vectors lie in the southern hemisphere.

Proof of statement (iv). Due to statement (i), $\tilde{f}$ is continuous in $D$. It remains to be shown that $\tilde{f}$ is lower semi-continuous in $\tilde{D} \setminus D$. Let $x = (1, \Theta, \Phi)\in \tilde{D} \setminus D$. We now distinguish the two cases $( \Theta, \Phi) \neq (\theta_i, \phi_i)$ for all $i$ and $( \Theta, \Phi) = (\theta_i, \phi_i)$ for some $i$.

Let us show that $\tilde{f}$ is actually continuous in points that belong to the first case. To this end, it suffices to show that for every $\delta>0$ there exists an $R$, such that for all $\epsilon_\alpha, \epsilon_\Theta, \epsilon_\Phi$ with $(1+\epsilon_\alpha, \Theta + \epsilon_\Theta, \Phi + \epsilon_\Phi) \in \tilde{D}$ and $\epsilon_\alpha^2 + \epsilon_\Theta^2 + \epsilon_\Phi^2 \leq R$, it holds that
\begin{align}
\label{eq:cont}
|\tilde{f} (1 + \epsilon_\alpha, \Theta + \epsilon_\Theta, \Phi + \epsilon_\Phi) - \tilde{f}(1,\Theta,\Phi)| \leq \delta.
\end{align}
 Let us denote the angle between the axis defined by $(\Theta, \Phi)$ and the vector $\vec{v}_i$ which is the most aligned with $(\Theta, \Phi)$ by $\theta$. Then let us first choose $R$ such that the angle between vectors $\vec{v}_i$ and directions $(\Theta + \epsilon_\Theta , \Phi + \epsilon_\Phi)$ is at least $\theta/2$ for all $\epsilon_\Theta, \epsilon_\Phi$. This makes sure that no vector $\vec{v}_i$ points into a direction that lies within (or is close to) the chosen $R$-region.
Now either Eq. (\theequation) is already satisfied with the chosen $R$, or we decrease $R$ in a second step such that the condition in Eq. (\theequation) is met.
To this end, let us find an $\alpha_{\text{min}} \in [0, 1)$ such that for all $(\Theta+\epsilon_\Theta, \Phi + \epsilon_\Phi)$ with $\epsilon_\Theta^2 + \epsilon_\Phi^2 \leq R$, it holds that $\tilde{f} (\alpha_{\text{min}}, \Theta + \epsilon_\Theta, \Phi + \epsilon_\Phi) \geq n^2-\delta$. This can be achieved as follows. First, one identifies the vector $(\Theta', \Phi')$ among $(\Theta+\epsilon_\Theta, \Phi + \epsilon_\Phi)$, which is the closest to some $\vec{v}_i$ (let us denote the corresponding angle between those two vectors by $\beta \geq \theta/2$). Then, choose $\alpha_{\text{min}}$ such that the projection of the transformed ${\vec{v'}_i}$ onto direction $(\Theta', \Phi')$ is antiparallely aligned with the direction $(\Theta', \Phi')$ and, moreover, the projected vector has a norm of at least $\sqrt{1-\delta/n^2}$. Choosing
\begin{align}
\alpha_{\text{min}} = \text{max}\left\{0,1- \tan\left( \frac{\beta}{2} \right) \tan \left(  \frac{1}{2} \arccos \sqrt{1-\frac{\delta}{n^2}} \right)\right\}
\end{align}
 suffices. Pictorially, this can be understood as hinging down the vector $\vec{v}_i$ sufficiently, as sketched in Figure \ref{fig:hinging}. Note that all other vectors are hinged down even more, i.e., each vector $\vec{v'}_i$ projected onto the $(\Theta', \Phi')$-axis has a norm of at least $\sqrt{1-\delta/n^2}$. Moreover, increasing $\alpha$ beyond $\alpha_{\text{min}}$ only increases the antiparallel alignment of the vectors with the $(\Theta', \Phi')$ direction. Thus, the norm of each of the projected vectors increases individually. These considerations indeed show that $\tilde{f} (\alpha \geq \alpha_{\text{min}}, \Theta + \epsilon_\Theta, \Phi + \epsilon_\Phi)  \geq n^2-\delta$.
 Now, note that in case $(1-\alpha_{\text{min}})^2 \geq R$, Eq. (\ref{eq:cont}) is already satisfied. Otherwise let us redefine $R= (1-\alpha_{\text{min}})^2$. Then, Eq.  (\ref{eq:cont}) holds for the new choice of $R$. This shows that $\tilde{f}$ is indeed continuous in the points $ (1, \Theta, \Phi)\in \tilde{D} \setminus D$ for which $( \Theta, \Phi) \neq (\theta_i, \phi_i)$ for all $i$.

\begin{figure}[H]
 \includegraphics[width=1.0 \columnwidth]{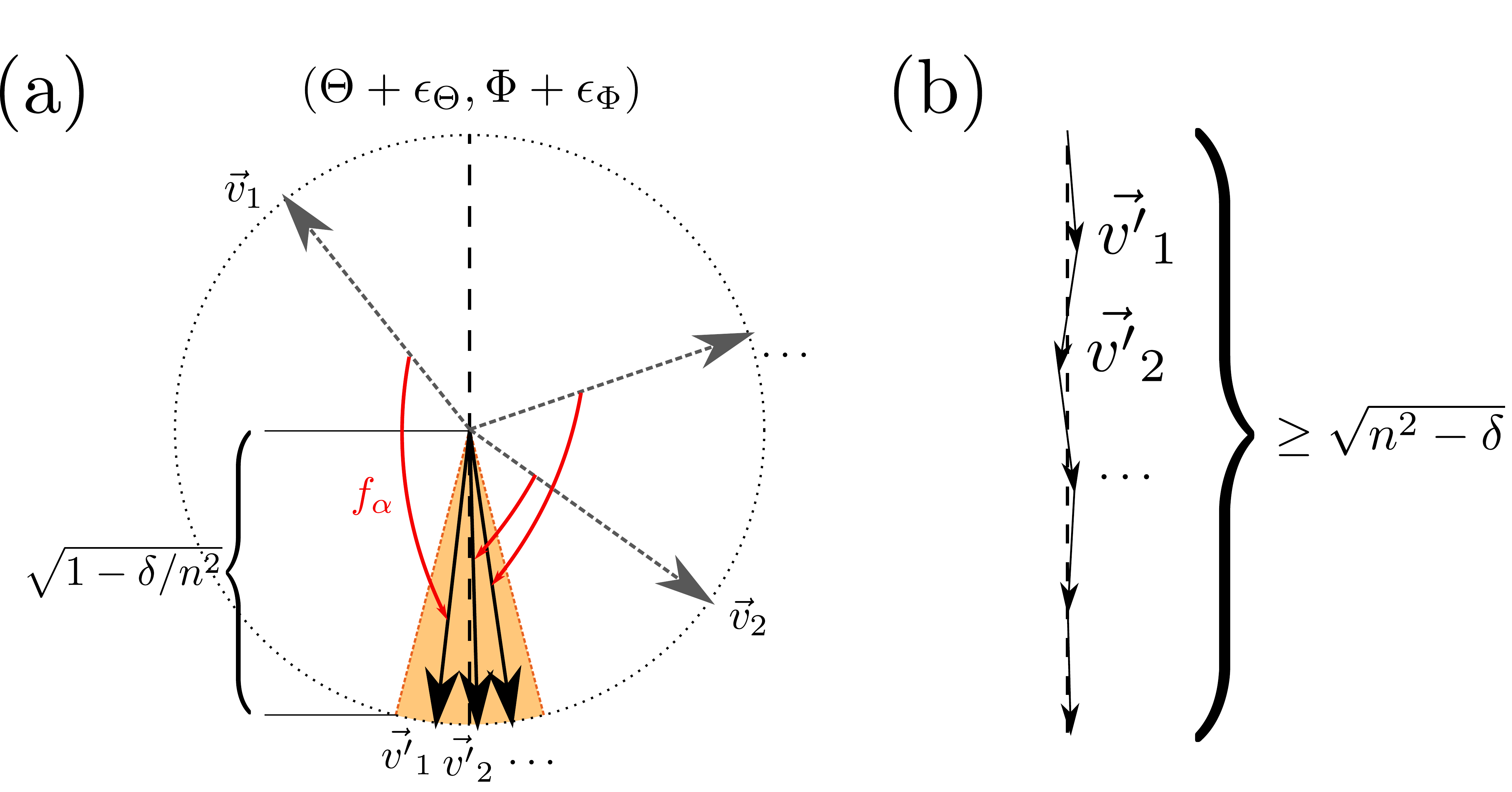}
  \caption{Figure (a) shows that $R$ can always be chosen s.t. for all $\epsilon_\alpha, \epsilon_\Theta, \epsilon_\Phi$ with $(1+\epsilon_\alpha, \Theta + \epsilon_\Theta, \Phi + \epsilon_\Phi) \in \tilde{D}$ and $\epsilon_\alpha^2 + \epsilon_\Theta^2 + \epsilon_\Phi^2 \leq R$, Eq. (\ref{eq:cont}) holds (first case). As explained in the main text, the vectors $\vec{v}_i$ are hinged down (along the axis $(\Theta + \epsilon_\Theta, \Phi +\epsilon_\Phi)$) sufficiently, i.e. with sufficiently small $\epsilon_\alpha$, such that all vectors are transformed into the shaded region, which defines the region for which Eq. (\ref{eq:cont}) is fulfilled. This is even more precisely depicted in Figure (b), which shows that the vectors in the shaded region in Figure(a) lead to a value of $f$, which is at least $n^2 - \delta$. Recall that $f(1,\Theta,\Phi) = n^2$. The second case works similarly, as explained in the main text.}
  \label{fig:hinging}
\end{figure}

Let us now show that $\tilde{f}$ is lower semi-continuous in points $ (1, \theta_i, \phi_i)$, i.e., points that belong to the second case. Recall that $\tilde{f}(1,  \theta_i, \phi_i) = (n - 2 m_i)^2$. Obviously, $\tilde{f}$ is not continuous, as, e.g., $\tilde{f}(1,  \theta_i + \epsilon_\Theta, \phi_i) = n^2$ for any small $\epsilon_\Theta \neq 0$. Let us follow a similar procedure as in the first case, however, let us completely disregard the vector $\vec{v}_i$ that is perfectly aligned with $(\Theta, \Phi)$. Instead, we make sure to find an $R$ such that the (with multiplicities weighted) norm of the sum of the projections onto the $(\theta_i + \epsilon_\Theta, \phi_i + \epsilon_\Phi)$-direction of the remaining vectors $\vec{v'}_j$ amounts to at least $\sqrt{(n- 2 m_i)^2 - \delta}$. Note that no matter where $\vec{v'}_i$ is pointing (actually, it will not be possible to make a statement about where it will point to), it holds that the whole (with multiplicities weighted) sum of $\vec{v'}_j$ will (within the relevant region around $(1, \theta_i, \phi_i)$) have a squared norm of at least $(n-2m_i)^2 - \delta$. In other words, for all $\delta$ we have constructed an $R$ such that  for all $\epsilon_\alpha, \epsilon_\Theta, \epsilon_\Phi$ such that $\epsilon_\alpha^2 + \epsilon_\Theta^2 + \epsilon_\Phi^2 \leq R$, $\tilde{f} (1 + \epsilon_\alpha, \Theta + \epsilon_\Theta, \Phi + \epsilon_\Phi) \geq \tilde{f}(1,\Theta,\Phi) - \delta$. This shows that $\tilde{f}$ is lower semi-continuous in the points $(1,\theta_i,\phi_i)$ and completes the proof of the fact that $\tilde{f}$ is lower semi-continuous on $\tilde{D}$.

Proof of statement (v). It is quite simple to see that the statement holds for the case $( \Theta, \Phi) \neq (\theta_i, \phi_i)$ for all $i$.
As $\sum m_i = n$, we have that $\left| \sum_i m_i \vec{v_i} \right|^2 = n^2$ only if all vectors are aligned. With the operation considered in the lemma, degeneracies of vectors cannot be created or lifted. Hence, $f(D) < n^2$ and thus also $\tilde{f}(D) < n^2$. It remains to be proven that in case $( \Theta, \Phi) = (\theta_i, \phi_i)$ for some $i$, moreover, $\tilde{f}(\alpha, \Theta, \Phi) < (n-2m_i)^2$ for some $\alpha \in [0,1)$. Similarly as in statement (iii), we have that for sufficiently large $\alpha \in [0,1)$ all the vectors except $\vec{v}_i$ lie in the southern hemisphere with respect to the axis $(\Theta, \Phi)$. Similarly as before, it can be seen that from then on it holds that $\tilde{f}(\alpha, \Theta, \Phi) < (n-2m_i)^2$ for all such sufficiently large $\alpha< 1$. This proves statement (v).
\end{proof}

\onecolumngrid
\section{Tables of SLOCC classes  in \texorpdfstring{$2 \times m \times n$}{2 x \textit{m} x \textit{n}}  and their orbit type for small system sizes}
\label{app:tables}

In this appendix, we provide complete tables for SLOCC classes in $2 \times m \times n$  systems for small system sizes, that is, up to $m \leq n \leq 5$, building on the characterization of SLOCC classes in $2 \times m \times n$ systems provided in \cite{ChMi10}. In particluar, we list the respective orbit types according to the charactrization presented in Section \ref{sec:2mncharacterization}. Already within such small system sizes, a rich variety of different behaviours manifests.
In the main text we have shown that strictly semistable states exist if and only if $m=n \geq 4$ (Theorem \ref{thm:2mnsemistable}). The systems for which we provide tables here start from the the well known 3-qubit system \cite{DuVi00}. Then, examples in which no strictly semistable states exist are covered (all system sizes up to $2 \times 3 \times 5$ as well as $2\times 4 \times 5$). Moreover, cases with a rich structure ($2 \times 4 \times 4$ and $2 \times 5 \times 5$), where all different orbit types are present, are covered. The tables also include an instance of a system, in which all classes are in the null-cone, $2 \times 3 \times 5$. Moreover, for $n = 2m$, it is known there exsits only a single SLOCC class containing fully entangled states \cite{DuSh10}, which is represented by two bipartite maximally entangled states shared by party A and C, as well as B and C, respectively. Within the here considered system sizes, this becomes apparent in $2 \times 2 \times 4$.

The tables give an exhaustive list of all SLOCC classes (families of SLOCC classes) of the considered dimensions in terms of the corresponding matrix pencil in KCF. One representative state (disregarding normalization for brievity) is given for each class, as well as the class' orbit types, and the maximal orbit dimensions \cite{BrLe19}. Moreover, we state how the orbit type has been identified. For strictly semistable classes, the polystable classes in their closure is identified.

\newpage

\subsection{\texorpdfstring{$2 \times 2 \times 2$}{2 x 2 x 2}}
The dimension of $\mathbb{P}(\mathcal{H})$ is 7, the maximal orbit dimension is 7.

\begingroup
\begin{table}[H]
\centering
\begin{tabular}{p{0.3cm}p{3.7cm}p{5.5cm}p{1.5cm}p{4cm}}
\multicolumn{1}{>{\arraybackslash}p{0.3cm}}{No.} 
    & \multicolumn{1}{>{\arraybackslash}p{3.7cm}}{Matrix Pencil} 
    & \multicolumn{1}{>{\arraybackslash}p{5.5cm}}{A representative}
    & \multicolumn{1}{>{\arraybackslash}p{1.5cm}}{Type}
    & \multicolumn{1}{>{\arraybackslash}p{4cm}}{Comments; Identified via}\\
 \hline
1 &  $\begin{pmatrix}
\lambda   & \mu  \\
   &  \lambda
\end{pmatrix}$ &  $\begin{aligned} \ket{\psi_1} & = \ket{0}\ket{01} \\ & + \ket{1}(\ket{00}+\ket{11})\end{aligned}$  & Null-cone &   Observation \ref{obs:diagonalreduction} and the corresponding diagonal matrix pencil (see Observation \ref{obs:diagonalreduction}) is separable in the splitting A|BC; Equivalent to three-qubit W-state.  \\

2 &  $\begin{pmatrix}
\lambda   &   \\
   &  \mu
\end{pmatrix}$ &  $\begin{aligned} \ket{\psi_2} & = \ket{0}\ket{11} \\ & + \ket{1}\ket{00}\end{aligned}$  & Stable &  Theorem \ref{thm:diagtype};  Equivalent to three-qubit GHZ-state. \\
\end{tabular}
\end{table}
\endgroup

\subsection{\texorpdfstring{$2 \times 2 \times 3$}{2 x 2 x 3}}
The dimension of $\mathbb{P}(\mathcal{H})$ is 11, the maximal orbit dimension is 11.

\begingroup
\begin{table}[H]
\centering
\begin{tabular}{p{0.3cm}p{3.7cm}p{5.5cm}p{1.5cm}p{4cm}}
\multicolumn{1}{>{\arraybackslash}p{0.3cm}}{No.} 
    & \multicolumn{1}{>{\arraybackslash}p{3.7cm}}{Matrix Pencil} 
    & \multicolumn{1}{>{\arraybackslash}p{5.5cm}}{A representative}
    & \multicolumn{1}{>{\arraybackslash}p{1.5cm}}{Type}
    & \multicolumn{1}{>{\arraybackslash}p{4cm}}{Comments; Identified via}\\
 \hline
1 &  $\begin{pmatrix}
\lambda   & \mu&  \\
   &  \lambda & \mu
\end{pmatrix}$ &  $\begin{aligned} \ket{\psi_1} & = \ket{0}(\ket{01} + \ket{12}) \\ & + \ket{1}(\ket{00}+\ket{11})\end{aligned}$  & Stable &   Shown in \cite{BrLe19}; Generic class.  \\

2 &  $\begin{pmatrix}
\lambda   & \mu &   \\
   &  & \lambda
\end{pmatrix}$ &  $\begin{aligned} \ket{\psi_2} & = \ket{0}\ket{01} \\ & + \ket{1}(\ket{00} + \ket{12})\end{aligned}$  & Null-cone &  Appendix \ref{app:nullcone2mn}.\\
\end{tabular}
\end{table}
\endgroup

\subsection{\texorpdfstring{$2 \times 2 \times 4$}{2 x 2 x 4}}
The dimension of $\mathbb{P}(\mathcal{H})$ is 15, the maximal orbit dimension is 15.

\begingroup
\begin{table}[H]
\centering
\begin{tabular}{p{0.3cm}p{3.7cm}p{5.5cm}p{1.5cm}p{4cm}}
\multicolumn{1}{>{\arraybackslash}p{0.3cm}}{No.} 
    & \multicolumn{1}{>{\arraybackslash}p{3.7cm}}{Matrix Pencil} 
    & \multicolumn{1}{>{\arraybackslash}p{5.5cm}}{A representative}
    & \multicolumn{1}{>{\arraybackslash}p{1.5cm}}{Type}
    & \multicolumn{1}{>{\arraybackslash}p{4cm}}{Comments; Identified via}\\
 \hline
1 &  $\begin{pmatrix}
\lambda   & \mu&  & \\
   & &   \lambda & \mu
\end{pmatrix}$ &  $\begin{aligned} \ket{\psi_1} & = \ket{0}(\ket{01} + \ket{13}) \\ & + \ket{1}(\ket{00}+\ket{12})\end{aligned}$  & Stable &   Equivalent to $\ket{\phi^+}_{AC}\ket{\phi^+}_{BC}$; Only SLOCC class containing fully entangled states. \\
\end{tabular}
\end{table}
\endgroup

\subsection{\texorpdfstring{$2 \times 3 \times 3$}{2 x 3 x 3}}
The dimension of $\mathbb{P}(\mathcal{H})$ is 17, the maximal orbit dimension is 17.

\begingroup
\begin{table}[H]
\centering
\begin{tabular}{p{0.3cm}p{3.7cm}p{5.5cm}p{1.5cm}p{4cm}}
\multicolumn{1}{>{\arraybackslash}p{0.3cm}}{No.} 
    & \multicolumn{1}{>{\arraybackslash}p{3.7cm}}{Matrix Pencil} 
    & \multicolumn{1}{>{\arraybackslash}p{5.5cm}}{A representative}
    & \multicolumn{1}{>{\arraybackslash}p{1.5cm}}{Type}
    & \multicolumn{1}{>{\arraybackslash}p{4cm}}{Comments; Identified via}\\
 \hline
1 &  $\begin{pmatrix}
\lambda   & \mu&  \\
   &    & \lambda\\
   &   & \mu
\end{pmatrix}$ &  $\begin{aligned} \ket{\psi_1} & = \ket{0}(\ket{01} + \ket{22}) \\ & + \ket{1}(\ket{00}+\ket{12})\end{aligned}$  & Null-cone &   Observation \ref{obs:epsilon_nu_null-cone}.   \\
2 &  $\begin{pmatrix}
\lambda   & \mu&  \\
   &  \lambda  & \mu \\
   &   & \lambda
\end{pmatrix}$ &  $\begin{aligned} \ket{\psi_2} & = \ket{0}(\ket{01} + \ket{12}) \\ & + \ket{1}(\ket{00}+\ket{11}+\ket{22})\end{aligned}$  & Null-cone &   Observation \ref{obs:diagonalreduction} and the corresponding diagonal matrix pencil (see Observation \ref{obs:diagonalreduction}) is separable in the splitting A|BC.    \\
3 &  $\begin{pmatrix}
\lambda   & \mu&  \\
   &  \lambda  &  \\
   &   & \lambda
\end{pmatrix}$ &  $\begin{aligned} \ket{\psi_3} & = \ket{0}\ket{01}  \\ & + \ket{1}(\ket{00}+\ket{11}+\ket{22})\end{aligned}$  & Null-cone &   \ditto \\
4 &  $\begin{pmatrix}
\lambda   & \mu&  \\
   &  \lambda  &  \\
   &   & \mu
\end{pmatrix}$ &  $\begin{aligned} \ket{\psi_4} & = \ket{0}(\ket{01} + \ket{22})  \\ & + \ket{1}(\ket{00}+\ket{11})\end{aligned}$  & Null-cone &   Observation \ref{obs:diagonalreduction} and Theorem \ref{thm:diagtype}.    \\

\end{tabular}
\end{table}
\endgroup

\begingroup
\begin{table}[H]
\centering
\begin{tabular}{p{0.3cm}p{3.7cm}p{5.5cm}p{1.5cm}p{4cm}}
5 &  $\begin{pmatrix}
\lambda   & &  \\
   &  \lambda  &  \\
   &   & \mu
\end{pmatrix}$ &  $\begin{aligned} \ket{\psi_5} & = \ket{0}\ket{22}  \\ & + \ket{1}(\ket{00}+\ket{11})\end{aligned}$  & Null-cone &   Theorem \ref{thm:diagtype}.    \\
6 &  $\begin{pmatrix}
\lambda   & &  \\
   &   \mu + \lambda  &  \\
   &   & \mu
\end{pmatrix}$ &  $\begin{aligned} \ket{\psi_6} & = \ket{0}(\ket{11} + \ket{22})  \\ & + \ket{1}(\ket{00}+\ket{11})\end{aligned}$  & Stable &   Theorem \ref{thm:diagtype}; Generic class \cite{HeGa18}.    \\
\end{tabular}
\end{table}
\endgroup

\subsection{\texorpdfstring{$2 \times 3 \times 4$}{2 x 3 x 4}}
The dimension of $\mathbb{P}(\mathcal{H})$ is 23, the maximal orbit dimension is 23.

\begingroup
\begin{table}[H]
\centering
\begin{tabular}{p{0.3cm}p{3.7cm}p{5.5cm}p{1.5cm}p{4cm}}
\multicolumn{1}{>{\arraybackslash}p{0.3cm}}{No.} 
    & \multicolumn{1}{>{\arraybackslash}p{3.7cm}}{Matrix Pencil} 
    & \multicolumn{1}{>{\arraybackslash}p{5.5cm}}{A representative}
    & \multicolumn{1}{>{\arraybackslash}p{1.5cm}}{Type}
    & \multicolumn{1}{>{\arraybackslash}p{4cm}}{Comments; Identified via}\\
 \hline
1 &  $\begin{pmatrix}
\lambda   & \mu&  & \\
   & \lambda   & \mu & \\
   &   & \lambda & \mu
\end{pmatrix}$ &  $\begin{aligned} \ket{\psi_1} & = \ket{0}(\ket{01} + \ket{12} + \ket{23}) \\ & + \ket{1}(\ket{00}+\ket{11}+\ket{22})\end{aligned}$  & Stable &   Shown in \cite{BrLe19}; Generic class.     \\

2 &  $\begin{pmatrix}
\lambda   & \mu&  & \\
   & \lambda   & \mu & \\
   &   &  & \lambda
\end{pmatrix}$ &  $\begin{aligned} \ket{\psi_2} & = \ket{0}(\ket{01} + \ket{12} ) \\ & + \ket{1}(\ket{00}+\ket{11}+\ket{23})\end{aligned}$  & Null-cone &    Appendix \ref{app:nullcone2mn}.     \\

3 &  $\begin{pmatrix}
\lambda   & \mu&  & \\
   &   & \lambda & \mu \\
   &   &  & \lambda
\end{pmatrix}$ &  $\begin{aligned} \ket{\psi_3} & = \ket{0}(\ket{01} + \ket{13} ) \\ & + \ket{1}(\ket{00}+\ket{12}+\ket{23})\end{aligned}$  & Null-cone &    \ditto    \\

4 &  $\begin{pmatrix}
\lambda   & \mu&  & \\
   &   & \lambda &  \\
   &   &  & \lambda
\end{pmatrix}$ &  $\begin{aligned} \ket{\psi_4} & = \ket{0}\ket{01}  \\ & + \ket{1}(\ket{00}+\ket{12}+\ket{23})\end{aligned}$  & Null-cone &   \ditto     \\

5 &  $\begin{pmatrix}
\lambda   & \mu&  & \\
   &   & \lambda &  \\
   &   &  & \mu
\end{pmatrix}$ &  $\begin{aligned} \ket{\psi_5} & = \ket{0}(\ket{01} + \ket{23} ) \\ & + \ket{1}(\ket{00}+\ket{12})\end{aligned}$  & Null-cone &    \ditto    \\
\end{tabular}
\end{table}
\endgroup

\subsection{\texorpdfstring{$2 \times 3 \times 5$}{2 x 3 x 5}}
\label{nolme}
The dimension of $\mathbb{P}(\mathcal{H})$ is 29, the maximal orbit dimension is 29.

\begingroup
\begin{table}[H]
\centering
\begin{tabular}{p{0.3cm}p{3.7cm}p{5.5cm}p{1.5cm}p{4cm}}
\multicolumn{1}{>{\arraybackslash}p{0.3cm}}{No.} 
    & \multicolumn{1}{>{\arraybackslash}p{3.7cm}}{Matrix Pencil} 
    & \multicolumn{1}{>{\arraybackslash}p{5.5cm}}{A representative}
    & \multicolumn{1}{>{\arraybackslash}p{1.5cm}}{Type}
    & \multicolumn{1}{>{\arraybackslash}p{4cm}}{Comments; Identified via}\\
 \hline
1 &  $\begin{pmatrix}
\lambda   & \mu&  & & \\
   & \lambda   & \mu & & \\
   &   &  & \lambda& \mu
\end{pmatrix}$ &  $\begin{aligned} \ket{\psi_1} & = \ket{0}(\ket{01} + \ket{12} + \ket{24}) \\ & + \ket{1}(\ket{00}+\ket{11}+\ket{23})\end{aligned}$  & Null-cone &   Appendix \ref{app:nullcone2mn};  Generic class \cite{HeGa18}.    \\

2 &  $\begin{pmatrix}
\lambda   & \mu&  & & \\
   &    & \lambda & \mu & \\
   &   &  & & \lambda
\end{pmatrix}$ &  $\begin{aligned} \ket{\psi_2} & = \ket{0}(\ket{01} + \ket{13} ) \\ & + \ket{1}(\ket{00}+\ket{12}+\ket{24})\end{aligned}$  & Null-cone &   Appendix \ref{app:nullcone2mn}.    \\
\end{tabular}
\end{table}
\endgroup

\subsection{\texorpdfstring{$2 \times 4 \times 4$}{2 x 4 x 4}}
\label{app:tables244}
The dimension of $\mathbb{P}(\mathcal{H})$ is 31, the maximal orbit dimension is 30.

\begingroup
\begin{table}[H]
\centering
\begin{tabular}{p{0.3cm}p{3.7cm}p{5.5cm}p{1.5cm}p{4cm}}
\multicolumn{1}{>{\arraybackslash}p{0.3cm}}{No.} 
    & \multicolumn{1}{>{\arraybackslash}p{3.7cm}}{Matrix Pencil} 
    & \multicolumn{1}{>{\arraybackslash}p{5.5cm}}{A representative}
    & \multicolumn{1}{>{\arraybackslash}p{1.5cm}}{Type}
    & \multicolumn{1}{>{\arraybackslash}p{4cm}}{Comments; Identified via}\\
 \hline
1 &  $\begin{pmatrix}
\lambda   & \mu  &   &   \\
   &  \lambda & \mu  &   \\
   &   &   & \lambda  \\
   &   &   &   \mu
\end{pmatrix}$ &  $\begin{aligned} \ket{\psi_1} & = \ket{0}(\ket{01}+\ket{12}+\ket{33}) \\ & + \ket{1}(\ket{00}+\ket{11}+\ket{23})\end{aligned}$  & Null-cone &  Observation \ref{obs:epsilon_nu_null-cone}.  \\

2 & $\begin{pmatrix}
\lambda   & \mu  &   &   \\
   &   & \lambda  &   \\
   &   &  \mu &   \\
   &   &   &   \lambda
\end{pmatrix}$ &$\begin{aligned} \ket{\psi_2} & = \ket{0}(\ket{01}+\ket{22}) \\ & + \ket{1}(\ket{00}+\ket{12}+\ket{33})\end{aligned}$ &Null-cone &   \ditto \\  
3 & $\begin{pmatrix}
 \lambda  & \mu  &   &   \\
   &   &  \lambda &   \\
   &   &  \mu & \lambda  \\
   &   &   &   \mu
\end{pmatrix}$ &$\begin{aligned}\ket{\psi_3} & = \ket{0}(\ket{01}+\ket{22}+\ket{33}) \\ & + \ket{1}(\ket{00}+\ket{12}+\ket{23})\end{aligned}$  & Null-cone &  \ditto  \\
\end{tabular}
\end{table}
\endgroup

\begingroup
\begin{table}[H]
\centering
\begin{tabular}{p{0.3cm}p{3.7cm}p{5.5cm}p{1.5cm}p{4cm}}

4 & $\begin{pmatrix}
\lambda   & \mu  &   &   \\
   & \lambda  & \mu  &   \\
   &   &  \lambda & \mu  \\
   &   &   &  \lambda
\end{pmatrix}$ & $\begin{aligned}\ket{\psi_4} & = \ket{0}(\ket{01}+\ket{12}+\ket{23}) \\ & + \ket{1}(\ket{00}+\ket{11}+\ket{22}+\ket{33})\end{aligned}$ & Null-cone &  Observation \ref{obs:diagonalreduction} and the corresponding diagonal matrix pencil (see Observation \ref{obs:diagonalreduction}) is separable in the splitting A|BC.   \\
5 & $\begin{pmatrix}
  \lambda & \mu  &   &   \\
   & \lambda  & \mu  &   \\
   &   &  \lambda &   \\
   &   &   &   \lambda
\end{pmatrix}$ & $\begin{aligned}\ket{\psi_5} & = \ket{0}(\ket{01}+\ket{12}) \\ & + \ket{1}(\ket{00}+\ket{11}+\ket{22}+\ket{33})\end{aligned}$ & Null-cone &  \ditto   \\
6 &$\begin{pmatrix}
  \lambda & \mu  &   &   \\
   &  \lambda &   &   \\
   &   &  \lambda & \mu  \\
   &   &   &  \lambda
\end{pmatrix}$  & $\begin{aligned}\ket{\psi_6} & = \ket{0}(\ket{01}+\ket{23}) \\ & + \ket{1}(\ket{00}+\ket{11}+\ket{22}+\ket{33})\end{aligned}$ & Null-cone &  \ditto  \\

7 & $\begin{pmatrix}
 \lambda  & \mu   &   &   \\
   &  \lambda &   &   \\
   &   & \lambda  &   \\
   &   &   &   \lambda
\end{pmatrix}$ &$\begin{aligned}\ket{\psi_7} & = \ket{0}\ket{01} \\ & + \ket{1}(\ket{00}+\ket{11}+\ket{22}+\ket{33})\end{aligned}$  & Null-cone &  \ditto   \\

8 & $\begin{pmatrix}
  \lambda & \mu  &   &   \\
   & \lambda  & \mu  &   \\
   &   &  \lambda &   \\
   &   &   &   \mu
\end{pmatrix}$ &$\begin{aligned} \ket{\psi_8} & = \ket{0}(\ket{01}+\ket{12}+\ket{33}) \\ & + \ket{1}(\ket{00}+\ket{11}+\ket{22})\end{aligned}$  & Null-cone & Observation \ref{obs:diagonalreduction} and Theorem \ref{thm:diagtype}. \\

9 & $\begin{pmatrix}
\lambda  & \mu  &   &   \\
   & \lambda  &   &   \\
   &   &  \lambda &   \\
   &   &   &   \mu
\end{pmatrix}$ & $\begin{aligned}\ket{\psi_9} & = \ket{0}(\ket{01}+\ket{33}) \\ & + \ket{1}(\ket{00}+\ket{11}+\ket{22})\end{aligned}$ & Null-cone &   \ditto   \\

10 & $\begin{pmatrix}
  \lambda &   &   &   \\
   & \lambda  &   &   \\
   &   &  \lambda &   \\
   &   &   &   \mu
\end{pmatrix}$ &$\begin{aligned}\ket{\psi_{10}} & = \ket{0}\ket{33} \\ & + \ket{1}(\ket{00}+\ket{11}+\ket{22})\end{aligned}$  &  Null-cone &  Theorem \ref{thm:diagtype}.  \\
11 & $\begin{pmatrix}
\lambda   & \mu   &   &   \\
   &  \lambda &   &   \\
   &   &  \mu & \lambda  \\
   &   &   &   \mu
\end{pmatrix}$ & $\begin{aligned}\ket{\psi_{11}} & = \ket{0}(\ket{01}+\ket{22}+\ket{33}) \\ & + \ket{1}(\ket{00}+\ket{11}+\ket{23})\end{aligned}$ & Strictly semistable &   Observation \ref{obs:diagonalreduction} and Theorem \ref{thm:diagtype}; Class 13 is in this class' closure.\\

12 & $\begin{pmatrix}
 \lambda  & \mu   &   &   \\
   &  \lambda &   &   \\
   &   &  \mu &   \\
   &   &   &   \mu
\end{pmatrix}$ & $\begin{aligned}\ket{\psi_{12}} & = \ket{0}(\ket{01}+\ket{22}+\ket{33}) \\ & + \ket{1}(\ket{00}+\ket{11})\end{aligned}$ & Strictly semistable & \ditto \\

13 & $\begin{pmatrix}
\lambda   &   &   &   \\
   &  \lambda &   &   \\
   &   & \mu  &   \\
   &   &   &  \mu
\end{pmatrix}$ & $\begin{aligned}\ket{\psi_{13}} & = \ket{0}(\ket{22}+\ket{33}) \\ & + \ket{1}(\ket{00}+\ket{11})\end{aligned}$ & Strictly polystable &  Theorem \ref{thm:diagtype}; Moreover, the given representative  is critical.    \\

14 & $\begin{pmatrix}
\lambda   & \mu   &   &   \\
   &  \lambda &   &   \\
   &   & \mu + \lambda  &   \\
   &   &   &   \mu
\end{pmatrix}$ &$\begin{aligned}\ket{\psi_{14}} & = \ket{0}(\ket{01}+\ket{22}+\ket{33}) \\ & + \ket{1}(\ket{00}+\ket{11}+\ket{22})\end{aligned}$  & Strictly semistable &  Observation \ref{obs:diagonalreduction} and Theorem \ref{thm:diagtype}; Class 13 is in this class' closure.  \\

15 & $\begin{pmatrix}
\lambda   &   &   &   \\
   &  \lambda &   &   \\
   &   &   \mu+ \lambda  &   \\
   &   &   &   \mu
\end{pmatrix}$ &$\begin{aligned}\ket{\psi_{15}} & = \ket{0}(\ket{22}+\ket{33}) \\ & + \ket{1}(\ket{00}+\ket{11}+\ket{22})\end{aligned}$  & Strictly semistable &    Theorem \ref{thm:diagtype}; Class 13 is in this class' closure.   \\
\end{tabular}
\end{table}
\endgroup

\begingroup
\begin{table}[H]
\centering
\begin{tabular}{p{0.3cm}p{3.7cm}p{5.5cm}p{1.5cm}p{4cm}}

16 & $\begin{pmatrix}
  \lambda &   &   &   \\
   & \mu+ \lambda   &   &   \\
   &   &  \mu &   \\
   &   &   &   x \mu + \lambda
\end{pmatrix}$ &$\begin{aligned}\ket{\psi_{16}} & = \ket{0}(\ket{11}+\ket{22}+x \ket{33}) \\ & + \ket{1}(\ket{00}+\ket{11}+\ket{33}),\end{aligned}$ \mbox{$x \neq 0, 1, \infty$}  & Stable & {Family of classes parametrized by free parameter $x$ fulfilling {$x \neq 0, 1, \infty$}; This family of SLOCC classes is the generic family, i.e., their union forms a full measure set of states \cite{HeGa18}; Stable due to Theorem \ref{thm:diagtype}.}
\end{tabular}
\end{table}
\endgroup

\subsection{\texorpdfstring{$2 \times 4 \times 5$}{2 x 4 x 5}}
The dimension of $\mathbb{P}(\mathcal{H})$ is 39, the maximal orbit dimension is 39.

\begingroup
\begin{table}[H]
\centering
\begin{tabular}{p{0.3cm}p{3.7cm}p{5.5cm}p{1.5cm}p{4cm}}
\multicolumn{1}{>{\arraybackslash}p{0.3cm}}{No.} 
    & \multicolumn{1}{>{\arraybackslash}p{3.7cm}}{Matrix Pencil} 
    & \multicolumn{1}{>{\arraybackslash}p{5.5cm}}{A representative}
    & \multicolumn{1}{>{\arraybackslash}p{1.5cm}}{Type}
    & \multicolumn{1}{>{\arraybackslash}p{4cm}}{Comments; Identified via}\\
 \hline
1 &  $\begin{pmatrix}
\lambda   & \mu  &   &   &\\
   &  \lambda & \mu  &   &\\
   &   & \lambda  & \mu  &\\
   &   &   &   \lambda & \mu
   \end{pmatrix}$ &  $\begin{aligned} \ket{\psi_1} & = \ket{0}(\ket{01}+\ket{12}+\ket{23}+\ket{34}) \\ & + \ket{1}(\ket{00}+\ket{11}+\ket{22}+\ket{33})\end{aligned}$  & Stable &  Shown in \cite{BrLe19}.  \\

2 &  $\begin{pmatrix}
\lambda   & \mu  &   &   &\\
   &    & \lambda  & \mu   &\\
   &   &   &   &\lambda \\
   &   &   &    & \mu
   \end{pmatrix}$ &  $\begin{aligned} \ket{\psi_2} & = \ket{0}(\ket{01}+\ket{13}+\ket{34}) \\ & + \ket{1}(\ket{00}+\ket{12}+\ket{24})\end{aligned}$  & Null-cone &  Observation \ref{obs:epsilon_nu_null-cone}.  \\

3 &  $\begin{pmatrix}
\lambda   & \mu  &   &   &\\
   &  \lambda  & \mu  &    &\\
   &   & \lambda  & \mu  & \\
   &   &   &    & \lambda
   \end{pmatrix}$ &  $\begin{aligned} \ket{\psi_3} & = \ket{0}(\ket{01}+\ket{12}+\ket{23}) \\ & + \ket{1}(\ket{00}+\ket{11}+\ket{22}+\ket{34})\end{aligned}$  & Null-cone &  Appendix \ref{app:nullcone2mn}.  \\

4 &  $\begin{pmatrix}
\lambda   & \mu  &   &   &\\
   &  \lambda  & \mu  &    &\\
   &   &   & \lambda  & \mu \\
   &   &   &    & \lambda
   \end{pmatrix}$ &  $\begin{aligned} \ket{\psi_4} & = \ket{0}(\ket{01}+\ket{12}+\ket{24}) \\ & + \ket{1}(\ket{00}+\ket{11}+\ket{23}+\ket{34})\end{aligned}$  & Null-cone &  \ditto  \\

5 &  $\begin{pmatrix}
\lambda   & \mu  &   &   &\\
   &  \lambda  & \mu  &    &\\
   &   &   & \lambda  &  \\
   &   &   &    & \lambda
   \end{pmatrix}$ &  $\begin{aligned} \ket{\psi_5} & = \ket{0}(\ket{01}+\ket{12}) \\ & + \ket{1}(\ket{00}+\ket{11}+\ket{23}+\ket{34})\end{aligned}$  & Null-cone &  \ditto  \\

6 &  $\begin{pmatrix}
\lambda   & \mu  &   &   &\\
   &  \lambda  & \mu  &    &\\
   &   &   & \lambda  &  \\
   &   &   &    & \mu
   \end{pmatrix}$ &  $\begin{aligned} \ket{\psi_6} & = \ket{0}(\ket{01}+\ket{12}+\ket{34}) \\ & + \ket{1}(\ket{00}+\ket{11}+\ket{23})\end{aligned}$  & Null-cone &  \ditto  \\

7 &  $\begin{pmatrix}
\lambda   & \mu  &   &   &\\
   &    & \lambda  & \mu   &\\
   &   &   & \lambda  & \mu  \\
   &   &   &    & \lambda
   \end{pmatrix}$ &  $\begin{aligned} \ket{\psi_7} & = \ket{0}(\ket{01}+\ket{23}+\ket{34}) \\ & + \ket{1}(\ket{00}+\ket{12}+\ket{23}+\ket{34})\end{aligned}$  & Null-cone &  \ditto  \\

8 &  $\begin{pmatrix}
\lambda   & \mu  &   &   &\\
   &    & \lambda  & \mu   &\\
   &   &   & \lambda  &   \\
   &   &   &    & \lambda
   \end{pmatrix}$ &  $\begin{aligned} \ket{\psi_8} & = \ket{0}(\ket{01}+\ket{23}) \\ & + \ket{1}(\ket{00}+\ket{12}+\ket{23}+\ket{34})\end{aligned}$  & Null-cone &  \ditto  \\

9 &  $\begin{pmatrix}
\lambda   & \mu  &   &   &\\
   &    & \lambda  &    &\\
   &   &   & \lambda  &   \\
   &   &   &    & \lambda
   \end{pmatrix}$ &  $\begin{aligned} \ket{\psi_9} & = \ket{0}\ket{01} \\ & + \ket{1}(\ket{00}+\ket{12}+\ket{23}+\ket{34})\end{aligned}$  & Null-cone &  \ditto  \\
10 &  $\begin{pmatrix}
\lambda   & \mu  &   &   &\\
   &    & \lambda  & \mu   &\\
   &   &   & \lambda  &   \\
   &   &   &    & \mu
   \end{pmatrix}$ &  $\begin{aligned} \ket{\psi_{10}} & = \ket{0}(\ket{01}+\ket{13}+\ket{34}) \\ & + \ket{1}(\ket{00}+\ket{12}+\ket{23})\end{aligned}$  & Null-cone &  \ditto  \\

\end{tabular}
\end{table}
\endgroup

\begingroup
\begin{table}[H]
\centering
\begin{tabular}{p{0.3cm}p{3.7cm}p{5.5cm}p{1.5cm}p{4cm}}
11 &  $\begin{pmatrix}
\lambda   & \mu  &   &   &\\
   &    & \lambda  &    &\\
   &   &   & \lambda  &   \\
   &   &   &    & \mu
   \end{pmatrix}$ &  $\begin{aligned} \ket{\psi_{11}} & = \ket{0}(\ket{01}+\ket{34}) \\ & + \ket{1}(\ket{00}+\ket{12}+\ket{23})\end{aligned}$  & Null-cone &  \ditto  \\

12 &  $\begin{pmatrix}
\lambda   & \mu  &   &   &\\
   &    & \lambda  &    &\\
   &   &   &   \mu+\lambda  &   \\
   &   &   &    & \mu
   \end{pmatrix}$ &  $\begin{aligned} \ket{\psi_{12}} & = \ket{0}(\ket{01}+\ket{23}+\ket{34}) \\ & + \ket{1}(\ket{00}+\ket{12}+\ket{23})\end{aligned}$  & Null-cone &  \ditto  \\
\end{tabular}
\end{table}
\endgroup

\subsection{\texorpdfstring{$2 \times 5 \times 5$}{2 x 5 x 5}}
\label{app:255}
The dimension of $\mathbb{P}(\mathcal{H})$ is 49, the maximal orbit dimension is 47.

\begingroup
\begin{table}[H]
\centering
\begin{tabular}{p{0.2cm}p{5.5cm}p{6.5cm}p{1.5cm}p{3cm}}
\multicolumn{1}{>{\arraybackslash}p{0.2cm}}{No.} 
    & \multicolumn{1}{>{\arraybackslash}p{5.5cm}}{Matrix Pencil} 
    & \multicolumn{1}{>{\arraybackslash}p{6.5cm}}{A representative}
    & \multicolumn{1}{>{\arraybackslash}p{1.5cm}}{Type}
    & \multicolumn{1}{>{\arraybackslash}p{3cm}}{Comments; Identified via}\\
 \hline
1 & $\begin{pmatrix}
\lambda   & \mu   &   &   &   \\
   &  \lambda & \mu  &   &   \\
   &   &  \lambda & \mu  &   \\
   &   &   &   &  \lambda \\
   &   &   &   &  \mu
\end{pmatrix}$ & $\begin{aligned}\ket{\psi_{1}} & = \ket{0}(\ket{01}+ \ket{12}+\ket{23}+\ket{44}) \\ & + \ket{1}(\ket{00}+\ket{11}+\ket{22}+\ket{34})\end{aligned}$ & Null-cone & Observation \ref{obs:epsilon_nu_null-cone}.  \\

2 & $\begin{pmatrix}
 \lambda  &\mu   &   &   &   \\
   &   & \lambda  &   &   \\
   &   &  \mu & \lambda  &   \\
   &   &   & \mu  & \lambda  \\
   &   &   &   &  \mu
\end{pmatrix}$ & $\begin{aligned}\ket{\psi_{2}} & = \ket{0}(\ket{01}+\ket{22}+\ket{33}+\ket{44} ) \\ & + \ket{1}(\ket{00}+\ket{12}+\ket{23}+\ket{34})\end{aligned}$ & Null-cone & \ditto \\

3 & $\begin{pmatrix}
\lambda   & \mu  &   &   &   \\
   &  \lambda & \mu  &   &   \\
   &   &   & \lambda  &   \\
   &   &   & \mu  & \lambda  \\
   &   &   &   &  \mu
\end{pmatrix}$ & $\begin{aligned}\ket{\psi_{3}} & = \ket{0}(\ket{01}+ \ket{12}+\ket{33}+\ket{44}) \\ & + \ket{1}(\ket{00}+\ket{11}+\ket{23}+\ket{34})\end{aligned}$ & Null-cone & \ditto \\

4 & $\begin{pmatrix}
 \lambda  & \mu  &   &   &   \\
   & \lambda  & \mu   &   &   \\
   &   &   & \lambda  &   \\
   &   &   & \mu  &   \\
   &   &   &   &  \lambda
\end{pmatrix}$ & $\begin{aligned}\ket{\psi_{4}} & = \ket{0}(\ket{01}+\ket{12}+\ket{33} ) \\ & + \ket{1}(\ket{00}+\ket{11}+\ket{23}+\ket{44})\end{aligned}$ & Null-cone & \ditto \\

5 & $\begin{pmatrix}
 \lambda  & \mu  &   &   &   \\
   &   &  \lambda &   &   \\
   &   &  \mu & \lambda   &   \\
   &   &   &  \mu &   \\
   &   &   &   &  \lambda
\end{pmatrix}$ & $\begin{aligned}\ket{\psi_{5}} & = \ket{0}(\ket{01}+\ket{22}+\ket{33} ) \\ & + \ket{1}(\ket{00}+\ket{12}+\ket{23}+\ket{44})\end{aligned}$ & Null-cone & \ditto  \\

6 & $\begin{pmatrix}
\lambda   & \mu   &   &   &   \\
   &   &  \lambda &   &   \\
   &   &  \mu &   &   \\
   &   &   & \lambda  &  \mu \\
   &   &   &   &  \lambda
\end{pmatrix}$ & $\begin{aligned}\ket{\psi_{6}} & =\ket{0}(\ket{01}+ \ket{22} + \ket{34}) \\ & + \ket{1}(\ket{00}+\ket{12}+\ket{33}+ \ket{44})\end{aligned}$ &Null-cone & \ditto \\
7 & $\begin{pmatrix}
\lambda   & \mu   &   &   &   \\
   &   &  \lambda &   &   \\
   &   &  \mu &   &   \\
   &   &   &  \lambda &   \\
   &   &   &   &  \lambda
\end{pmatrix}$ & $\begin{aligned}\ket{\psi_{7}} & = \ket{0}(\ket{01}+ \ket{22}) \\ & + \ket{1}(\ket{00}+\ket{12}+\ket{33}+ \ket{44})\end{aligned}$ & Null-cone & \ditto  \\
8 & $\begin{pmatrix}
\lambda   & \mu   &   &   &   \\
   &   &  \lambda &   &   \\
   &   &  \mu &   &   \\
   &   &   &  \lambda &   \\
   &   &   &   &  \mu
\end{pmatrix}$ & $\begin{aligned}\ket{\psi_{8}} & = \ket{0}(\ket{01}+ \ket{22} + \ket{44}) \\ & + \ket{1}(\ket{00}+\ket{12}+\ket{33})\end{aligned}$ & Null-cone & \ditto \\

\end{tabular}
\end{table}
\endgroup

\begingroup
\begin{table}[H]
\centering
\begin{tabular}{p{0.2cm}p{5.5cm}p{6.5cm}p{1.5cm}p{3cm}}

% 1 different evals
\hline
9 & $\begin{pmatrix}
 \lambda  & \mu  &   &   &   \\
   & \lambda  & \mu  &   &   \\
   &   &  \lambda &  \mu &   \\
   &   &   &  \lambda &  \mu \\
   &   &   &   & \lambda
\end{pmatrix}$ & $\begin{aligned}\ket{\psi_{9}} & = \ket{0}(\ket{01}+ \ket{12}+ \ket{23}+ \ket{34}) \\ & + \ket{1}(\ket{00}+\ket{11}+\ket{22}+\ket{33}+\ket{44})\end{aligned}$  & Null-cone & Observation \ref{obs:diagonalreduction} and the corresponding diagonal matrix pencil (see Observation \ref{obs:diagonalreduction}) is separable in the splitting A|BC. \\

10 & $\begin{pmatrix}
 \lambda  & \mu  &   &   &   \\
   & \lambda  &  \mu &   &   \\
   &   &  \lambda &  \mu &   \\
   &   &   &  \lambda &   \\
   &   &   &   & \lambda
\end{pmatrix}$ & $\begin{aligned}\ket{\psi_{10}} & = \ket{0}(\ket{01}+ \ket{12}+ \ket{23}) \\ & + \ket{1}(\ket{00}+\ket{11}+\ket{22}+\ket{33}+\ket{44})\end{aligned}$ & Null-cone & \ditto   \\

11 & $\begin{pmatrix}
 \lambda  & \mu  &   &   &   \\
   & \lambda  & \mu  &   &   \\
   &   &  \lambda &   &   \\
   &   &   &  \lambda & \mu  \\
   &   &   &   & \lambda
\end{pmatrix}$ & $\begin{aligned}\ket{\psi_{11}} & = \ket{0}(\ket{01}+ \ket{12}+ \ket{34}) \\ & + \ket{1}(\ket{00}+\ket{11}+\ket{22}+\ket{33}+\ket{44})\end{aligned}$ & Null-cone & \ditto \\

12 & $\begin{pmatrix}
 \lambda  & \mu  &   &   &   \\
   & \lambda  & \mu  &   &   \\
   &   &  \lambda &   &   \\
   &   &   &  \lambda &   \\
   &   &   &   & \lambda
\end{pmatrix}$ & $\begin{aligned}\ket{\psi_{12}} & = \ket{0}(\ket{01} + \ket{12}) \\ & + \ket{1}(\ket{00}+\ket{11}+\ket{22}+\ket{33}+\ket{44})\end{aligned}$ & Null-cone & \ditto \\

13 & $\begin{pmatrix}
 \lambda  & \mu  &   &   &   \\
   & \lambda  &   &   &   \\
   &   &  \lambda & \mu   &   \\
   &   &   &  \lambda &   \\
   &   &   &   & \lambda
\end{pmatrix}$ & $\begin{aligned}\ket{\psi_{13}} & = \ket{0}(\ket{01} + \ket{23}) \\ & + \ket{1}(\ket{00}+\ket{11}+\ket{22}+\ket{33}+\ket{44})\end{aligned}$ & Null-cone & \ditto \\

14 & $\begin{pmatrix}
 \lambda  & \mu  &   &   &   \\
   & \lambda  &   &   &   \\
   &   &  \lambda &   &   \\
   &   &   &  \lambda &   \\
   &   &   &   & \lambda
\end{pmatrix}$ & $\begin{aligned}\ket{\psi_{14}} & = \ket{0}\ket{01} \\ & + \ket{1}(\ket{00}+\ket{11}+\ket{22}+\ket{33}+\ket{44})\end{aligned}$ & Null-cone & \ditto \\

% 2 different evals
\hline
15 & $\begin{pmatrix}
\lambda   & \mu  &   &   &   \\
   &  \lambda &  \mu &   &   \\
   &   &  \lambda & \mu  &   \\
   &   &   &  \lambda &   \\
   &   &   &   &  \mu
\end{pmatrix}$ & $\begin{aligned}\ket{\psi_{15}} & = \ket{0}(\ket{01} + \ket{12} + \ket{23} + \ket{44} ) \\ & + \ket{1}(\ket{00}+\ket{11}+\ket{22}+\ket{33})\end{aligned}$ & Null-cone & Observation \ref{obs:diagonalreduction} and Theorem \ref{thm:diagtype}.  \\

16 & $\begin{pmatrix}
\lambda   &  \mu &   &   &   \\
   &  \lambda & \mu  &   &   \\
   &   &  \lambda &   &   \\
   &   &   &  \lambda &   \\
   &   &   &   &  \mu
\end{pmatrix}$ & $\begin{aligned}\ket{\psi_{16}} & = \ket{0}(\ket{01} +\ket{12} +  \ket{44} ) \\ & + \ket{1}(\ket{00}+\ket{11}+\ket{22}+\ket{33})\end{aligned}$ & Null-cone & \ditto  \\

17 & $\begin{pmatrix}
\lambda   & \mu  &   &   &   \\
   &  \lambda &   &   &   \\
   &   &  \lambda & \mu  &   \\
   &   &   &  \lambda &   \\
   &   &   &   &  \mu
\end{pmatrix}$ & $\begin{aligned}\ket{\psi_{17}} & = \ket{0}(\ket{01} + \ket{23} + \ket{44} ) \\ & + \ket{1}(\ket{00}+\ket{11}+\ket{22}+\ket{33})\end{aligned}$ & Null-cone & \ditto  \\

18 & $\begin{pmatrix}
\lambda   & \mu  &   &   &   \\
   &  \lambda &   &   &   \\
   &   &  \lambda &   &   \\
   &   &   &  \lambda &   \\
   &   &   &   &  \mu
\end{pmatrix}$ & $\begin{aligned}\ket{\psi_{18}} & = \ket{0}(\ket{01} + \ket{44} ) \\ & + \ket{1}(\ket{00}+\ket{11}+\ket{22}+\ket{33})\end{aligned}$ & Null-cone & \ditto  \\

19 & $\begin{pmatrix}
\lambda   &   &   &   &   \\
   &  \lambda &   &   &   \\
   &   &  \lambda &   &   \\
   &   &   &  \lambda &   \\
   &   &   &   &  \mu
\end{pmatrix}$ & $\begin{aligned}\ket{\psi_{19}} & = \ket{0}\ket{44}  \\ & + \ket{1}(\ket{00}+\ket{11}+\ket{22}+\ket{33})\end{aligned}$ & Null-cone & Theorem \ref{thm:diagtype}.  \\

\end{tabular}
\end{table}
\endgroup

\begingroup
\begin{table}[H]
\centering
\begin{tabular}{p{0.2cm}p{5.5cm}p{6.5cm}p{1.5cm}p{3cm}}
20 & $\begin{pmatrix}
 \lambda  & \mu  &   &   &   \\
   &  \lambda &  \mu &   &   \\
   &   &  \lambda &   &   \\
   &   &   &  \mu &  \lambda \\
   &   &   &   &  \mu
\end{pmatrix}$ & $\begin{aligned}\ket{\psi_{20}} & = \ket{0}(\ket{01} +\ket{12} +\ket{33}+\ket{44} ) \\ & + \ket{1}(\ket{00}+\ket{11}+ \ket{22}+ \ket{34})\end{aligned}$ & Null-cone & Observation \ref{obs:diagonalreduction} and Theorem \ref{thm:diagtype}. \\

21 & $\begin{pmatrix}
 \lambda  &  \mu &   &   &   \\
   &  \lambda &   &   &   \\
   &   &  \lambda &   &   \\
   &   &   &  \mu & \lambda  \\
   &   &   &   &  \mu
\end{pmatrix}$ & $\begin{aligned}\ket{\psi_{21}} & = \ket{0}(\ket{01} +\ket{33}+\ket{44} ) \\ & + \ket{1}(\ket{00}+\ket{11}+ \ket{22}+ \ket{34})\end{aligned}$ & Null-cone & \ditto  \\

22 & $\begin{pmatrix}
 \lambda  & \mu  &   &   &   \\
   &  \lambda & \mu  &   &   \\
   &   &  \lambda &   &   \\
   &   &   &  \mu &   \\
   &   &   &   &  \mu
\end{pmatrix}$ & $\begin{aligned}\ket{\psi_{22}} & = \ket{0}(\ket{01} +\ket{12} +\ket{33}+\ket{44} ) \\ & + \ket{1}(\ket{00}+\ket{11}+ \ket{22})\end{aligned}$ & Null-cone & \ditto \\

23 & $\begin{pmatrix}
 \lambda  &  \mu &   &   &   \\
   &  \lambda &   &   &   \\
   &   &  \lambda &   &   \\
   &   &   &  \mu &   \\
   &   &   &   &  \mu
\end{pmatrix}$ & $\begin{aligned}\ket{\psi_{23}} & = \ket{0}(\ket{01} + \ket{33}+\ket{44} ) \\ & + \ket{1}(\ket{00}+\ket{11}+ \ket{22})\end{aligned}$ & Null-cone & \ditto \\

24 & $\begin{pmatrix}
 \lambda  &   &   &   &   \\
   &  \lambda &   &   &   \\
   &   &  \lambda &   &   \\
   &   &   &  \mu & \lambda  \\
   &   &   &   &  \mu
\end{pmatrix}$ & $\begin{aligned}\ket{\psi_{24}} & = \ket{0}(\ket{33}+\ket{44} ) \\ & + \ket{1}(\ket{00}+\ket{11}+ \ket{22}+ \ket{34})\end{aligned}$ & Null-cone & \ditto \\

25 & $\begin{pmatrix}
 \lambda  &   &   &   &   \\
   &  \lambda &   &   &   \\
   &   &  \lambda &   &   \\
   &   &   &  \mu &   \\
   &   &   &   &  \mu
\end{pmatrix}$ & $\begin{aligned}\ket{\psi_{25}} & = \ket{0}(\ket{33}+\ket{44} ) \\ & + \ket{1}(\ket{00}+\ket{11}+ \ket{22})\end{aligned}$ & Null-cone & Theorem \ref{thm:diagtype}. \\

%3 different evals
\hline
26 & $\begin{pmatrix}
 \lambda  & \mu  &   &   &   \\
   &  \lambda &  \mu &   &   \\
   &   &  \lambda &   &   \\
   &   &   & \mu + \lambda  &   \\
   &   &   &   &  \mu
\end{pmatrix}$ & $\begin{aligned}\ket{\psi_{26}} & = \ket{0}(\ket{01}+\ket{12}+\ket{33}+ \ket{44}) \\ & + \ket{1}(\ket{00}+\ket{11}+ \ket{22}+ \ket{33})\end{aligned}$ & Null-cone & Observation \ref{obs:diagonalreduction} and Theorem \ref{thm:diagtype}. \\

27 & $\begin{pmatrix}
 \lambda  & \mu   &   &   &   \\
   &  \lambda &   &   &   \\
   &   &  \lambda &   &   \\
   &   &   & \mu + \lambda  &   \\
   &   &   &   &  \mu
\end{pmatrix}$ & $\begin{aligned}\ket{\psi_{27}} & = \ket{0}(\ket{01}+\ket{33}+ \ket{44}) \\ & + \ket{1}(\ket{00}+\ket{11}+ \ket{22}+ \ket{33})\end{aligned}$ & Null-cone &  \ditto  \\

28 & $\begin{pmatrix}
 \lambda  &   &   &   &   \\
   &  \lambda &   &   &   \\
   &   &  \lambda &   &   \\
   &   &   & \mu + \lambda  &   \\
   &   &   &   &  \mu
\end{pmatrix}$ & $\begin{aligned}\ket{\psi_{28}} & = \ket{0}(\ket{33}+ \ket{44}) \\ & + \ket{1}(\ket{00}+\ket{11}+ \ket{22}+ \ket{33})\end{aligned}$ & Null-cone & Theorem \ref{thm:diagtype}. \\

29 & $\begin{pmatrix}
 \lambda  & \mu  &   &   &   \\
   &  \lambda &   &   &   \\
   &   &  \mu+\lambda  & \mu  &   \\
   &   &   &  \mu+\lambda &   \\
   &   &   &   &  \mu
\end{pmatrix}$ & $\begin{aligned}\ket{\psi_{29}} & =  \ket{0}(\ket{01}+ \ket{22}+\ket{23}+  \ket{33}+  \ket{44}) \\ & + \ket{1}(\ket{00}+\ket{11}+ \ket{22}+ \ket{33})\end{aligned}$ & Strictly semistable &  Observation \ref{obs:diagonalreduction} and Theorem \ref{thm:diagtype}; Moreover, $\ket{\psi_{29}}$ can be transformed to $\ket{\tilde{\psi}_{31}}$ asymptotically.\\

30 & $\begin{pmatrix}
 \lambda  &  \mu &   &   &   \\
   &  \lambda &   &   &   \\
   &   & \mu+\lambda  &   &   \\
   &   &   &  \mu +\lambda &   \\
   &   &   &   &  \mu
\end{pmatrix}$ & $\begin{aligned}\ket{\psi_{30}} & =  \ket{0}(\ket{01}+\ket{22}+ \ket{33}+ \ket{44}) \\ & + \ket{1}(\ket{00}+\ket{11}+ \ket{22}+ \ket{33})\end{aligned}$ & Strictly semistable &  \ditto  \\

\end{tabular}
\end{table}
\endgroup

\begingroup
\begin{table}[H]
\centering
\begin{tabular}{p{0.2cm}p{5.5cm}p{6.5cm}p{1.5cm}p{3cm}}
31 & $\begin{pmatrix}
 \lambda  &   &   &   &   \\
   &  \lambda &   &   &   \\
   &   & \mu+ \lambda  &   &   \\
   &   &   &  \mu + \lambda &   \\
   &   &   &   &  \mu
\end{pmatrix}$ & $\begin{aligned}\ket{\psi_{31}} & = \ket{0}(\ket{22}+ \ket{33}+ \ket{44}) \\ & + \ket{1}(\ket{00}+\ket{11}+ \ket{22}+ \ket{33})\end{aligned}$  $\begin{aligned} \ket{\tilde{\psi}_{31}}  & = \ket{0}(e^{i \phi}\ket{00}+ e^{i \phi} \ket{11}+ e^{-i \phi} \ket{22} \\ & \quad+e^{-i \phi} \ket{33}- \ket{44}) \\ & + \ket{1}(\ket{00}+\ket{11}+ \ket{22}+ \ket{33}+ \ket{44}),\end{aligned}$ where $\phi = -\frac{1}{2} \arccos{-\frac{7}{8}} $& Strictly polystable & Theorem \ref{thm:diagtype}; Moreover, $\ket{\tilde{\psi}_{31}}$ is critical and in the same SLOCC class.\\

%4 different evals
\hline
32 & $\begin{pmatrix}
 \lambda  &  \mu &   &   &   \\
   &  \lambda  &   &   &   \\
   &   &  \mu + \lambda &   &   \\
   &   &   &  \mu  &   \\
   &   &   &   & x \mu + \lambda
\end{pmatrix}$ & $\begin{aligned}\ket{\psi_{32}} & =  \ket{0}(\ket{01} + \ket{22}+ \ket{33}+ x \ket{44}) \\ & +  \ket{1}(\ket{00}+ \ket{11}+ \ket{22}+ \ket{44}), \end{aligned}$ \newline where $x\neq0,1,\infty$ & {Strictly semistable} & Observation \ref{obs:diagonalreduction} and Theorem \ref{thm:diagtype}; This class' closure contains one class from family 33, the one with the same $x$.\\

33 & $\begin{pmatrix}
 \lambda  &   &   &   &   \\
   &  \lambda  &   &   &   \\
   &   &  \mu + \lambda &   &   \\
   &   &   &  \mu  &   \\
   &   &   &   & x \mu + \lambda
\end{pmatrix}$ & $\begin{aligned}\ket{\psi_{33}} & = \ket{0}(\ket{22}+ \ket{33}+ x \ket{44} \\ & +\ket{1}(\ket{00}+ \ket{11}+ \ket{22}+ \ket{44})),\end{aligned}$\newline where $x\neq0,1,\infty$ &{Strictly polystable} & Theorem \ref{thm:diagtype}.  \\

%5 different evals
\hline
34 & $\begin{pmatrix}
 \lambda  &   &   &   &   \\
   &  \mu + \lambda &   &   &   \\
   &   &  \mu &   &   \\
   &   &   &  x_1 \mu + \lambda &   \\
   &   &   &   & x_2 \mu + \lambda
\end{pmatrix}$ & $\begin{aligned}\ket{\psi_{34}} & = \ket{0}(\ket{11}+\ket{22}+ x_1 \ket{33}+ x_2 \ket{44}) \\ & + \ket{1}(\ket{00}+ \ket{11}+ \ket{33}+ \ket{44}), \end{aligned}$ \newline where $\{0,1,\infty,x_1,x_2\} $ has a cardinality of five & Stable & Family of classes parametrized by free parameters $x_1$ and $x_2$ such that the set $\{0, 1, \infty, x_1, x_2\}$ has a cardinality of five; This family of SLOCC classes is the generic family, i.e., their union forms a full measure set of states \cite{HeGa18}; Stable due to Theorem \ref{thm:diagtype}. \\
\end{tabular}
\end{table}
\endgroup

\end{document}